\documentclass[12pt]{article}
\usepackage{amsmath}
\usepackage{prodint}
\usepackage{graphicx}
\usepackage{enumerate}
\usepackage{natbib}
\usepackage{url} 
\usepackage{binomexp}
\usepackage{mathtools}
\usepackage{amsfonts}
\usepackage{mathrsfs}
\RequirePackage{subcaption}
\usepackage{booktabs}
\usepackage{bbm}
\usepackage{dsfont}
\usepackage{longtable}
\usepackage{tabulary}
\usepackage{yhmath}
\usepackage{bm}
\usepackage{amssymb}
\usepackage[nodisplayskipstretch]{setspace}
\usepackage{tikz}
\usetikzlibrary{arrows.meta, decorations.pathmorphing, decorations.pathreplacing, positioning, calc, shadows.blur}
\usepackage{float}
\usepackage{color}
\usepackage{stmaryrd}
\usepackage{paralist}
\usepackage{multirow}
\usepackage{threeparttable}
\usepackage{soul}
\usepackage{algorithm,algpseudocode}
\makeatletter
\newcommand{\algmargin}{\the\ALG@thistlm}
\makeatother
\newlength{\whilewidth}
\algdef{SE}[parWHILE]{parWhile}{EndparWhile}[1]
  {\parbox[t]{\dimexpr\linewidth-\algmargin}{%
     \hangindent\whilewidth\strut\algorithmicwhile\ #1\ \algorithmicdo\strut}}{\algorithmicend\ \algorithmicwhile}%
\algnewcommand{\parState}[1]{\State%
  \parbox[t]{\dimexpr\linewidth-\algmargin}{\strut #1\strut}}

\newcommand{\Var}{{\mathbb{V}\mathrm{ar}}}
\newcommand{\Cov}{{\mathbb{C}\mathrm{ov}}}

\newcommand{\bbE}{\mathbb{E}}

\newcommand{\bbP}{\mathbb{P}}

\newcommand{\bbR}{\mathbb{R}}

\newcommand{\bbone}{\mathds{1}}

\newcommand{\qed}{\hfill \mbox{\raggedright \rule{.07in}{.1in}}}
\usepackage[colorlinks=true,linkcolor=blue,citecolor=black]{hyperref}

\usepackage{amsthm}
\theoremstyle{plain}
\newtheorem{theorem}{Theorem}
\newtheorem{lemma}{Lemma}
\newtheorem{proposition}{Proposition}
\newtheorem{corollary}{Corollary}
\theoremstyle{definition}
\newtheorem{definition}{Definition}
\newtheorem{assumption}{Assumption}

\newcommand{\blind}{1}

\allowdisplaybreaks

\usetikzlibrary{shapes,decorations,arrows,calc,arrows.meta,fit,positioning}
\tikzset{
     -Latex,auto,node distance =1 cm and 1 cm,semithick,
     state/.style ={ellipse, draw, minimum width = 0.7 cm},
     point/.style = {circle, draw, inner sep=0.04cm,fill,node contents={}},
    bidirected/.style={Latex-Latex,dashed},
     el/.style = {inner sep=2pt, align=left, sloped}
 }

\definecolor{vizblue}{HTML}{2A78D6}
\definecolor{vizorange}{HTML}{EB6834}
\definecolor{vizink}{HTML}{0B0B0B}
\definecolor{vizsec}{HTML}{52514E}
\definecolor{vizmuted}{HTML}{898781}
\definecolor{vizgrid}{HTML}{E1E0D9}
\definecolor{vizaxis}{HTML}{C3C2B7}
\definecolor{vizcell}{HTML}{E4E3DD}
\definecolor{vizband}{HTML}{EAF2FC}

\renewcommand{\Var}{\operatorname{Var}}
\renewcommand{\Cov}{\operatorname{Cov}}
\newcommand{\cA}{\mathcal{A}}
\newcommand{\cB}{\mathcal{B}}
\newcommand{\cC}{\mathcal{C}}
\newcommand{\cE}{\mathcal{E}}
\newcommand{\cN}{\mathcal{N}}
\newcommand{\cS}{\mathcal{S}}
\newcommand{\cT}{\mathcal{T}}
\newcommand{\cV}{\mathcal{V}}
\newcommand{\cW}{\mathcal{W}}
\newcommand{\lpi}{\underline{\pi}}
\newcommand{\lv}{\underline{v}}
\newcommand{\htau}{\widehat{\tau}}
\newcommand{\htheta}{\widehat{\theta}}
\newcommand{\hvartheta}{\widehat{\vartheta}}
\newcommand{\hmu}{\widehat{\mu}}
\newcommand{\hV}{\widehat{V}}
\theoremstyle{plain}
\newtheorem{propositionS}{Proposition}

\usepackage{tocloft}
\begin{document}

\def\spacingset#1{\renewcommand{\baselinestretch}%
{#1}\small\normalsize} \spacingset{1}

\addtocontents{toc}{\protect\setcounter{tocdepth}{-1}}


\if1\blind
{
  \title{\Large \bf Design-based effect sizes and their synthesis in randomized multiple-baseline experiments}
  \author{Xinyuan Chen\vspace{0.2cm}\\
    Department of Mathematics and Statistics, \\
    Mississippi State University, Mississippi State, MS, USA \\
    xchen@math.msstate.edu
    \and
    Zhigao Liang\vspace{0.2cm}\\
    Department of Teacher Education and Leadership, \\
    Mississippi State University, Mississippi State, MS, USA}
  \maketitle
} \fi

\if0\blind
{
  \bigskip
  \bigskip
  \bigskip
  \begin{center}
    {\Large \bf Design-Based Effect Sizes and Their Synthesis in Randomized Multiple-Baseline Experiments}
\end{center}
  \medskip
} \fi

\begin{abstract}
    Effect sizes for single-case experimental designs usually rely on models for the baseline trend and the serial dependence, which are challenging to check with few sessions. We develop design-based effect sizes for multiple-baseline experiments with a randomized intervention onset. We define case-level and pooled standardized effects with potential outcomes and propose weighted estimators that are unbiased or asymptotically unbiased without an outcome model. The usual difference between phase means is generally biased when the untreated outcome changes over sessions. For research synthesis, we provide variance estimators and asymptotic confidence intervals that require no model for serial dependence, and variance formulas for planning. Simulations and three randomized studies of reading and writing interventions illustrate the methods.
\end{abstract}

\noindent%
{\it Keywords:} design-based inference; effect size; meta-analysis; multiple-baseline design; single-case experimental design.

\spacingset{1.75} 

\section{Introduction}\label{sec:intro}

In single-case experimental designs, we evaluate the effect of an intervention by measuring a few cases repeatedly. Each case begins in a baseline phase (untreated), the intervention starts at a certain session (the onset), and it continues until the study ends. The multiple-baseline design, in which the intervention starts at different sessions for different cases, is common among them \citep{ShadishSullivan2011}. Its staggered onsets show the effect within and across series \citep{Horner2005}, and because the intervention is never withdrawn, it suits skills that do not return to baseline once learned, such as reading and writing, and classrooms and clinics where withdrawing an effective intervention is undesirable. In education and special education, single-case studies are an accepted source of evidence on the effect of an intervention, particularly for low-incidence problems \citep{Kratochwill2013, Cook2015}. \citet{Horner2005} proposed that a practice be considered evidence-based when at least five acceptable single-case studies, by at least three researchers in three locations and with at least 20 participants in total, document its effect, and the standards of the What Works Clearinghouse \citep{Kratochwill2013} and the Council for Exceptional Children \citep{Cook2015} likewise draw on single-case studies to identify evidence-based practices. Because a single study includes only a few cases, researchers often use a research synthesis, or meta-analysis, to combine results from several studies into a single summary of the effect.

A synthesis requires an effect size for each case or study, where the measures in use compare the intervention phase with the baseline phase. Examples are the nonoverlap of all pairs \citep{ParkerVannest2009}, the log response ratio \citep{Pustejovsky2018}, the standardized mean differences of \citet{Hedges2012, Hedges2013} and \citet{Pustejovsky2014}, and the coefficients of multilevel models \citep{VandenNoortgate2008, ChenPustejovsky2024}. These methods, however, are largely model-based, i.e., their validity depends on whether the model correctly specifies the series. Specifically, the baseline needs to be stable, or the model needs to specify its trend correctly, as well as its serial dependence, to obtain the standard errors of the effect. In practice, a baseline phase of five to ten sessions in these studies usually gives little information for checking such models. \citet{Joo2022} compared eight methods for estimating effects from multiple-baseline studies by simulation, and the bias was substantial when their assumptions were violated.

Randomization offers a basis for inference that requires no model for the series. It is recommended in the methodology literature to draw the session at which the intervention starts at random \citep{EdgingtonOnghena2007, KratochwillLevin2010, Tanious2024}, and several randomization schemes exist for multiple-baseline designs \citep{MarascuiloBusk1988, WampoldWorsham1986, KoehlerLevin1998, LevinFerron2021}. Randomization is not uncommon in applied single-case studies \citep{TaniousOnghena2021}, and its use has increased steadily, although randomizing the onset remains the least frequent type of randomization \citep{Kittelman2025}. So far, work on randomization has focused on testing, as randomization tests address the sharp null hypothesis that the intervention has no effect in any case at any session. Inverting them gives confidence intervals under a constant additive effect \citep{Michiels2017}, and \citet{Chen2026} tests bounded and quantile null hypotheses on session-level effects. In comparison, estimation has received less attention. \citet{Ferron2014} defined the effect in a multiple-baseline (MB) study using potential outcomes and, by simulation, studied an estimator that contrasts cases in the intervention phase with cases still in baseline at the same time. They observed that ``although the role of randomization in statistical testing is well established in the MB literature, the role of randomization in the estimation of causal effects from MB studies is not'' (p. 495).

This paper derives effect sizes and their synthesis from a design-based perspective, where the intervention onset is randomized. The potential outcomes of every case are viewed as fixed, and the onset is the sole source of randomness, as in design-based analyses of staggered adoption \citep{AtheyImbens2022, RothSantAnna2023} and of time series and panel experiments \citep{BojinovShephard2019, Bojinov2021}. We obtain four main results. First, weighted comparisons of treated and untreated sessions are unbiased for a case-level effect for every number of cases and sessions, and they remain unbiased after division by a scale computed from the sessions before the first admissible onset. Second, the expectation of the usual phase contrast is a weighted average of session-level effects plus a drift term, and randomization does not remove the drift. Third, a pooled effect across cases and studies has a variance estimator that requires no model for serial dependence, and its confidence interval attains at least the nominal coverage as the number of independently randomized cases or studies grows. Fourth, a planning formula gives the precision of the pooled effect from the number of admissible onsets and the numbers of sessions before and after them. Proofs and additional results are given in the Supplemental Material.

\section{Setup}

\subsection{Cases, Sessions and Onsets}

Suppose there are $N$ cases, which may come from several studies. Case $i=1,\ldots,N$ is observed at sessions $t=1,\ldots,T_i$. The intervention for case $i$ starts at session $S_i$ (its onset) and remains in place in every later session. The onset is drawn from a set $\cS_i$ of at least two admissible sessions, with $p_i(s)=\bbP(S_i=s)>0$ for $s\in\cS_i$. The cases are partitioned into blocks $\cC_1,\ldots,\cC_G$, and the vectors of onsets of different blocks are mutually independent. Notably, each case is a block and $G=N$ when every onset is drawn separately \citep{MarascuiloBusk1988}, and the study becomes a block when a study assigns a fixed list of onsets to its cases by a random permutation \citep{WampoldWorsham1986}. We operate in a design-based framework, where the cases and their potential outcomes are viewed as fixed, and the sole source of randomness is the draw of the onsets. All probabilities, expectations, and variances hereafter refer to this draw unless specified otherwise.

Let $a_i=\min\cS_i-1$ and $b_i=\max\cS_i$, with $a_i\ge1$. The sessions of case $i$ form three sets, i.e.,
\begin{equation*}
    \cA_i=\{1,\ldots,a_i\},\qquad\cW_i=\{a_i+1,\ldots,b_i-1\},\qquad\cB_i=\{b_i,\ldots,T_i\}.
\end{equation*}
The pre-window sessions $\cA_i$ come before every admissible onset and thus are untreated; whereas the post-window sessions $\cB_i$ are treated under every admissible onset. At a window session in $\cW_i$, the case is treated under some onsets and untreated under others. There are $m_i=b_i-a_i-1\ge1$ window sessions and $n_i=T_i-b_i+1\ge1$ post-window sessions. Let $W_{it}(s)=\bbone(t\ge s)$ denote that session $t$ is treated under onset $s$, and let
\begin{equation*}
    W_{it}=W_{it}(S_i),\qquad\pi_{it}=\bbP(S_i\le t).
\end{equation*}
The probability $\pi_{it}$ is strictly between 0 and 1 for window sessions, equals 0 for pre-window sessions, and equals 1 for post-window sessions. Figure \ref{fig:window} illustrates the three sets and the probabilities $\pi_{it}$ for a case with eight sessions and four equally likely onsets.

\begin{figure}[t]
    \caption{Admissible Onsets, Treatment Probabilities and Outcomes of a Case With Eight Sessions and Onsets Drawn From Sessions 3 to 6}
    \label{fig:window}
    \centering
    \begin{tikzpicture}[-,thin,x=1cm,y=1cm,font=\linespread{1}\selectfont\sffamily\footnotesize,line cap=round,line join=round]
        \fill[vizband,rounded corners=3pt] (2.4,-7.25) rectangle (6,0.16);
        \draw[vizsec,line width=0.5pt] (0.08,0.32) -- (0.08,0.42) -- (2.32,0.42) -- (2.32,0.32);
        \node[above,text=vizink,inner sep=2pt] at (1.2,0.42) {Pre-window $\mathcal{A}_i$};
        \draw[vizsec,line width=0.5pt] (2.48,0.32) -- (2.48,0.42) -- (5.92,0.42) -- (5.92,0.32);
        \node[above,text=vizink,inner sep=2pt] at (4.2,0.42) {Window $\mathcal{W}_i$};
        \draw[vizsec,line width=0.5pt] (6.08,0.32) -- (6.08,0.42) -- (9.52,0.42) -- (9.52,0.32);
        \node[above,text=vizink,inner sep=2pt] at (7.8,0.42) {Post-window $\mathcal{B}_i$};
        \fill[vizcell,rounded corners=1.6pt] (0.035,-0.44) rectangle (1.165,0);
        \fill[vizcell,rounded corners=1.6pt] (1.235,-0.44) rectangle (2.365,0);
        \fill[vizblue,rounded corners=1.6pt] (2.435,-0.44) rectangle (3.565,0);
        \fill[vizblue,rounded corners=1.6pt] (3.635,-0.44) rectangle (4.765,0);
        \fill[vizblue,rounded corners=1.6pt] (4.835,-0.44) rectangle (5.965,0);
        \fill[vizblue,rounded corners=1.6pt] (6.035,-0.44) rectangle (7.165,0);
        \fill[vizblue,rounded corners=1.6pt] (7.235,-0.44) rectangle (8.365,0);
        \fill[vizblue,rounded corners=1.6pt] (8.435,-0.44) rectangle (9.565,0);
        \node[left,text=vizsec,inner sep=2pt] at (-0.42,-0.22) {Onset 3};
        \fill[vizcell,rounded corners=1.6pt] (0.035,-0.97) rectangle (1.165,-0.53);
        \fill[vizcell,rounded corners=1.6pt] (1.235,-0.97) rectangle (2.365,-0.53);
        \fill[vizcell,rounded corners=1.6pt] (2.435,-0.97) rectangle (3.565,-0.53);
        \fill[vizblue,rounded corners=1.6pt] (3.635,-0.97) rectangle (4.765,-0.53);
        \fill[vizblue,rounded corners=1.6pt] (4.835,-0.97) rectangle (5.965,-0.53);
        \fill[vizblue,rounded corners=1.6pt] (6.035,-0.97) rectangle (7.165,-0.53);
        \fill[vizblue,rounded corners=1.6pt] (7.235,-0.97) rectangle (8.365,-0.53);
        \fill[vizblue,rounded corners=1.6pt] (8.435,-0.97) rectangle (9.565,-0.53);
        \node[left,text=vizink,inner sep=2pt] at (-0.42,-0.75) {\textbf{Onset 4, drawn}};
        \fill[vizorange] (-0.34,-0.64) -- (-0.12,-0.75) -- (-0.34,-0.86) -- cycle;
        \fill[vizcell,rounded corners=1.6pt] (0.035,-1.5) rectangle (1.165,-1.06);
        \fill[vizcell,rounded corners=1.6pt] (1.235,-1.5) rectangle (2.365,-1.06);
        \fill[vizcell,rounded corners=1.6pt] (2.435,-1.5) rectangle (3.565,-1.06);
        \fill[vizcell,rounded corners=1.6pt] (3.635,-1.5) rectangle (4.765,-1.06);
        \fill[vizblue,rounded corners=1.6pt] (4.835,-1.5) rectangle (5.965,-1.06);
        \fill[vizblue,rounded corners=1.6pt] (6.035,-1.5) rectangle (7.165,-1.06);
        \fill[vizblue,rounded corners=1.6pt] (7.235,-1.5) rectangle (8.365,-1.06);
        \fill[vizblue,rounded corners=1.6pt] (8.435,-1.5) rectangle (9.565,-1.06);
        \node[left,text=vizsec,inner sep=2pt] at (-0.42,-1.28) {Onset 5};
        \fill[vizcell,rounded corners=1.6pt] (0.035,-2.03) rectangle (1.165,-1.59);
        \fill[vizcell,rounded corners=1.6pt] (1.235,-2.03) rectangle (2.365,-1.59);
        \fill[vizcell,rounded corners=1.6pt] (2.435,-2.03) rectangle (3.565,-1.59);
        \fill[vizcell,rounded corners=1.6pt] (3.635,-2.03) rectangle (4.765,-1.59);
        \fill[vizcell,rounded corners=1.6pt] (4.835,-2.03) rectangle (5.965,-1.59);
        \fill[vizblue,rounded corners=1.6pt] (6.035,-2.03) rectangle (7.165,-1.59);
        \fill[vizblue,rounded corners=1.6pt] (7.235,-2.03) rectangle (8.365,-1.59);
        \fill[vizblue,rounded corners=1.6pt] (8.435,-2.03) rectangle (9.565,-1.59);
        \node[left,text=vizsec,inner sep=2pt] at (-0.42,-1.81) {Onset 6};
        \fill[vizcell,rounded corners=1.6pt] (10.05,-0.37) rectangle (10.47,-0.07);
        \node[right,text=vizsec,inner sep=2pt] at (10.55,-0.22) {Untreated};
        \fill[vizblue,rounded corners=1.6pt] (10.05,-0.87) rectangle (10.47,-0.57);
        \node[right,text=vizsec,inner sep=2pt] at (10.55,-0.72) {Treated};
        \node[left,text=vizink,align=right,inner sep=2pt] at (-0.42,-3.28) {Probability\\treated, $\pi_{it}$};
        \draw[vizgrid,line width=0.4pt] (0,-3.78) -- (9.6,-3.78);
        \node[left,text=vizmuted,inner sep=1.5pt,font=\sffamily\scriptsize] at (-0.02,-3.78) {0};
        \draw[vizgrid,line width=0.4pt] (0,-2.78) -- (9.6,-2.78);
        \node[left,text=vizmuted,inner sep=1.5pt,font=\sffamily\scriptsize] at (-0.02,-2.78) {1};
        \fill[vizblue] (2.85,-3.78) [rounded corners=1.6pt] -- (2.85,-3.53) -- (3.15,-3.53) [sharp corners] -- (3.15,-3.78) -- cycle;
        \node[above,text=vizink,inner sep=1.5pt] at (3,-3.53) {0.25};
        \fill[vizblue] (4.05,-3.78) [rounded corners=1.6pt] -- (4.05,-3.28) -- (4.35,-3.28) [sharp corners] -- (4.35,-3.78) -- cycle;
        \node[above,text=vizink,inner sep=1.5pt] at (4.2,-3.28) {0.5};
        \fill[vizblue] (5.25,-3.78) [rounded corners=1.6pt] -- (5.25,-3.03) -- (5.55,-3.03) [sharp corners] -- (5.55,-3.78) -- cycle;
        \node[above,text=vizink,inner sep=1.5pt] at (5.4,-3.03) {0.75};
        \fill[vizblue] (6.45,-3.78) [rounded corners=1.6pt] -- (6.45,-2.78) -- (6.75,-2.78) [sharp corners] -- (6.75,-3.78) -- cycle;
        \fill[vizblue] (7.65,-3.78) [rounded corners=1.6pt] -- (7.65,-2.78) -- (7.95,-2.78) [sharp corners] -- (7.95,-3.78) -- cycle;
        \fill[vizblue] (8.85,-3.78) [rounded corners=1.6pt] -- (8.85,-2.78) -- (9.15,-2.78) [sharp corners] -- (9.15,-3.78) -- cycle;
        \draw[vizaxis,line width=0.6pt] (0,-3.78) -- (9.6,-3.78);
        \node[left,text=vizink,align=right,inner sep=2pt] at (-0.42,-5.98) {Outcome, $Y_{it}$};
        \draw[vizaxis,line width=0.4pt] (0,-7.13) -- (9.6,-7.13);
        \node[left,text=vizmuted,inner sep=1.5pt,font=\sffamily\scriptsize] at (-0.02,-7.13) {0};
        \draw[vizgrid,line width=0.4pt] (0,-5.98) -- (9.6,-5.98);
        \node[left,text=vizmuted,inner sep=1.5pt,font=\sffamily\scriptsize] at (-0.02,-5.98) {4};
        \draw[vizgrid,line width=0.4pt] (0,-4.83) -- (9.6,-4.83);
        \node[left,text=vizmuted,inner sep=1.5pt,font=\sffamily\scriptsize] at (-0.02,-4.83) {8};
        \draw[vizorange,line width=1.2pt] (3.6,-7.13) -- (3.6,-4.81);
        \node[above,text=vizink,inner sep=1.5pt] at (3.6,-4.81) {Onset $S_i=4$};
        \draw[vizink,line width=0.9pt] (0.18,-6.411) -- (2.22,-6.411);
        \node[above right,text=vizink,inner sep=1.5pt] at (0.18,-5.908) {Pre-window mean 2.5};
        \draw[vizink,line width=0.9pt] (6.18,-5.117) -- (9.42,-5.117);
        \node[below right,text=vizink,inner sep=1.5pt] at (6.18,-5.506) {Post-window mean 7};
        \draw[vizmuted,line width=0.8pt] (0.6,-6.555) -- (1.8,-6.268) -- (3,-6.555) -- (4.2,-5.405) -- (5.4,-5.117) -- (6.6,-5.405) -- (7.8,-4.83) -- (9,-5.117);
        \filldraw[fill=white,draw=vizsec,line width=0.9pt] (0.6,-6.555) circle (2.4pt);
        \filldraw[fill=white,draw=vizsec,line width=0.9pt] (1.8,-6.268) circle (2.4pt);
        \filldraw[fill=white,draw=vizsec,line width=0.9pt] (3,-6.555) circle (2.4pt);
        \filldraw[fill=vizblue,draw=white,line width=0.9pt] (4.2,-5.405) circle (2.9pt);
        \filldraw[fill=vizblue,draw=white,line width=0.9pt] (5.4,-5.117) circle (2.9pt);
        \filldraw[fill=vizblue,draw=white,line width=0.9pt] (6.6,-5.405) circle (2.9pt);
        \filldraw[fill=vizblue,draw=white,line width=0.9pt] (7.8,-4.83) circle (2.9pt);
        \filldraw[fill=vizblue,draw=white,line width=0.9pt] (9,-5.117) circle (2.9pt);
        \node[below,text=vizsec,inner sep=3pt] at (0.6,-7.23) {1};
        \node[below,text=vizsec,inner sep=3pt] at (1.8,-7.23) {2};
        \node[below,text=vizsec,inner sep=3pt] at (3,-7.23) {3};
        \node[below,text=vizsec,inner sep=3pt] at (4.2,-7.23) {4};
        \node[below,text=vizsec,inner sep=3pt] at (5.4,-7.23) {5};
        \node[below,text=vizsec,inner sep=3pt] at (6.6,-7.23) {6};
        \node[below,text=vizsec,inner sep=3pt] at (7.8,-7.23) {7};
        \node[below,text=vizsec,inner sep=3pt] at (9,-7.23) {8};
        \node[below,text=vizink,inner sep=2pt] at (4.8,-7.68) {Session};
        \filldraw[fill=white,draw=vizsec,line width=0.9pt] (10.26,-5.58) circle (2.4pt);
        \node[right,text=vizsec,inner sep=2pt] at (10.55,-5.58) {Untreated};
        \filldraw[fill=vizblue,draw=white,line width=0.9pt] (10.26,-6.08) circle (2.9pt);
        \node[right,text=vizsec,inner sep=2pt] at (10.55,-6.08) {Treated};
    \end{tikzpicture}
    \par\smallskip
    \parbox{0.95\textwidth}{\small\textit{Note.} Top: each row is one admissible onset; the four onsets are equally likely, and session 4 was drawn. Middle: the probability that a session is treated. Bottom: the outcomes observed under the onset drawn, with the pre-window mean and the post-window mean.}
\end{figure}

\subsection{Potential Outcomes and Estimands}

Let $Y_{it}(s)$ be the outcome of case $i$ at session $t$ when the onset is $s$. Here, $s$ is an admissible onset or $s=\infty$, which denotes an intervention that does not start during the $T_i$ sessions. Under the Stable Unit Treatment Value Assumption (SUTVA), the outcome of a case does not depend on the onsets of other cases. We thus observe $Y_{it}=Y_{it}(S_i)$.

\begin{assumption}[No anticipation]
    \label{ass:anticipation}
    $Y_{it}(s)=Y_{it}(\infty)$ for every case $i$, every session $t$ and every $s\in\cS_i$ with $s>t$.
\end{assumption}

We write $Y^{(0)}_{it}=Y_{it}(\infty)$ for the baseline outcome. Under Assumption \ref{ass:anticipation}, the outcomes at pre-window sessions are the same under every onset. For a window session $t$, define
\begin{equation}
    \overline Y^{(1)}_{it}=\frac1{\pi_{it}}\sum_{s\in\cS_i,\,s\le t}p_i(s)Y_{it}(s),\qquad\tau_{it}=\overline Y^{(1)}_{it}-Y^{(0)}_{it},\qquad\tau_i=\frac1{m_i}\sum_{t\in\cW_i}\tau_{it}.
    \label{eq:celleffect}
\end{equation}
The treated outcome at session $t$ may depend on the duration of the intervention. $\overline Y^{(1)}_{it}$ averages it over the onsets at or before $t$, with the probabilities that the design dictates. The session effect $\tau_{it}$ is the effect at session $t$ of having started the intervention, and the case effect $\tau_i$ is its average over the window. The case effect therefore describes the window sessions, at which the case has been treated for at most $m_i$ sessions, and it does not describe the post-window sessions. When the treated outcome depends on the duration of the intervention, $\tau_i$ also depends on the onset probabilities, so cases with different windows or onset probabilities have case effects that refer to different sessions and durations.

Cases are often measured on different scales. Let $c_i>0$ be a scale computed from pre-window outcomes of case $i$ or of the cases in its study. Two examples are the standard deviation of the pre-window outcomes pooled across the cases in a study, and the pre-window mean of the case. The second makes the effect a proportional change. The standardized case effect and the pooled effect are
\begin{equation*}
    \theta_i=\frac{\tau_i}{c_i},\qquad\theta=\sum_{i=1}^N\omega_i\theta_i,
\end{equation*}
where the weights $\omega_i\ge0$ sum to one and are fixed by the design. Equal weights $\omega_i=1/N$ give every case the same weight, and weights that sum to the same value within each study give every study the same weight. Block $g$ has weight $\Omega_g=\sum_{i\in\cC_g}\omega_i>0$ and effect $\vartheta_g=\Omega_g^{-1}\sum_{i\in\cC_g}\omega_i\theta_i$. Under Assumption \ref{ass:anticipation}, $c_i$ takes the same value under every onset, so $\theta_i$ and $\theta$ are fixed quantities. The effect $\theta_i$ is the case effect in units of the scale $c_i$.

\section{Estimation}

Let $\overline Y_{i\cA}=a_i^{-1}\sum_{t\in\cA_i}Y_{it}$ and $\overline Y_{i\cB}=n_i^{-1}\sum_{t\in\cB_i}Y_{it}$ be the pre-window mean and the post-window mean of case $i$, and let $q_\nu$ be the $(1-\alpha/2)$-quantile of the $t$ distribution with $\nu$ degrees of freedom, for a level $\alpha\in(0,1)$. Parts (e) and (f) of Definition \ref{def:estimators} below require $\max_g\Omega_g<1/2$, so that $G\ge3$.

\begin{definition}
    \label{def:estimators}
    \begin{enumerate}
        \item[(a)] \emph{Weighted contrast.}
        \begin{equation*}
            \htau^{\rm a}_i=\frac1{m_i}\sum_{t\in\cW_i}\left\{\frac{W_{it}Y_{it}}{\pi_{it}}-\frac{(1-W_{it})Y_{it}}{1-\pi_{it}}\right\}.
        \end{equation*}
        \item[(b)] \emph{Baseline-anchored estimator.} $\htau^{\rm b}_i$ is $\htau^{\rm a}_i$ with $Y_{it}$ replaced by $Y_{it}-\overline Y_{i\cA}$.
        \item[(c)] \emph{Two-anchor estimator.}
        \begin{equation*}
            \htau^{\rm c}_i=\overline Y_{i\cB}-\overline Y_{i\cA}+\frac1{m_i}\sum_{t\in\cW_i}\left\{\frac{W_{it}(Y_{it}-\overline Y_{i\cB})}{\pi_{it}}-\frac{(1-W_{it})(Y_{it}-\overline Y_{i\cA})}{1-\pi_{it}}\right\}.
        \end{equation*}
        \item[(d)] \emph{Standardized estimators.} $\htheta^{\rm a}_i=\htau^{\rm a}_i/c_i$, $\htheta^{\rm b}_i=\htau^{\rm b}_i/c_i$ and $\htheta^{\rm c}_i=\htau^{\rm c}_i/c_i$.
        \item[(e)] \emph{Pooled estimators.}
        \begin{equation*}
            \htheta^{\rm a}=\sum_{i=1}^N\omega_i\htheta^{\rm a}_i,\qquad\htheta^{\rm b}=\sum_{i=1}^N\omega_i\htheta^{\rm b}_i,\qquad\htheta^{\rm c}=\sum_{i=1}^N\omega_i\htheta^{\rm c}_i.
        \end{equation*}
        Let $\htheta$ be one of these three estimators, and let $\htheta_i$ be the standardized estimators from which it is computed. With $\hvartheta_g=\Omega_g^{-1}\sum_{i\in\cC_g}\omega_i\htheta_i$ and $\rho=\sum_{g=1}^G\Omega_g^2/(1-2\Omega_g)$, the variance estimator is
        \begin{equation*}
            \hV=\frac1{1+\rho}\sum_{g=1}^G\frac{\Omega_g^2(\hvartheta_g-\htheta)^2}{1-2\Omega_g},
        \end{equation*}
        and the interval is $\htheta\pm q_{G-1}\hV^{1/2}$.
        \item[(f)] \emph{Pooled-anchor estimator.} Let $\widetilde Y_{it}=(Y_{it}-\overline Y_{i\cA})/c_i$ and
        \begin{equation*}
            \begin{aligned}
                H_{i1}&=\frac1{m_i}\sum_{t\in\cW_i}\frac{W_{it}\widetilde Y_{it}}{\pi_{it}},&\qquad X_{i1}&=\frac1{m_i}\sum_{t\in\cW_i}\frac{W_{it}}{\pi_{it}},\\
                H_{i0}&=\frac1{m_i}\sum_{t\in\cW_i}\frac{(1-W_{it})\widetilde Y_{it}}{1-\pi_{it}},&\qquad X_{i0}&=\frac1{m_i}\sum_{t\in\cW_i}\frac{1-W_{it}}{1-\pi_{it}}.
            \end{aligned}
        \end{equation*}
        For $w\in\{0,1\}$, let $\mu_w=\bbE(\sum_{i=1}^N\omega_iH_{iw})$ and $\hmu_w=\sum_{i=1}^N\omega_iH_{iw}/\sum_{i=1}^N\omega_iX_{iw}$, with $\hmu_w=0$ when the denominator is 0. Then, $\htheta^{\rm f}=\hmu_1-\hmu_0$. With $\widehat e_i=(H_{i1}-\hmu_1X_{i1})-(H_{i0}-\hmu_0X_{i0})$ and $\widehat r_g=\Omega_g^{-1}\sum_{i\in\cC_g}\omega_i\widehat e_i$, define
        \begin{equation*}
            \hV^{\rm f}=\frac{G-1}{G-2}\cdot\frac1{1+\rho}\sum_{g=1}^G\frac{\Omega_g^2\,\widehat r_g^{\,2}}{1-2\Omega_g},
        \end{equation*}
        and the interval is $\htheta^{\rm f}\pm q_{G-2}(\hV^{\rm f})^{1/2}$.
    \end{enumerate}
\end{definition}

The estimator $\htau^{\rm a}_i$ weights each window outcome by the inverse probability of the condition under which it was observed, as in \citet{HorvitzThompson1952}. The estimators $\htau^{\rm b}_i$ and $\htau^{\rm c}_i$ subtract an offset from each outcome before weighting and add the difference of the offsets back. Specifically, in $\htau^{\rm b}_i$, the offset is the pre-window mean on both sides; in $\htau^{\rm c}_i$, the treated side uses the post-window mean, so each side is centered at a mean observed under its own condition. The pooled-anchor estimator $\htheta^{\rm f}$ centers every outcome at the pre-window mean of its case and scales it. It then divides each pooled weighted sum by the pooled sum of its weights. The two ratios $\hmu_1$ and $\hmu_0$ are weighted means of the treated and the untreated window outcomes of all cases, and $\htheta^{\rm f}$ is their difference.

\begin{assumption}[Stable effect]
    \label{ass:duration}
    $Y_{it}(s)=Y_{it}(s')$ for every case $i$, every session $t$ and all $s,s'\in\cS_i$ with $s\le t$ and $s'\le t$.
\end{assumption}

Under Assumption \ref{ass:duration}, the treated outcome at a session does not depend on the duration of the intervention. We write $Y^{(1)}_{it}$ for it, so that $\tau_{it}=Y^{(1)}_{it}-Y^{(0)}_{it}$. The effect is present in full at the onset session, and it may still differ across sessions and cases. An effect that builds up gradually over sessions \citep{SwanPustejovsky2018} violates this assumption. Assumptions \ref{ass:anticipation} and \ref{ass:duration} correspond to the no-anticipation and invariance-to-history assumptions of \citet{AtheyImbens2022} for staggered adoption.

\begin{theorem}
    \label{thm:unbiased}
    Under Assumption \ref{ass:anticipation}, $\bbE(\htau^{\rm a}_i)=\bbE(\htau^{\rm b}_i)=\tau_i$ for every case $i$. Under Assumptions \ref{ass:anticipation} and \ref{ass:duration}, $\bbE(\htau^{\rm c}_i)=\tau_i$ for every case $i$.
\end{theorem}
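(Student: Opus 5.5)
The plan is to compute expectations session by session over the draw of $S_i$, using only linearity and the identity $W_{it}=\bbone(S_i\le t)$.

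\emph{Weighted contrast.} Fix a window session $t\in\cW_i$. By definition of $\overline Y^{(1)}_{it}$,
\begin{equation*}
\bbE\{W_{it}Y_{it}\}=\sum_{s\in\cS_i,\,s\le t}p_i(s)Y_{it}(s)=\pi_{it}\overline Y^{(1)}_{it}.
\end{equation*}
Assumption \ref{ass:anticipation} gives $Y_{it}(s)=Y^{(0)}_{it}$ for $s>t$, so
\begin{equation*}
\bbE\{(1-W_{it})Y_{it}\}=(1-\pi_{it})Y^{(0)}_{it}.
\end{equation*}
Since $0<\pi_{it}<1$ on $\cW_i$, dividing and averaging over $t$ gives $\bbE(\htau^{\rm a}_i)=m_i^{-1}\sum_t\tau_{it}=\tau_i$.

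\emph{Baseline-anchored estimator.} Under Assumption \ref{ass:anticipation}, the pre-window outcomes are the same under every onset, so $\overline Y_{i\cA}$ is a fixed constant, say $\kappa$. Then
\begin{equation*}
\htau^{\rm b}_i=\htau^{\rm a}_i-\kappa\,\frac1{m_i}\sum_{t\in\cW_i}\left\{\frac{W_{it}}{\pi_{it}}-\frac{1-W_{it}}{1-\pi_{it}}\right\}.
\end{equation*}
Each summand in braces has expectation $1-1=0$, so $\bbE(\htau^{\rm b}_i)=\tau_i$.

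\emph{Two-anchor estimator.} Rewrite $\htau^{\rm c}_i$ as $\htau^{\rm a}_i$ plus correction terms:
\begin{equation*}
\htau^{\rm c}_i=\htau^{\rm a}_i+\overline Y_{i\cB}\Big(1-\frac1{m_i}\sum_{t}\frac{W_{it}}{\pi_{it}}\Big)-\overline Y_{i\cA}\Big(1-\frac1{m_i}\sum_t\frac{1-W_{it}}{1-\pi_{it}}\Big).
\end{equation*}
The last term is a constant times a mean-zero variable, exactly as in part (b). The obstacle is the middle term, because $\overline Y_{i\cB}$ depends in general on the onset through the duration of the intervention, so it is random and correlated with the $W_{it}$.

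Assumption \ref{ass:duration} resolves this. Every post-window session $t\ge b_i$ satisfies $t\ge s$ for all $s\in\cS_i$, so $Y_{it}(s)$ is the same for all admissible $s$. Hence $\overline Y_{i\cB}$ is a fixed constant, and the middle term also has mean zero because $\bbE(W_{it})=\pi_{it}$. Therefore $\bbE(\htau^{\rm c}_i)=\tau_i$.

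Note that this step is the only place Assumption \ref{ass:duration} is used. It is needed to make the post-window anchor nonrandom; no within-window invariance is required, since $\tau_{it}$ is already defined through the design-weighted average $\overline Y^{(1)}_{it}$.
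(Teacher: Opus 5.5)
Your proof is correct and takes essentially the paper's route. The paper packages the same computation as a single offset family $\htau_i(\ell)$, which is unbiased whenever the offsets are onset-invariant (using $\bbE(W_{it}/\pi_{it})=\bbE\{(1-W_{it})/(1-\pi_{it})\}=1$ and the two inverse-probability identities you derive), with $\overline Y_{i\cA}$ fixed by Assumption~\ref{ass:anticipation} and $\overline Y_{i\cB}$ fixed by Assumption~\ref{ass:duration} exactly as you argue; your decomposition of $\htau^{\rm c}_i$ into $\htau^{\rm a}_i$ plus constants times $1-X_{i1}$ and $1-X_{i0}$ is equivalent to the identity the paper later uses to show that, without Assumption~\ref{ass:duration}, the bias is $-\Cov(X_{i1},\overline Y_{i\cB})$.
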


\begin{corollary}
    \label{cor:standardized}
    Under Assumption \ref{ass:anticipation}, $\bbE(\htheta^{\rm a}_i)=\bbE(\htheta^{\rm b}_i)=\theta_i$ for every case $i$, and $\bbE(\htheta^{\rm a})=\bbE(\htheta^{\rm b})=\theta$. Under Assumptions \ref{ass:anticipation} and \ref{ass:duration}, $\bbE(\htheta^{\rm c}_i)=\theta_i$ for every case $i$, and $\bbE(\htheta^{\rm c})=\theta$.
\end{corollary}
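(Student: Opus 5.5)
The corollary follows from Theorem \ref{thm:unbiased} by linearity, once we check that the scale $c_i$ is a constant under the design. The plan has three steps: establish that $c_i$ is nonrandom, divide the case-level identities by $c_i$, and then average over cases with the fixed weights $\omega_i$.

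\emph{The scale $c_i$ is nonrandom.} By construction, $c_i$ is a function only of pre-window outcomes $Y_{jt}$, $t\in\cA_j$, for $j=i$ or for $j$ in the study of case $i$. Every $t\in\cA_j$ satisfies $t\le a_j<\min\cS_j\le s$ for each $s\in\cS_j$. Assumption \ref{ass:anticipation} then gives $Y_{jt}(s)=Y_{jt}(\infty)=Y^{(0)}_{jt}$ for every admissible onset $s$, so $Y_{jt}=Y^{(0)}_{jt}$ whatever onset is drawn. Under SUTVA, $Y_{jt}$ does not depend on the onsets of other cases either. Hence $c_i$ is a fixed positive number, and $\theta_i=\tau_i/c_i$ is fixed, as already noted after the definition of $\theta_i$.

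\emph{Case-level unbiasedness.} Since $c_i$ is a constant, $\bbE(\htheta^{\rm a}_i)=\bbE(\htau^{\rm a}_i)/c_i$, and likewise for $\htheta^{\rm b}_i$ and $\htheta^{\rm c}_i$. Under Assumption \ref{ass:anticipation}, Theorem \ref{thm:unbiased} gives $\bbE(\htau^{\rm a}_i)=\bbE(\htau^{\rm b}_i)=\tau_i$, so $\bbE(\htheta^{\rm a}_i)=\bbE(\htheta^{\rm b}_i)=\tau_i/c_i=\theta_i$. Under Assumptions \ref{ass:anticipation} and \ref{ass:duration} together, the theorem gives $\bbE(\htau^{\rm c}_i)=\tau_i$, hence $\bbE(\htheta^{\rm c}_i)=\theta_i$.

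\emph{Pooled unbiasedness.} The weights $\omega_i$ are fixed by the design, so by linearity of expectation $\bbE(\htheta^{\rm a})=\sum_i\omega_i\bbE(\htheta^{\rm a}_i)=\sum_i\omega_i\theta_i=\theta$. The same holds for $\htheta^{\rm b}$, and for $\htheta^{\rm c}$ under both assumptions. This step uses no independence across blocks; linearity alone suffices.

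The only point needing care is the first step: the argument relies on $c_i$ being built solely from pre-window sessions, so that no ratio of random variables appears. Once that is established, everything else is linearity.
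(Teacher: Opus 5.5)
Your proof is correct and follows the paper's own argument. The paper likewise notes that $c_i$ is a function of pre-window outcomes only, so under Assumption \ref{ass:anticipation} it takes the same value under every vector of onsets. It then divides the identities of Theorem \ref{thm:unbiased} by $c_i$ and concludes for the pooled estimators by linearity with the fixed weights $\omega_i$.
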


The proofs of Theorem \ref{thm:unbiased} and Corollary \ref{cor:standardized} are in Section \ref{sec:unbiased} of the Supplemental Material. At a window session, the case is treated with probability $\pi_{it}$. The weighted treated outcome therefore has expectation $\overline Y^{(1)}_{it}$, and the weighted untreated outcome has expectation $Y^{(0)}_{it}$. An offset does not change the expectation when it takes the same value under every onset. The pre-window mean has this property under Assumption \ref{ass:anticipation}, and the post-window mean has it under Assumption \ref{ass:duration}. The scale $c_i$ has the same property, so dividing by $c_i$ keeps the estimator unbiased. The result holds for a single case with a single window session (two consecutive admissible onset sessions), and it requires no model for trend or serial dependence.

Consider the example in Figure \ref{fig:window}, with outcomes $(2,3,2,6,7,6,8,7)$ and onset $S_i=4$. The bottom panel of the figure shows these outcomes. Sessions 3 to 5 are window sessions, and session 3 is untreated. Here $\htau^{\rm a}_i=(-2/0.75+6/0.5+7/0.75)/3=6.22$. The pre-window mean is 2.5, and $\htau^{\rm b}_i=(0.5/0.75+3.5/0.5+4.5/0.75)/3=4.56$. The post-window mean is 7, and $\htau^{\rm c}_i=4.5+(0.5/0.75-1/0.5+0/0.75)/3=4.06$. The offsets remove the level of the series from the weighted terms, which is the main source of variance in $\htau^{\rm a}_i$.

\section{Phase Contrasts}

Most effect sizes for single-case designs compare the outcomes from the onset on with the outcomes before the onset. The simplest comparison is the phase contrast, i.e., the mean outcome from the onset on minus the mean outcome before the onset. Under onset $s$, it is
\begin{equation}
    \gamma_i(s)=\frac1{T_i-s+1}\sum_{t=s}^{T_i}Y_{it}(s)-\frac1{s-1}\sum_{t=1}^{s-1}Y_{it}(s),
    \label{eq:contrast}
\end{equation}
and the observed contrast is $\gamma_i(S_i)$. For the example in Figure \ref{fig:window}, it is $6.8-2.33=4.47$. Under Assumption \ref{ass:duration}, define $\tau_{it}=Y^{(1)}_{it}-Y^{(0)}_{it}$ for post-window sessions, and let
\begin{equation}
    \lambda_{it}=\sum_{s\in\cS_i,\,s\le t}\frac{p_i(s)}{T_i-s+1},\qquad\zeta_i=\bbE\left\{\frac1{T_i-S_i+1}\sum_{t=S_i}^{T_i}Y^{(0)}_{it}-\frac1{S_i-1}\sum_{t=1}^{S_i-1}Y^{(0)}_{it}\right\}.
    \label{eq:drift}
\end{equation}

\begin{proposition}
    \label{prop:contrast}
    Under Assumptions \ref{ass:anticipation} and \ref{ass:duration},
    \begin{equation*}
        \bbE\{\gamma_i(S_i)\}=\sum_{t=a_i+1}^{T_i}\lambda_{it}\tau_{it}+\zeta_i,
    \end{equation*}
    where $\lambda_{it}\ge0$ and $\sum_{t=a_i+1}^{T_i}\lambda_{it}=1$. If $Y^{(0)}_{it}=\phi_i+\beta_it$ for every session $t$, then $\zeta_i=\beta_iT_i/2$.
\end{proposition}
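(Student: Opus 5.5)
The plan is to decompose $\gamma_i(S_i)$ into an effect part and a baseline part, and then take the expectation over the onset.

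\textbf{Decomposition.} Under Assumption \ref{ass:anticipation}, for every $t<s$ we have $Y_{it}(s)=Y^{(0)}_{it}$. Under Assumption \ref{ass:duration}, for every $t\ge s$ we have $Y_{it}(s)=Y^{(1)}_{it}=Y^{(0)}_{it}+\tau_{it}$. Substituting both into \eqref{eq:contrast} gives
\begin{equation*}
\gamma_i(s)=\frac{1}{T_i-s+1}\sum_{t=s}^{T_i}\tau_{it}+\left\{\frac1{T_i-s+1}\sum_{t=s}^{T_i}Y^{(0)}_{it}-\frac1{s-1}\sum_{t=1}^{s-1}Y^{(0)}_{it}\right\}.
\end{equation*}
Averaging over $s\in\cS_i$ with weights $p_i(s)$, the second term has expectation exactly $\zeta_i$ by \eqref{eq:drift}.

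\textbf{Weights.} For the first term, write its expectation as a double sum over $s$ and $t\ge s$ and exchange the order of summation:
\begin{equation*}
\sum_{s\in\cS_i}\frac{p_i(s)}{T_i-s+1}\sum_{t=s}^{T_i}\tau_{it}=\sum_{t=a_i+1}^{T_i}\tau_{it}\sum_{s\in\cS_i,\,s\le t}\frac{p_i(s)}{T_i-s+1}=\sum_{t=a_i+1}^{T_i}\lambda_{it}\tau_{it}.
\end{equation*}
The outer range starts at $a_i+1=\min\cS_i$, since no session before the first admissible onset is ever treated. Each $\lambda_{it}$ is nonnegative because $p_i(s)>0$. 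Summing the weights with the same interchange gives
\begin{equation*}
\sum_t\lambda_{it}=\sum_{s}p_i(s)\,\frac{T_i-s+1}{T_i-s+1}=\sum_s p_i(s)=1.
\end{equation*}
One point needs care here. For $t\in\cW_i$ the paper defines $\tau_{it}$ through $\overline Y^{(1)}_{it}$, but under Assumption \ref{ass:duration} this equals $Y^{(1)}_{it}-Y^{(0)}_{it}$. So the two definitions agree, and the $\tau_{it}$ above is consistent with the earlier one.

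\textbf{Linear trend.} Suppose $Y^{(0)}_{it}=\phi_i+\beta_it$. Then $\phi_i$ cancels between the two phase means. The mean of $t$ over $s,\ldots,T_i$ is $(s+T_i)/2$, and the mean of $t$ over $1,\ldots,s-1$ is $s/2$. The difference is $\beta_iT_i/2$ for every $s$, so its expectation is also $\beta_iT_i/2$. This step uses $s\ge2$, which holds because $a_i\ge1$.

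There is no real obstacle in this argument. The only delicate step is the interchange of summation together with the correct index range for $\lambda_{it}$, including the post-window sessions, where $\lambda_{it}=\sum_{s\in\cS_i}p_i(s)/(T_i-s+1)$.
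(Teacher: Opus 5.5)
Your proposal is correct and follows the paper's proof essentially step for step. The steps match: you decompose $\gamma_i(s)$ via Assumptions \ref{ass:anticipation} and \ref{ass:duration}, check that $\overline Y^{(1)}_{it}=Y^{(1)}_{it}$ on the window, exchange the order of summation to get $\lambda_{it}$ and show it sums to one, and compute $(s+T_i)/2-s/2=T_i/2$ for the linear trend.
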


The proof of Proposition \ref{prop:contrast} is in Section \ref{sec:contrast} of the Supplemental Material. The drift $\zeta_i$ is the expected phase contrast of the untreated series, i.e., the contrast that the case would show if the intervention had no effect. Every onset compares later sessions with earlier sessions, so averaging over onsets does not remove it. With a linear trend, the drift is $1/2$ the slope times the number of sessions for each onset distribution, as \citet{MichielsOnghena2019} found for randomized AB phase designs. A case that improves by 0.1 per session over 20 sessions has a drift of 1. An event that shifts the untreated outcome during the study, such as a change of teacher, also contributes to $\zeta_i$. The drift also affects nonoverlap measures. If the untreated series strictly increases, every later outcome exceeds every earlier one, and the nonoverlap of all pairs \citep{ParkerVannest2009} equals 1 at every onset when the intervention has no effect. The estimators $\htau^{\rm a}_i$, $\htau^{\rm b}_i$ and $\htau^{\rm c}_i$ contrast a treated and an untreated outcome within the same session, so changes in the untreated outcome across sessions cancel out in expectation. The weights $\lambda_{it}$ also differ from the equal weights of $\tau_i$, as they include the post-window sessions and do not decrease over sessions.

\section{Pooling Cases and Studies}

This section gives the variance of the pooled estimator and confidence intervals for the pooled effect $\theta$.

\begin{proposition}
    \label{prop:variance}
    Let Assumption \ref{ass:anticipation} hold, and Assumption \ref{ass:duration} as well when $\htheta=\htheta^{\rm c}$. Every pooled estimator $\htheta$ in Definition \ref{def:estimators}(e) and its variance estimator $\hV$ satisfy
    \begin{equation}
        \Var(\htheta)=\sum_{g=1}^G\Omega_g^2\Var(\hvartheta_g),\qquad\bbE(\hV)=\Var(\htheta)+\frac1{1+\rho}\sum_{g=1}^G\frac{\Omega_g^2(\vartheta_g-\theta)^2}{1-2\Omega_g}.
        \label{eq:variance}
    \end{equation}
    When $\Omega_g=1/G$ for every block, $\hV=\sum_{g=1}^G(\hvartheta_g-\htheta)^2/\{G(G-1)\}$.
\end{proposition}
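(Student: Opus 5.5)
The variance identity follows from block independence, and the expectation of $\hV$ follows by expanding the squares and using unbiasedness. Since $\htheta=\sum_i\omega_i\htheta_i=\sum_g\Omega_g\hvartheta_g$, and each $\hvartheta_g$ is a function of the onsets in block $g$ only, the $\hvartheta_g$ are mutually independent. Hence $\Var(\htheta)=\sum_g\Omega_g^2\Var(\hvartheta_g)$ immediately. By Corollary \ref{cor:standardized}, applied case by case, $\bbE(\hvartheta_g)=\vartheta_g$ and $\bbE(\htheta)=\theta$. This uses Assumption \ref{ass:anticipation}, plus Assumption \ref{ass:duration} for the c version.

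For $\bbE(\hV)$, I write $\hvartheta_g-\htheta=(\hvartheta_g-\vartheta_g)-(\htheta-\theta)+(\vartheta_g-\theta)$. Taking expectations gives
\begin{equation*}
\bbE(\hvartheta_g-\htheta)^2=\Var(\hvartheta_g-\htheta)+(\vartheta_g-\theta)^2 .
\end{equation*}
Independence gives
\begin{equation*}
\Cov(\hvartheta_g,\htheta)=\Omega_g\Var(\hvartheta_g),
\end{equation*}
so
\begin{equation*}
\Var(\hvartheta_g-\htheta)=(1-2\Omega_g)\Var(\hvartheta_g)+\Var(\htheta).
\end{equation*}
Write $v_g=\Var(\hvartheta_g)$ and $V=\Var(\htheta)=\sum_g\Omega_g^2v_g$. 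Multiplying by $\Omega_g^2/(1-2\Omega_g)$, which is positive since $\max_g\Omega_g<1/2$, and summing over $g$ gives
\begin{equation*}
\sum_g\frac{\Omega_g^2\,\bbE(\hvartheta_g-\htheta)^2}{1-2\Omega_g}=\sum_g\Omega_g^2v_g+V\sum_g\frac{\Omega_g^2}{1-2\Omega_g}+\sum_g\frac{\Omega_g^2(\vartheta_g-\theta)^2}{1-2\Omega_g}.
\end{equation*}
The right-hand side equals $(1+\rho)V$ plus the bias sum. Dividing by $1+\rho$ yields \eqref{eq:variance}.

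For the equal-weight case, $\Omega_g=1/G$ gives $\Omega_g^2/(1-2\Omega_g)=1/\{G(G-2)\}$. Then $\rho=1/(G-2)$ and $1+\rho=(G-1)/(G-2)$, so
\begin{equation*}
\hV=\frac{G-2}{G-1}\cdot\frac{1}{G(G-2)}\sum_g(\hvartheta_g-\htheta)^2=\frac{1}{G(G-1)}\sum_g(\hvartheta_g-\htheta)^2 .
\end{equation*}

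The only delicate point is the covariance step. It needs $\hvartheta_g$ to depend solely on block $g$'s onsets. This holds because each $\htheta_i$ depends only on $S_i$ and on outcomes of case $i$, by SUTVA. The scale $c_i$ is fixed under Assumption \ref{ass:anticipation}, even when it is computed from other cases in the same study, because it uses only pre-window outcomes. Everything else is routine algebra.
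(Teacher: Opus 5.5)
Your proof is correct and follows the paper's argument. The paper isolates the computation as a lemma for independent $U_1,\dots,U_G$, deriving $\Var(U_g-\overline U)=(1-2\Omega_g)\Var(U_g)+\Var(\overline U)$ from $U_g-\overline U=(1-\Omega_g)U_g-\sum_{g'\ne g}\Omega_{g'}U_{g'}$ (equivalent to your covariance step), then applies it with $U_g=\hvartheta_g$ using block independence and Corollary~\ref{cor:standardized}, and simplifies the equal-weight case via $\rho=1/(G-2)$, exactly as you do.
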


The proof of Proposition \ref{prop:variance} is in Section \ref{sec:variance} of the Supplemental Material. Blocks are randomized independently, so the block estimates $\hvartheta_g$ are independent, and $\hV$ estimates $\Var(\htheta)$ from their squared deviations from $\htheta$. The variance of a block estimate cannot be estimated from that block alone, because each block receives one draw of its onsets. A matched-pairs experiment has the same feature. Since the outcomes are fixed and the only randomness is the draw of the onsets, serial dependence within a case is not a source of variance and needs no model. The last term of \eqref{eq:variance} is nonnegative. The variance estimator is unbiased when all block effects are equal and conservative otherwise, as Neyman's variance estimator in a randomized experiment \citep{Ding2024}. When $\Omega_g$ is the share of the cases that belong to block $g$, $\hV$ is the unified small block variance estimator of \citet[Equation 8]{PashleyMiratrix2021} for blocks of varying size. With equal block weights, $\hV$ is the squared standard error of the mean of the block estimates.

Confidence intervals require conditions on the designs as the number of blocks grows. For the pooled-anchor estimator, let $e_i=(H_{i1}-\mu_1X_{i1})-(H_{i0}-\mu_0X_{i0})$, with $\mu_1$ and $\mu_0$ from Definition \ref{def:estimators}(f). Let $L$ denote $\htheta-\theta$ for a pooled estimator $\htheta$ in Definition \ref{def:estimators}(e) and $\sum_{i=1}^N\omega_ie_i$ for the pooled-anchor estimator $\htheta^{\rm f}$. Let $C_\omega<\infty$, $M<\infty$, $\lpi>0$ and $\lv>0$ be constants that do not depend on $G$.

\begin{assumption}[Weights]
    \label{ass:weights}
    $\Omega_g\le C_\omega/G$ for every block $g$.
\end{assumption}

\begin{assumption}[Bounded outcomes]
    \label{ass:bounded}
    $|Y_{it}(s)-\overline Y_{i\cA}|\le Mc_i$ for every case $i$, every session $t$ and every $s\in\cS_i$.
\end{assumption}

\begin{assumption}[Overlap]
    \label{ass:overlap}
    $\lpi\le\pi_{it}\le1-\lpi$ for every case $i$ and every window session $t\in\cW_i$.
\end{assumption}

\begin{assumption}[Nondegenerate variance]
    \label{ass:nondegenerate}
    $G\Var(L)\ge\lv$.
\end{assumption}

Assumption \ref{ass:weights} keeps every block weight within a fixed multiple of the average block weight. Assumption \ref{ass:bounded} bounds the scaled deviations of the outcomes from the pre-window mean. It holds for outcomes with a bounded range, such as the number of correct answers on a test, when the scales stay away from 0. Assumption \ref{ass:overlap} holds with $\lpi=1/K$ when every case has at most $K$ equally likely onsets. Assumption \ref{ass:nondegenerate} is imposed on the $L$ of the estimator under consideration. It excludes estimators whose variance is of smaller order than $1/G$.

\begin{theorem}
    \label{thm:coverage}
    Let Assumption \ref{ass:anticipation} hold, and consider a sequence of populations and designs with $G\to\infty$ along which Assumptions \ref{ass:weights}--\ref{ass:nondegenerate} hold with the same constants. For $\htheta=\htheta^{\rm b}$ with its variance estimator $\hV$, and for the pooled-anchor estimator $\htheta^{\rm f}$,
    \begin{equation}
        \liminf_{G\to\infty}\bbP\left\{|\htheta-\theta|\le q_{G-1}\hV^{1/2}\right\}\ge1-\alpha,\qquad\liminf_{G\to\infty}\bbP\left\{|\htheta^{\rm f}-\theta|\le q_{G-2}(\hV^{\rm f})^{1/2}\right\}\ge1-\alpha,
        \label{eq:coverage}
    \end{equation}
    and $\bbE(\htheta^{\rm f})-\theta\to0$. Under Assumption \ref{ass:duration}, in addition, the first inequality in \eqref{eq:coverage} holds for $\htheta=\htheta^{\rm c}$.
\end{theorem}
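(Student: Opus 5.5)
The plan is to write each pooled estimator as a sum of independent, bounded block contributions, apply a Lyapunov central limit theorem, and then show that the variance estimator is consistent for an upper bound on the variance. The pooled-anchor estimator $\htheta^{\rm f}$ is handled by linearizing the two ratios around their means.

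\textbf{Estimators (b) and (c).} By Corollary \ref{cor:standardized}, $L=\htheta-\theta=\sum_g\Omega_g(\hvartheta_g-\vartheta_g)$ has mean zero. The summands are independent across blocks.
\begin{itemize}
\item For $\htheta^{\rm b}_i$, each weighted term is $(Y_{it}-\overline Y_{i\cA})/c_i$ divided by $\pi_{it}$ or $1-\pi_{it}$. Assumptions \ref{ass:bounded} and \ref{ass:overlap} then give $|\htheta_i|\le 2M/\lpi$.
\item For $\htheta^{\rm c}_i$, the extra term $\overline Y_{i\cB}-\overline Y_{i\cA}$ is also bounded by $M c_i$, so the bound is similar.
\item Hence $|\hvartheta_g|$ is bounded, and by Assumption \ref{ass:weights} each summand satisfies $|\Omega_g(\hvartheta_g-\vartheta_g)|\le C/G$.
\end{itemize}
Together with Assumption \ref{ass:nondegenerate}, the Lyapunov ratio is $\sum_g\bbE|\Omega_g(\hvartheta_g-\vartheta_g)|^3/\Var(L)^{3/2}\lesssim G\cdot G^{-3}/G^{-3/2}\to0$. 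Therefore $L/\Var(L)^{1/2}\to N(0,1)$.

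\textbf{Consistency of $\hV$.} Write $\hvartheta_g-\htheta=(\hvartheta_g-\vartheta_g)+(\vartheta_g-\theta)-L$. Expand $G\hV$ into sums over blocks of independent bounded terms.
\begin{itemize}
\item Since $\rho=O(1/G)$ and $1-2\Omega_g\to1$ uniformly, each such sum has variance $O(1/G)$. Its mean is the rescaled version of Proposition \ref{prop:variance}.
\item The term $L$ is $O_p(G^{-1/2})$, so it contributes only negligibly.
\end{itemize}
This gives $G\hV-G\bbE(\hV)\to0$ in probability, with $G\bbE(\hV)\ge G\Var(L)\ge\lv$. Hence $\hV/\Var(L)\ge1-o_p(1)$. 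Since $q_{G-1}\to z_{1-\alpha/2}$, we get
\[
\bbP\{|L|\le q_{G-1}\hV^{1/2}\}\ge\bbP\{|L|/\Var(L)^{1/2}\le z_{1-\alpha/2}(1-\epsilon)\}-o(1),
\]
and the first inequality in \eqref{eq:coverage} follows.

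\textbf{The pooled-anchor estimator $\htheta^{\rm f}$.} Let $D_w=\sum_i\omega_iX_{iw}$. Then $\bbE X_{iw}=1$, so $\bbE D_w=1$. Also $\mu_1-\mu_0=\theta$, by the unbiasedness of $\htheta^{\rm b}$ (Theorem \ref{thm:unbiased}). The sums $D_w$ and $\sum_i\omega_iH_{iw}$ are sums of independent bounded block terms, so they concentrate: $D_w=1+O_p(G^{-1/2})$. This gives the expansion
\[
\hmu_w-\mu_w=\frac{\sum_i\omega_i(H_{iw}-\mu_wX_{iw})}{D_w}.
\]
Hence $\htheta^{\rm f}-\theta=L/(1+O_p(G^{-1/2}))$, where $L=\sum_i\omega_ie_i$ has mean zero and satisfies the same Lyapunov bound as before. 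The event $D_w=0$ has exponentially small probability by Hoeffding's inequality.
\begin{itemize}
\item \textbf{Bias.} Boundedness gives $|\hmu_w|\le M/\lpi$ on the event $D_w\ge1/2$. The complement has exponentially small probability. So $\bbE(\htheta^{\rm f})-\theta=\bbE\{L(1-D)/D\}+o(1)=O(G^{-1/2}\cdot G^{-1/2})+o(1)\to0$.
\item \textbf{Variance estimator.} Write $\widehat e_i=e_i-(\hmu_1-\mu_1)X_{i1}+(\hmu_0-\mu_0)X_{i0}$. The perturbations are $O_p(G^{-1/2})$ times bounded terms, so $\widehat r_g=r_g+O_p(G^{-1/2})$ uniformly in $g$. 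Then $G\hV^{\rm f}$ has the same limit as its oracle version built from $r_g$. That oracle version has expectation at least $\Var(L)$ times $1+O(1/G)$, by the calculation in Proposition \ref{prop:variance} applied to $e_i$ with mean zero. The factor $(G-1)/(G-2)\to1$ is harmless.
\end{itemize}
Slutsky's lemma then gives the second inequality in \eqref{eq:coverage}.

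\textbf{Main obstacle.} The hardest step is the uniform control in the $\htheta^{\rm f}$ part: showing that replacing $\mu_w$ by $\hmu_w$ in $\widehat r_g$ changes $G\hV^{\rm f}$ by only $o_p(1)$. The cross term $\sum_g\Omega_g^2r_g(\hmu_w-\mu_w)\bar X_g$ must be bounded by Cauchy--Schwarz, using $\sum_g\Omega_g^2=O(1/G)$.
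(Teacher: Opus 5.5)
Your proposal is correct and takes essentially the paper's route: independent bounded block terms, a normal approximation for $L$, and concentration of the variance estimator around an expectation that is at least $\Var(L)$ by Proposition \ref{prop:variance}. For $\htheta^{\rm f}$ it likewise linearizes on the event that both $\overline X_w$ (your $D_w$) are near 1 and bounds $\widehat r_g-r_g$ uniformly by $O_p(G^{-1/2})$, which makes your ``main obstacle'' immediate since $G\sum_g\Omega_g^2=O(1)$; the paper makes each step quantitative via Berry--Esseen, Chebyshev and Markov, and the bound $q_\nu\ge z_\alpha$ for every $\nu$. Two small points to tighten: with $L_w=\sum_i\omega_i(H_{iw}-\mu_wX_{iw})$, the remainder is $L_1(1-\overline X_1)/\overline X_1-L_0(1-\overline X_0)/\overline X_0$, with separate denominators, rather than $L/(1+O_p(G^{-1/2}))$; and the bias argument off the good event needs $|\hmu_w|\le M$ for every onset vector, which holds because $\hmu_w$ is an average of the $\widetilde Y_{it}$ with nonnegative weights.
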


The proof of Theorem \ref{thm:coverage} is in Section \ref{sec:coverage} of the Supplemental Material, which also gives a lower bound on the coverage for every $G\ge4C_\omega$. The pooled estimator is a weighted sum of independent bounded block estimates. It is asymptotically normal when no block dominates, and by Proposition \ref{prop:variance}, $\bbE(\hV)$ is at least $\Var(\htheta)$. With sufficient blocks, $\hV$ is unlikely to fall far below it. The interval therefore has asymptotic coverage of at least $1-\alpha$. When the block effects differ, $\bbE(\hV)$ exceeds $\Var(\htheta)$, and the interval can be conservative. The pooled-anchor estimator requires Assumption \ref{ass:anticipation} only, so the effect may build up over the sessions after the onset. Notably, it is a difference of two ratios, with a bias at most a constant divided by $G$, and an error $\htheta^{\rm f}-\theta$ equaling the linear term $L$ plus a remainder of smaller order. The variance estimator $\hV^{\rm f}$ has the form of $\hV$ with the residuals $\widehat e_i$ in place of the $\htheta_i$, and it uses the factor $(G-1)/(G-2)$ and the quantile $q_{G-2}$ to allow for the two estimated anchors $\hmu_1$ and $\hmu_0$. The replication unit in Theorem \ref{thm:coverage} is the block, and therefore a synthesis of studies that each permute a list of onsets has as many blocks as studies. Suppose that the cases of each such study have the same weight. The weighted sum of the $X_{iw}$ over all cases then equals 1 for every permutation, and $\htheta^{\rm b}$ equals $\htheta^{\rm f}$ (see Section \ref{sec:permuted} of the Supplemental Material for details).

\section{Planning}

This section works under a model for the outcomes. It gives the variances of $\htheta^{\rm b}$ and $\htheta^{\rm c}$ and of the linear term $L$ of $\htheta^{\rm f}$, for use in planning a study or a synthesis. It also shows that a standard deviation computed from a few sessions, when used as the scale, inflates the standardized effect on average. Consider the model
\begin{equation}
    Y_{it}(s)=\phi_i+\tau_iW_{it}(s)+\varepsilon_{it},
    \label{eq:working}
\end{equation}
where the $\varepsilon_{it}$ are uncorrelated across sessions and cases, with mean 0 and variance $\sigma_i^2$, and are independent of the onsets. The onsets of different cases are independent, and the scales $c_i$ are fixed positive numbers. In this section, expectations and variances are taken over the onsets and the $\varepsilon_{it}$. The linear term $L$ of the pooled-anchor estimator depends on $\mu_w$ in Definition \ref{def:estimators}(f), where the expectation is now taken over the onsets and the $\varepsilon_{it}$. Under model \eqref{eq:working}, a treated window outcome exceeds the pre-window mean of its case by $\tau_i$ on average, and an untreated window outcome equals it on average. Hence $\mu_1=\theta$ and $\mu_0=0$. Model \eqref{eq:working} also satisfies Assumptions \ref{ass:anticipation} and \ref{ass:duration}. Let
\begin{equation}
    \psi_i=\frac1{m_i^2}\sum_{t\in\cW_i}\frac1{\pi_{it}(1-\pi_{it})}.
    \label{eq:psi}
\end{equation}

\begin{proposition}
    \label{prop:planning}
    Under model \eqref{eq:working} with independent onsets across cases and fixed scales $c_i$, the pooled estimators $\htheta^{\rm b}$ and $\htheta^{\rm c}$ and the linear term $L$ of the pooled-anchor estimator satisfy
    \begin{align}
        \Var(\htheta^{\rm b})&=\sum_{i=1}^N\omega_i^2\left[\theta_i^2\Var(X_{i1})+\frac{\sigma_i^2}{c_i^2}\left\{\psi_i+\frac{\Var(X_{i1}-X_{i0})}{a_i}\right\}\right],\label{eq:planb}\displaybreak[0]\\
        \Var(\htheta^{\rm c})&=\sum_{i=1}^N\omega_i^2\,\frac{\sigma_i^2}{c_i^2}\left\{\psi_i+\frac{\Var(X_{i1})}{n_i}+\frac{\Var(X_{i0})}{a_i}\right\},\label{eq:planc}\displaybreak[0]\\
        \Var(L)&=\sum_{i=1}^N\omega_i^2\left[(\theta_i-\theta)^2\Var(X_{i1})+\frac{\sigma_i^2}{c_i^2}\left\{\psi_i+\frac{\Var(X_{i1}-X_{i0})}{a_i}\right\}\right].\label{eq:planf}
    \end{align}
\end{proposition}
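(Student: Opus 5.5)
Under model \eqref{eq:working} the onsets are independent across cases and the $\varepsilon_{it}$ are uncorrelated across cases and independent of the onsets. So each of $\htheta^{\rm b}$, $\htheta^{\rm c}$ and $L$ is a weighted sum $\sum_i\omega_iZ_i$ of uncorrelated case-level terms, and the variance is $\sum_i\omega_i^2\Var(Z_i)$. It therefore suffices to compute one case-level variance for each estimator. I do this by writing $Z_i$ in terms of $X_{i1}$, $X_{i0}$ and noise sums, and then applying the law of total variance, conditioning on the onset $S_i$.

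\textbf{Case-level decomposition.} Write $\bar\varepsilon_{i\cA}$ and $\bar\varepsilon_{i\cB}$ for the mean errors over $\cA_i$ and $\cB_i$. Under the model, $\widetilde Y_{it}=\{\theta_ic_iW_{it}+\varepsilon_{it}-\bar\varepsilon_{i\cA}\}/c_i$, so
\begin{equation*}
\htheta^{\rm b}_i=\theta_iX_{i1}+\frac1{c_i}\Bigl\{\frac1{m_i}\sum_{t\in\cW_i}\Bigl(\frac{W_{it}}{\pi_{it}}-\frac{1-W_{it}}{1-\pi_{it}}\Bigr)\varepsilon_{it}-\bar\varepsilon_{i\cA}(X_{i1}-X_{i0})\Bigr\}.
\end{equation*}
Similarly, $\htheta^{\rm c}_i$ equals $\theta_i$ plus $c_i^{-1}$ times the same weighted window-noise term, minus $\bar\varepsilon_{i\cB}(X_{i1}-1)$, plus $\bar\varepsilon_{i\cA}(X_{i0}-1)$. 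The $\tau_i$ terms cancel exactly here because the post-window mean anchors the treated side. For $L$, with $\mu_1=\theta$ and $\mu_0=0$, I get $e_i=H_{i1}-\theta X_{i1}-H_{i0}$. This is $\htheta^{\rm b}_i$ with $\theta_iX_{i1}$ replaced by $(\theta_i-\theta)X_{i1}$.

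\textbf{Variance computation.} Conditional on $S_i$, the signal part ($\theta_iX_{i1}$, or the constant $\theta_i$, or $(\theta_i-\theta)X_{i1}$) is fixed and has conditional mean noise zero. The between-onset variance is therefore $\theta_i^2\Var(X_{i1})$, $0$, and $(\theta_i-\theta)^2\Var(X_{i1})$ respectively.

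For the conditional noise variance, the window errors, $\bar\varepsilon_{i\cA}$ and $\bar\varepsilon_{i\cB}$ involve disjoint sessions, so they are uncorrelated. The window term has conditional variance $\sigma_i^2m_i^{-2}\sum_t\{W_{it}/\pi_{it}^2+(1-W_{it})/(1-\pi_{it})^2\}$, whose expectation over $S_i$ is $\sigma_i^2\psi_i$. The anchor term has conditional variance $(\sigma_i^2/a_i)(X_{i1}-X_{i0})^2$. Since $\bbE(X_{i1})=\bbE(X_{i0})=1$, its expectation is $(\sigma_i^2/a_i)\Var(X_{i1}-X_{i0})$.

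For $\htheta^{\rm c}$, the analogous expectations are $(\sigma_i^2/n_i)\bbE(X_{i1}-1)^2=(\sigma_i^2/n_i)\Var(X_{i1})$ and $(\sigma_i^2/a_i)\Var(X_{i0})$. Dividing by $c_i^2$ and summing with weights $\omega_i^2$ gives \eqref{eq:planb}--\eqref{eq:planf}.

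\textbf{Expected obstacle.} The main care point is the cross terms. The covariance between the signal part and the noise part vanishes because the noise has conditional mean zero given $S_i$. All noise cross terms vanish because their sessions are disjoint and the errors are uncorrelated. I also need to confirm that $\mu_1=\theta$ and $\mu_0=0$ hold exactly under the joint expectation. This requires $\bbE(\sum_i\omega_iX_{i1}\theta_i)=\theta$, which uses $\bbE X_{i1}=1$. This justifies the form of $e_i$ used above.
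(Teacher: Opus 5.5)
Your proposal is correct and follows essentially the same route as the paper. Both use the same signal-plus-noise decompositions of $\htheta^{\rm b}_i$, $\htheta^{\rm c}_i$ and $e_i$ in terms of $X_{i1}$, $X_{i0}$, the weighted window noise and the two anchor means, then apply the law of total variance given the onsets, with the noise cross terms vanishing because their sessions are disjoint. The one step you compress is the vanishing of cross-case covariances. The paper spells this out by the law of total covariance: the conditional covariance is zero because the noise is independent of $\bm S$ and uncorrelated across cases, and the conditional means depend only on the independent onsets $S_i$ and $S_{i'}$. The ingredients you list are exactly what that step needs.
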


The proof is in Section \ref{sec:planning} of the Supplemental Material. The term $\psi_i$ comes from the noise at the window sessions, and for equally likely consecutive onsets it decreases as the window grows. The terms divided by $a_i$ and $n_i$ in \eqref{eq:planb}--\eqref{eq:planf} come from the noise in the pre-window and post-window means, so several sessions on each side of the window reduce them. The sum of treated weights $X_{i1}$ varies with the onset. The average treated weight $X_{i1}$ varies with the onset and has mean 1. In $\htheta^{\rm b}_i$, every treated window outcome exceeds the pre-window mean by $\theta_i$ on average, and thus $\htheta^{\rm b}_i$ equals $\theta_iX_{i1}$ plus noise, which gives the term $\theta_i^2\Var(X_{i1})$ in \eqref{eq:planb}. The pooled-anchor estimator replaces $\theta_i$ by $\theta_i-\theta$ in this term in \eqref{eq:planf}, and $\Var(\htheta^{\rm c})$ in \eqref{eq:planc} does not have it. The variances of $X_{i1}$ and $X_{i0}$ depend on the design only. For $K$ consecutive and equally likely onsets, with $h_K=\sum_{j=1}^{K-1}1/j$, we have
\begin{equation}
    \psi_i=\frac{2Kh_K}{(K-1)^2},\qquad\Var(X_{i1})=\Var(X_{i0})=\frac{K\{2(K-1)-h_K\}}{(K-1)^2}-1.
    \label{eq:uniform}
\end{equation}

Table \ref{tab:plan} illustrates \eqref{eq:planc} and \eqref{eq:planf} for one case under model \eqref{eq:working}. The case has $K$ consecutive and equally likely onsets and the same number of sessions before and after the window, i.e., $a_i=n_i$. The scale is $c_i=\sigma_i$, so effects are measured in units of $\sigma_i$, and all cases have the same standardized effect, resulting in $\theta_i=\theta$. With $c_i=\sigma_i$ and $\theta_i=\theta$, \eqref{eq:planc} and \eqref{eq:planf} become
\begin{align*}
    \Var(\htheta^{\rm c})&=\sum_{i=1}^N\omega_i^2\left\{\psi_i+\frac{\Var(X_{i1})}{n_i}+\frac{\Var(X_{i0})}{a_i}\right\},\displaybreak[0]\\
    \Var(L)&=\sum_{i=1}^N\omega_i^2\left\{\psi_i+\frac{\Var(X_{i1}-X_{i0})}{a_i}\right\}.
\end{align*}
The table lists $\psi_i$ and the square roots of the two terms, and Table \ref{tab:planfull} of the Supplemental Material adds $\Var(X_{i1})$ and $\Var(X_{i1}-X_{i0})$ for these and more designs. With $N$ such cases and equal weights, the standard error of the pooled estimate is the table entry divided by $N^{1/2}$. For example, with eight admissible onsets and five sessions on each side of the window, 20 cases give a standard error of $1.09/20^{1/2}=0.24$ for $\htheta^{\rm c}$ and a standard deviation of $1.21/20^{1/2}=0.27$ for the linear term of the pooled-anchor estimator. Doubling the number of onsets from 4 to 8 reduces the entry of $\htheta^{\rm c}$ from 1.40 to 1.09. The case then has four more sessions, since the table holds $a_i$ and $n_i$ fixed. The last column gives the standard deviation of the phase contrast in the same setting and in units of $\sigma_i$, i.e., $[\bbE\{1/(S_i-1)+1/(T_i-S_i+1)\}]^{1/2}$, which is about half of the entries for $\htheta^{\rm c}$ and $\htheta^{\rm f}$ or less. The phase contrast uses every session of the case in its two means, and its expectation includes the drift of Proposition \ref{prop:contrast}.

\begin{table}[t]
    \caption{Variance Term $\psi_i$ and Standard Deviation Contributed by One Case Under Model \eqref{eq:working}}
    \label{tab:plan}
    \centering
    \small
    \begin{tabular}{cccccc}
        \toprule
        Onsets & Sessions on each side & & \multicolumn{3}{c}{Standard deviation, in units of $\sigma_i$} \\
        \cmidrule(lr){4-6}
        $K$ & of the window, $a_i=n_i$ & $\psi_i$ & $\htheta^{\rm c}$ & $\htheta^{\rm f}$ & Phase contrast \\
        \midrule
        4 & 3 & 1.63 & 1.48 & 1.64 & 0.69 \\
        4 & 5 & 1.63 & 1.40 & 1.51 & 0.56 \\
        4 & 8 & 1.63 & 1.36 & 1.43 & 0.46 \\
        8 & 3 & 0.85 & 1.19 & 1.37 & 0.60 \\
        8 & 5 & 0.85 & 1.09 & 1.21 & 0.50 \\
        8 & 8 & 0.85 & 1.03 & 1.11 & 0.43 \\
        16 & 3 & 0.47 & 1.03 & 1.23 & 0.50 \\
        16 & 5 & 0.47 & 0.91 & 1.05 & 0.44 \\
        16 & 8 & 0.47 & 0.83 & 0.93 & 0.38 \\
        \bottomrule
    \end{tabular}
    \par\smallskip
    \parbox{0.9\textwidth}{\small\textit{Note.} One case under model \eqref{eq:working} with $K$ consecutive and equally likely onsets, $a_i=n_i$ sessions before and after the window, $T_i=2a_i+K-1$ sessions in total, $c_i=\sigma_i$, and the same standardized effect for all cases. The entry for $\htheta^{\rm c}$ is $\{\psi_i+\Var(X_{i1})/n_i+\Var(X_{i0})/a_i\}^{1/2}$, from \eqref{eq:planc}. The entry for $\htheta^{\rm f}$ is $\{\psi_i+\Var(X_{i1}-X_{i0})/a_i\}^{1/2}$, from \eqref{eq:planf}, and it is the standard deviation of the linear term $L$. The variance of $\htheta^{\rm b}$ also depends on the size of the effect and is not listed.}
\end{table}

The estimators $\htheta^{\rm b}$, $\htheta^{\rm c}$ and $\htheta^{\rm f}$ suit different interventions. When all cases have the same standardized effect and $n_i\ge a_i$, the variances are ordered as $\Var(\htheta^{\rm c})\le\Var(L)\le\Var(\htheta^{\rm b})$ (Section \ref{sec:planning} of the Supplemental Material). The estimator $\htheta^{\rm c}$ requires Assumption \ref{ass:duration}, and thus it suits interventions whose effect is present in full at the onset session. When the effect grows with the number of sessions since the onset, an earlier onset gives a larger post-window mean, and $\htau^{\rm c}_i$ has the bias $-\Cov(X_{i1},\overline Y_{i\cB})\le0$ (Section \ref{sec:biasc} of the Supplemental Material). The pooled-anchor estimator requires Assumption \ref{ass:anticipation} only, so we recommend it for such interventions. The estimator $\htheta^{\rm b}$ is unbiased for every $G$ under Assumption \ref{ass:anticipation}, and its variance includes the term $\theta_i^2\Var(X_{i1})$. For $K=8$, $\theta_i=2$ and $c_i=\sigma_i$, this term is $4\times0.862=3.45$ by \eqref{eq:uniform}, and the remaining terms of \eqref{eq:planb} sum to 1.47 with five sessions on each side of the window.

Proposition \ref{prop:planning} treats the scales as fixed numbers. A standard deviation computed from pre-window outcomes is random under model \eqref{eq:working}. Let $\cN_i$ be the set of cases in the same study as case $i$, with case $i$ included. Let $c_i$ be the standard deviation of the pre-window outcomes pooled over the cases in $\cN_i$, $\nu_i$ be its number of degrees of freedom, and $\iota_i$ be a factor that depends on $\nu_i$ only. They are given by
\begin{equation}
    \begin{aligned}
        c_i^2&=\frac1{\nu_i}\sum_{i'\in\cN_i}\sum_{t\in\cA_{i'}}\left(Y_{i't}-\overline Y_{i'\cA}\right)^2,\qquad\nu_i=\sum_{i'\in\cN_i}(a_{i'}-1),\\
        \iota_i&=\frac{\Gamma(\nu_i/2)}{(\nu_i/2)^{1/2}\,\Gamma\{(\nu_i-1)/2\}},
    \end{aligned}
    \label{eq:pooledsd}
\end{equation}
where $\Gamma$ is the gamma function.

\begin{proposition}
    \label{prop:scale}
    Let the $\varepsilon_{i't}$ in model \eqref{eq:working} be independent and normal with the same variance $\sigma^2$ for all cases $i'\in\cN_i$, and let $\nu_i\ge2$. Then, $\bbE(\tau_i/c_i)=\tau_i/(\sigma\iota_i)$.
\end{proposition}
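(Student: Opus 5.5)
The plan is to reduce the claim to the distribution of a sample variance under normal errors and then evaluate a single moment. The proposition is stated under model \eqref{eq:working}, where $\tau_i$ is a fixed number and $c_i$ is random. Hence $\bbE(\tau_i/c_i)=\tau_i\,\bbE(1/c_i)$, and it suffices to show $\bbE(1/c_i)=1/(\sigma\iota_i)$.

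First I would use Assumption \ref{ass:anticipation}, which model \eqref{eq:working} satisfies, to describe the pre-window outcomes. For every case $i'\in\cN_i$ and every $t\in\cA_{i'}$, we have $W_{i't}(s)=0$ for all $s\in\cS_{i'}$, so $Y_{i't}=\phi_{i'}+\varepsilon_{i't}$ under every onset. This gives
\begin{equation*}
Y_{i't}-\overline Y_{i'\cA}=\varepsilon_{i't}-\overline\varepsilon_{i'\cA},
\end{equation*}
so $c_i^2$ is a function of the pre-window errors only and does not depend on the onsets. Its distribution therefore comes from the errors alone.

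Next I would apply Cochran's theorem within each case. The errors are independent $N(0,\sigma^2)$, so $\sum_{t\in\cA_{i'}}(\varepsilon_{i't}-\overline\varepsilon_{i'\cA})^2/\sigma^2\sim\chi^2_{a_{i'}-1}$. These sums are independent across cases because they involve disjoint sets of independent errors. Adding them gives $\nu_ic_i^2/\sigma^2\sim\chi^2_{\nu_i}$. A case with $a_{i'}=1$ contributes zero with zero degrees of freedom, which is consistent with the definition of $\nu_i$.

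The remaining step is the moment computation. Write $U=\nu_ic_i^2/\sigma^2\sim\chi^2_{\nu}$ with $\nu=\nu_i$. Then
\begin{equation*}
\bbE(U^{-1/2})=\frac{\Gamma\{(\nu-1)/2\}}{2^{1/2}\,\Gamma(\nu/2)},
\end{equation*}
which is finite because $\nu\ge2>1$; this is where the condition $\nu_i\ge2$ is used. This value follows from integrating $u^{-1/2}u^{\nu/2-1}e^{-u/2}$ against the normalizing constant $2^{\nu/2}\Gamma(\nu/2)$. Since $1/c_i=\nu^{1/2}U^{-1/2}/\sigma$, we get
\begin{equation*}
\bbE(1/c_i)=\frac{(\nu/2)^{1/2}\,\Gamma\{(\nu-1)/2\}}{\sigma\,\Gamma(\nu/2)}=\frac1{\sigma\iota_i},
\end{equation*}
using the definition of $\iota_i$ in \eqref{eq:pooledsd}.

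None of the steps is hard. The main care point is the justification that $c_i$ involves only pre-window errors and is therefore free of both the onsets and the treatment terms. The moment integral itself is routine.
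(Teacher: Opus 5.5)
Your proposal is correct and follows the paper's proof essentially step for step. Both reduce the pre-window deviations to $\varepsilon_{i't}-\overline\varepsilon_{i'\cA}$ and show that $\nu_ic_i^2/\sigma^2$ is chi-squared with $\nu_i$ degrees of freedom. Both then evaluate $\bbE\{(\chi^2)^{-1/2}\}$ through the gamma integral, with $\nu_i\ge2$ ensuring finiteness, and conclude with $1/c_i=\nu_i^{1/2}/\{\sigma(\chi^2)^{1/2}\}$.
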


The proof is in Section \ref{sec:scale} of the Supplemental Material. The factor $\iota_i$ is the correction factor of \citet[Equation 6e]{Hedges1981} for a standardized mean difference whose standard deviation has $\nu_i$ degrees of freedom. It is smaller than 1, so by Proposition \ref{prop:scale}, the standardized effect $\theta_i=\tau_i/c_i$ is on average $1/\iota_i$ times as large as $\tau_i/\sigma$. This is because $c_i$ varies around $\sigma$ when computed from a few sessions, and a small $c_i$ raises $1/c_i$ by more than a large $c_i$ lowers it. The inflation is large when $\nu_i$ is small. A single case with three pre-window sessions has $\nu_i=2$ and $1/\iota_i=1.77$. Six cases with three pre-window sessions each have $\nu_i=12$ and $1/\iota_i=1.07$. Dividing $c_i$ by $\iota_i$ removes the inflation, because the mean of $\iota_i\tau_i/c_i$ is $\tau_i/\sigma$ by Proposition \ref{prop:scale}. The scale $c_i/\iota_i$ is computed from pre-window outcomes only, so Corollary \ref{cor:standardized} holds for it. For a synthesis in standard deviation units, we therefore use the scale $c_i/\iota_i$.

\section{Simulation Study}

This section reports three simulation studies. Section \ref{sec:simulations} of the Supplemental Material gives their full results and four more studies, on permuted onsets, standardized effects, the number of cases under a common shift, and the scale of Proposition \ref{prop:scale}. In every study, each of the 1,000 replications draws a new population and new onsets.

\subsection{Study 1: Baselines That Change Over Sessions}

This study examines Theorem \ref{thm:unbiased} and Proposition \ref{prop:contrast} by comparing the estimators in Definition \ref{def:estimators} with the phase contrast and with multilevel models when the untreated outcome changes over sessions. There are 24 cases with 20 sessions each, and the onsets are drawn independently and uniformly from sessions 6 to 13. The untreated outcome is a case level drawn from a normal distribution with mean 10 and standard deviation 2, plus a baseline pattern and a first-order autoregressive noise with unit variance and autocorrelation 0.3. The pattern is the same for all cases and is flat, a linear trend of 0.1 per session, or a shift of 1.5 from session 10 onward. The effect of case $i$ is drawn from a normal distribution with mean 2 and standard deviation 0.5 and is present from the onset session on, so Assumptions \ref{ass:anticipation} and \ref{ass:duration} hold. We compute $\htheta^{\rm b}$, $\htheta^{\rm c}$, the pooled-anchor estimator $\htheta^{\rm f}$, and the mean of the phase contrasts with the variance estimator in Definition \ref{def:estimators}(e), all with $c_i=1$ and equal weights. We also fit two multilevel models by restricted maximum likelihood, with a random intercept for each case and first-order autoregressive errors \citep{PinheiroBates2000}. One model includes only the treatment indicator as a fixed effect, and the other adds a linear time trend. Intervals have nominal level 95\%. Table \ref{tab:sim1} reports the bias, the standard deviation of the estimation error, and the coverage of the intervals.

The results validate Theorem \ref{thm:unbiased} and Proposition \ref{prop:contrast}. The bias of $\htheta^{\rm b}$, $\htheta^{\rm c}$, and $\htheta^{\rm f}$ is at most 0.03 in absolute value under every pattern, and the phase contrast is biased when the baseline changes. Under the linear trend, the bias of the phase contrast is 1.00, and Proposition \ref{prop:contrast} gives $\beta_iT_i/2=0.1\times20/2=1$. Under the common shift, it is 1.26. The multilevel model without a trend has biases of 0.88 and 1.08 under these two patterns. The model with a linear trend has a bias of 0.00 under the linear trend and 0.38 under the common shift, and its intervals cover the effect in 56.4\% of replications. The intervals of $\htheta^{\rm b}$, $\htheta^{\rm c}$ and $\htheta^{\rm f}$ cover the effect in 92.3\% to 96.5\% of the replications. The lowest rate is for $\htheta^{\rm f}$ under the common shift, with a bias of 0.02 with 24 cases. Section \ref{sec:sim6} of the Supplemental Material reports this pattern with 6 to 96 cases. The standard deviations of $\htheta^{\rm b}$, $\htheta^{\rm c}$, and $\htheta^{\rm f}$ are larger than those of the phase contrast and the multilevel models under all three baselines, but their biases stay near 0 under every pattern.

\begin{table}[t]
    \caption{Study 1: Bias, Standard Deviation (SD) and Coverage of 95\% Intervals (\%) Under Three Baseline Patterns}
    \label{tab:sim1}
    \centering
    \footnotesize
    \setlength{\tabcolsep}{4pt}
    \begin{tabular}{lccccccccc}
        \toprule
        & \multicolumn{3}{c}{Flat} & \multicolumn{3}{c}{Linear trend} & \multicolumn{3}{c}{Common shift} \\
        \cmidrule(lr){2-4}\cmidrule(lr){5-7}\cmidrule(lr){8-10}
        Method & Bias & SD & Cov. & Bias & SD & Cov. & Bias & SD & Cov. \\
        \midrule
        $\htheta^{\rm b}$ & 0.00 & 0.50 & 94.3 & $-0.01$ & 0.69 & 94.7 & $-0.03$ & 0.67 & 95.8 \\
        $\htheta^{\rm c}$ & 0.01 & 0.26 & 96.5 & 0.00 & 0.29 & 95.1 & 0.00 & 0.37 & 93.6 \\
        $\htheta^{\rm f}$ & 0.00 & 0.31 & 94.8 & 0.01 & 0.32 & 94.4 & 0.02 & 0.37 & 92.3 \\
        Phase contrast & 0.00 & 0.12 & 99.0 & 1.00 & 0.12 & 0.0 & 1.26 & 0.13 & 0.0 \\
        Multilevel, no trend & 0.00 & 0.12 & 96.7 & 0.88 & 0.12 & 0.0 & 1.08 & 0.14 & 0.0 \\
        Multilevel, linear trend & 0.00 & 0.20 & 95.5 & 0.00 & 0.20 & 95.1 & 0.38 & 0.23 & 56.4 \\
        \bottomrule
    \end{tabular}
    \par\smallskip
    \parbox{0.95\textwidth}{\small\textit{Note.} 24 cases, 20 sessions, onsets uniform on sessions 6 to 13, over 1,000 replications. The average pooled effect is 2.00.}
\end{table}

\subsection{Study 2: Number of Cases and Number of Onsets}

This study examines Proposition \ref{prop:planning} and Theorem \ref{thm:coverage} by varying the number of cases and admissible onsets. The outcomes follow model \eqref{eq:working} with $\tau_i=2$ and $\sigma_i=1$ for every case. There are $N=4$, 6, 8, 16, or 32 cases with 20 sessions each, and the onsets are drawn independently and uniformly from $K=4$ or 8 consecutive sessions starting at session 6. We compute $\htheta^{\rm c}$ and $\htheta^{\rm f}$ with $c_i=1$ and equal weights, and intervals with nominal level 95\%. Table \ref{tab:sim2} reports the standard deviation of the estimation error, the standard deviation predicted by Proposition \ref{prop:planning}, and the coverage of the intervals.

The results match Proposition \ref{prop:planning} and Theorem \ref{thm:coverage}. The simulated standard deviations decrease with the number of cases and the number of onsets, as \eqref{eq:planc} and \eqref{eq:planf} state. They are within 5\% of the predicted ones, except for $\htheta^{\rm f}$ with $N=4$ and $K=4$, where the simulated value is 0.805, and the predicted value for its linear term is 0.754. With $K=8$, there are eight post-window sessions, and \eqref{eq:planc} and \eqref{eq:uniform} give $0.847+0.862/8+0.862/5=1.127$ for one case. For $N=8$, the predicted standard deviation of $\htheta^{\rm c}$ is $(1.127/8)^{1/2}=0.375$, and the simulated one is 0.366. For $\htheta^{\rm f}$ with $N=32$ and $K=4$, the two values are 0.266 and 0.261, respectively. The intervals cover the effect in 93.4\% to 97.7\% of the replications, and in 95.2\% to 97.7\% with four or six cases. The Monte Carlo standard error of a coverage rate near 95\% is 0.7 percentage points.

\begin{table}[t]
    \caption{Study 2: Simulated and Predicted Standard Deviation (SD) and Coverage of 95\% Intervals (\%)}
    \label{tab:sim2}
    \centering
    \small
    \begin{tabular}{cccccccc}
        \toprule
        & & \multicolumn{3}{c}{$\htheta^{\rm c}$} & \multicolumn{3}{c}{$\htheta^{\rm f}$} \\
        \cmidrule(lr){3-5}\cmidrule(lr){6-8}
        $K$ & $N$ & SD & Predicted & Cov. & SD & Predicted & Cov. \\
        \midrule
        4 & 4 & 0.684 & 0.684 & 96.1 & 0.805 & 0.754 & 97.7 \\
        4 & 6 & 0.561 & 0.558 & 96.6 & 0.637 & 0.615 & 95.2 \\
        4 & 8 & 0.468 & 0.484 & 95.3 & 0.528 & 0.533 & 94.0 \\
        4 & 16 & 0.336 & 0.342 & 95.3 & 0.375 & 0.377 & 95.5 \\
        4 & 32 & 0.238 & 0.242 & 95.4 & 0.261 & 0.266 & 95.4 \\
        8 & 4 & 0.537 & 0.531 & 96.7 & 0.580 & 0.606 & 97.1 \\
        8 & 6 & 0.428 & 0.433 & 95.6 & 0.471 & 0.494 & 95.6 \\
        8 & 8 & 0.366 & 0.375 & 95.7 & 0.421 & 0.428 & 95.0 \\
        8 & 16 & 0.274 & 0.265 & 94.9 & 0.306 & 0.303 & 93.4 \\
        8 & 32 & 0.196 & 0.188 & 94.0 & 0.224 & 0.214 & 94.3 \\
        \bottomrule
    \end{tabular}
    \par\smallskip
    \parbox{0.9\textwidth}{\small\textit{Note.} 20 sessions, five pre-window sessions, $K$ onsets starting at session 6, over 1,000 replications. Predicted values are the square roots of \eqref{eq:planc} and \eqref{eq:planf}.}
\end{table}

\subsection{Study 3: Effects That Build Up Over Sessions}

This study checks Theorems \ref{thm:unbiased} and \ref{thm:coverage}, and the bias of $\htheta^{\rm c}$ when the effect builds up over the sessions after the onset. The cases, sessions, onsets, and untreated outcomes are those of Study 1 with the flat baseline. Under onset $s$, the effect of case $i$ at session $t\ge s$ is its full effect times $\min\{1,(t-s+1)/R\}$, with $R=1$, 3 or 6, so the effect reaches its full size at the $R$th treated session. Assumption \ref{ass:anticipation} holds, and Assumption \ref{ass:duration} holds only for $R=1$. We compute $\htheta^{\rm b}$, $\htheta^{\rm c}$ and $\htheta^{\rm f}$ with $c_i=1$ and equal weights, and intervals with nominal level 95\%. The bias of $\htheta^{\rm c}$ given by Proposition \ref{prop:biasc} of the Supplemental Material is computed for each population and averaged over the replications. Table \ref{tab:sim3} reports the mean pooled effect, the bias with its Monte Carlo standard error, this predicted bias, the standard deviation of the estimation error, and the coverage of the intervals.

The results corroborate Theorems \ref{thm:unbiased} and \ref{thm:coverage} and Proposition \ref{prop:biasc}. The estimators $\htheta^{\rm b}$ and $\htheta^{\rm f}$, which require Assumption \ref{ass:anticipation} only, have biases within two Monte Carlo standard errors of 0 for every $R$, and $\htheta^{\rm c}$ is biased downward when the effect builds up. The bias of $\htheta^{\rm c}$ is $-0.038$ for $R=3$ and $-0.143$ for $R=6$, and Proposition \ref{prop:biasc} gives $-0.040$ and $-0.149$. The intervals of $\htheta^{\rm f}$ cover the pooled effect in 95.1\% to 95.2\% of the replications, and those of $\htheta^{\rm c}$ in 93.9\% for $R=6$. For $R=6$, $\htheta^{\rm f}$ also has the smallest standard deviation, 0.296 versus 0.319 for $\htheta^{\rm c}$.

\begin{table}[t]
    \caption{Study 3: Bias With Its Monte Carlo Standard Error (MCSE), Bias of $\htheta^{\rm c}$ Given by Proposition \ref{prop:biasc}, Standard Deviation (SD) and Coverage of 95\% Intervals (\%)}
    \label{tab:sim3}
    \centering
    \small
    \begin{tabular}{cclccccc}
        \toprule
        $R$ & Mean $\theta$ & Method & Bias & MCSE & Predicted bias & SD & Cov. \\
        \midrule
        1 & 1.996 & $\htheta^{\rm b}$ & 0.016 & 0.015 & & 0.485 & 95.4 \\
        & & $\htheta^{\rm c}$ & 0.013 & 0.008 & 0.000 & 0.248 & 96.5 \\
        & & $\htheta^{\rm f}$ & 0.013 & 0.010 & & 0.305 & 95.1 \\
        3 & 1.354 & $\htheta^{\rm b}$ & 0.013 & 0.013 & & 0.407 & 95.5 \\
        & & $\htheta^{\rm c}$ & $-0.038$ & 0.009 & $-0.040$ & 0.274 & 95.9 \\
        & & $\htheta^{\rm f}$ & 0.005 & 0.009 & & 0.295 & 95.2 \\
        6 & 0.827 & $\htheta^{\rm b}$ & $-0.005$ & 0.011 & & 0.345 & 94.7 \\
        & & $\htheta^{\rm c}$ & $-0.143$ & 0.010 & $-0.149$ & 0.319 & 93.9 \\
        & & $\htheta^{\rm f}$ & $-0.004$ & 0.009 & & 0.296 & 95.2 \\
        \bottomrule
    \end{tabular}
    \par\smallskip
    \parbox{0.9\textwidth}{\small\textit{Note.} 24 cases, 20 sessions, flat baseline, onsets uniform on sessions 6 to 13, 1,000 replications. The effect reaches its full size at the $R$th treated session.}
\end{table}

\section{Three Studies of Reading and Writing Interventions}

\subsection{Design and Data}

We use three multiple-baseline studies that drew the onset of each student at random. Section \ref{sec:application} of the Supplemental Material presents their designs, estimates for each student, and analyses with equal study weights. \citet{Grunke2013} taught story mapping to six students aged 10 to 14 with reading comprehension problems and recorded the number of correct answers to ten comprehension questions on 18 school days. The onset of each student was drawn from sessions 4 to 9. \citet{Grunke2015} replicated the study with five elementary students with learning disabilities. We used its 15 baseline and intervention sessions, with onsets drawn from sessions 4 to 11. \citet{Grunke2019} taught a planning strategy for writing to four fourth graders and recorded the total number of words written on 12 school days, with onsets drawn from sessions 6 to 10. We refer to these three studies as A, B, and C, respectively. We took the outcomes of every session from the tables of the articles. Each article states that the onset was drawn at random for each student. We take the onsets to be independent across students and equally likely over the admissible sessions, so the 15 students are 15 blocks. In Study C, two students started at session 6 and two at session 8, and a second analysis takes these two pairs as two blocks. There are three pre-window sessions in Studies A and B and five in Study C. Figure \ref{fig:data} shows the 15 series.

\begin{figure}[t]
    \caption{Outcomes of the 15 Students in Studies A, B, and C}
    \label{fig:data}
    \centering
    \begin{tikzpicture}[-,thin,x=1cm,y=1cm,font=\linespread{1}\selectfont\sffamily\scriptsize,line cap=round,line join=round]
        \coordinate (m0) at (0.12,0.95);
        \filldraw[fill=white,draw=vizsec,line width=0.7pt] (m0) circle (1.7pt);
        \node[right,text=vizsec,inner sep=0pt,font=\sffamily\footnotesize] (l0) at ([xshift=0.2cm]m0) {\strut Baseline session};
        \path (l0.east) ++(0.75,0) coordinate (m1);
        \filldraw[fill=vizblue,draw=white,line width=0.5pt] (m1) circle (2.1pt);
        \node[right,text=vizsec,inner sep=0pt,font=\sffamily\footnotesize] (l1) at ([xshift=0.2cm]m1) {\strut Intervention session};
        \path (l1.east) ++(0.9,0) coordinate (m2);
        \fill[vizband,rounded corners=1.2pt] ([xshift=-0.2cm,yshift=-0.13cm]m2) rectangle ([xshift=0.2cm,yshift=0.13cm]m2);
        \node[right,text=vizsec,inner sep=0pt,font=\sffamily\footnotesize] (l2) at ([xshift=0.3cm]m2) {\strut Window sessions};
        \path (l2.east) ++(0.75,0) coordinate (m3);
        \draw[vizorange,line width=1.1pt] ([yshift=-0.15cm]m3) -- ([yshift=0.15cm]m3);
        \node[right,text=vizsec,inner sep=0pt,font=\sffamily\footnotesize] (l3) at ([xshift=0.2cm]m3) {\strut Onset};
        \path (l3.east) ++(0.75,0) coordinate (m4);
        \node[right,text=vizink,inner sep=0pt,font=\sffamily\footnotesize] at (0,-0.14) {\textbf{Study A}\enspace\textcolor{vizsec}{Correct answers out of 10}};
        \fill[vizband,rounded corners=1.5pt] (0.381,-2.68) rectangle (1.015,-0.78);
        \draw[vizaxis,line width=0.35pt] (0,-2.6) -- (2.283,-2.6);
        \node[left,text=vizmuted,inner sep=1.5pt] at (-0.02,-2.6) {0};
        \draw[vizgrid,line width=0.35pt] (0,-1.82) -- (2.283,-1.82);
        \node[left,text=vizmuted,inner sep=1.5pt] at (-0.02,-1.82) {5};
        \draw[vizgrid,line width=0.35pt] (0,-1.04) -- (2.283,-1.04);
        \node[left,text=vizmuted,inner sep=1.5pt] at (-0.02,-1.04) {10};
        \draw[vizorange,line width=1.1pt] (0.507,-2.6) -- (0.507,-0.78);
        \draw[vizmuted,line width=0.6pt] (0.063,-1.976) -- (0.19,-2.132) -- (0.317,-2.132) -- (0.444,-2.288) -- (0.571,-1.508) -- (0.698,-1.352) -- (0.825,-1.04) -- (0.951,-1.196) -- (1.078,-1.04) -- (1.205,-1.04) -- (1.332,-1.04) -- (1.459,-1.04) -- (1.586,-1.196) -- (1.713,-1.196) -- (1.839,-1.196) -- (1.966,-1.352) -- (2.093,-1.196) -- (2.22,-1.352);
        \foreach \p in {(0.063,-1.976),(0.19,-2.132),(0.317,-2.132),(0.444,-2.288)} { \filldraw[fill=white,draw=vizsec,line width=0.7pt] \p circle (1.7pt); }
        \foreach \p in {(0.571,-1.508),(0.698,-1.352),(0.825,-1.04),(0.951,-1.196),(1.078,-1.04),(1.205,-1.04),(1.332,-1.04),(1.459,-1.04),(1.586,-1.196),(1.713,-1.196),(1.839,-1.196),(1.966,-1.352),(2.093,-1.196),(2.22,-1.352)} { \filldraw[fill=vizblue,draw=white,line width=0.5pt] \p circle (2.1pt); }
        \node[anchor=base west,text=vizink,inner sep=0pt] at (0.03,-0.66) {Anna};
        \node[below,text=vizmuted,inner sep=2pt] at (0.063,-2.7) {1};
        \node[below,text=vizmuted,inner sep=2pt] at (0.571,-2.7) {5};
        \node[below,text=vizmuted,inner sep=2pt] at (2.22,-2.7) {18};
        \fill[vizband,rounded corners=1.5pt] (3.004,-2.68) rectangle (3.638,-0.78);
        \draw[vizaxis,line width=0.35pt] (2.623,-2.6) -- (4.907,-2.6);
        \draw[vizgrid,line width=0.35pt] (2.623,-1.82) -- (4.907,-1.82);
        \draw[vizgrid,line width=0.35pt] (2.623,-1.04) -- (4.907,-1.04);
        \draw[vizorange,line width=1.1pt] (3.511,-2.6) -- (3.511,-0.78);
        \draw[vizmuted,line width=0.6pt] (2.687,-2.132) -- (2.814,-1.976) -- (2.94,-2.288) -- (3.067,-2.288) -- (3.194,-2.132) -- (3.321,-1.976) -- (3.448,-2.132) -- (3.575,-1.196) -- (3.702,-1.196) -- (3.828,-1.352) -- (3.955,-1.196) -- (4.082,-1.352) -- (4.209,-1.04) -- (4.336,-1.196) -- (4.463,-1.508) -- (4.59,-1.196) -- (4.716,-1.04) -- (4.843,-1.196);
        \foreach \p in {(2.687,-2.132),(2.814,-1.976),(2.94,-2.288),(3.067,-2.288),(3.194,-2.132),(3.321,-1.976),(3.448,-2.132)} { \filldraw[fill=white,draw=vizsec,line width=0.7pt] \p circle (1.7pt); }
        \foreach \p in {(3.575,-1.196),(3.702,-1.196),(3.828,-1.352),(3.955,-1.196),(4.082,-1.352),(4.209,-1.04),(4.336,-1.196),(4.463,-1.508),(4.59,-1.196),(4.716,-1.04),(4.843,-1.196)} { \filldraw[fill=vizblue,draw=white,line width=0.5pt] \p circle (2.1pt); }
        \node[anchor=base west,text=vizink,inner sep=0pt] at (2.653,-0.66) {Bella};
        \node[below,text=vizmuted,inner sep=2pt] at (2.687,-2.7) {1};
        \node[below,text=vizmuted,inner sep=2pt] at (3.575,-2.7) {8};
        \node[below,text=vizmuted,inner sep=2pt] at (4.843,-2.7) {18};
        \fill[vizband,rounded corners=1.5pt] (5.627,-2.68) rectangle (6.261,-0.78);
        \draw[vizaxis,line width=0.35pt] (5.247,-2.6) -- (7.53,-2.6);
        \draw[vizgrid,line width=0.35pt] (5.247,-1.82) -- (7.53,-1.82);
        \draw[vizgrid,line width=0.35pt] (5.247,-1.04) -- (7.53,-1.04);
        \draw[vizorange,line width=1.1pt] (5.754,-2.6) -- (5.754,-0.78);
        \draw[vizmuted,line width=0.6pt] (5.31,-1.664) -- (5.437,-1.82) -- (5.564,-1.976) -- (5.691,-1.82) -- (5.817,-1.352) -- (5.944,-1.04) -- (6.071,-1.04) -- (6.198,-1.352) -- (6.325,-1.352) -- (6.452,-1.04) -- (6.579,-1.196) -- (6.705,-1.196) -- (6.832,-1.04) -- (6.959,-1.196) -- (7.086,-1.196) -- (7.213,-1.04) -- (7.34,-1.196) -- (7.467,-1.196);
        \foreach \p in {(5.31,-1.664),(5.437,-1.82),(5.564,-1.976),(5.691,-1.82)} { \filldraw[fill=white,draw=vizsec,line width=0.7pt] \p circle (1.7pt); }
        \foreach \p in {(5.817,-1.352),(5.944,-1.04),(6.071,-1.04),(6.198,-1.352),(6.325,-1.352),(6.452,-1.04),(6.579,-1.196),(6.705,-1.196),(6.832,-1.04),(6.959,-1.196),(7.086,-1.196),(7.213,-1.04),(7.34,-1.196),(7.467,-1.196)} { \filldraw[fill=vizblue,draw=white,line width=0.5pt] \p circle (2.1pt); }
        \node[anchor=base west,text=vizink,inner sep=0pt] at (5.277,-0.66) {Christina};
        \node[below,text=vizmuted,inner sep=2pt] at (5.31,-2.7) {1};
        \node[below,text=vizmuted,inner sep=2pt] at (5.817,-2.7) {5};
        \node[below,text=vizmuted,inner sep=2pt] at (7.467,-2.7) {18};
        \fill[vizband,rounded corners=1.5pt] (8.251,-2.68) rectangle (8.885,-0.78);
        \draw[vizaxis,line width=0.35pt] (7.87,-2.6) -- (10.153,-2.6);
        \draw[vizgrid,line width=0.35pt] (7.87,-1.82) -- (10.153,-1.82);
        \draw[vizgrid,line width=0.35pt] (7.87,-1.04) -- (10.153,-1.04);
        \draw[vizorange,line width=1.1pt] (8.631,-2.6) -- (8.631,-0.78);
        \draw[vizmuted,line width=0.6pt] (7.933,-1.976) -- (8.06,-1.82) -- (8.187,-1.976) -- (8.314,-2.132) -- (8.441,-1.976) -- (8.568,-2.132) -- (8.695,-1.352) -- (8.821,-1.196) -- (8.948,-1.196) -- (9.075,-1.196) -- (9.202,-1.352) -- (9.329,-1.04) -- (9.456,-1.196) -- (9.582,-1.352) -- (9.709,-1.04) -- (9.836,-1.196) -- (9.963,-1.352) -- (10.09,-1.196);
        \foreach \p in {(7.933,-1.976),(8.06,-1.82),(8.187,-1.976),(8.314,-2.132),(8.441,-1.976),(8.568,-2.132)} { \filldraw[fill=white,draw=vizsec,line width=0.7pt] \p circle (1.7pt); }
        \foreach \p in {(8.695,-1.352),(8.821,-1.196),(8.948,-1.196),(9.075,-1.196),(9.202,-1.352),(9.329,-1.04),(9.456,-1.196),(9.582,-1.352),(9.709,-1.04),(9.836,-1.196),(9.963,-1.352),(10.09,-1.196)} { \filldraw[fill=vizblue,draw=white,line width=0.5pt] \p circle (2.1pt); }
        \node[anchor=base west,text=vizink,inner sep=0pt] at (7.9,-0.66) {Dunja};
        \node[below,text=vizmuted,inner sep=2pt] at (7.933,-2.7) {1};
        \node[below,text=vizmuted,inner sep=2pt] at (8.695,-2.7) {7};
        \node[below,text=vizmuted,inner sep=2pt] at (10.09,-2.7) {18};
        \fill[vizband,rounded corners=1.5pt] (10.874,-2.68) rectangle (11.508,-0.78);
        \draw[vizaxis,line width=0.35pt] (10.493,-2.6) -- (12.777,-2.6);
        \draw[vizgrid,line width=0.35pt] (10.493,-1.82) -- (12.777,-1.82);
        \draw[vizgrid,line width=0.35pt] (10.493,-1.04) -- (12.777,-1.04);
        \draw[vizorange,line width=1.1pt] (11.128,-2.6) -- (11.128,-0.78);
        \draw[vizmuted,line width=0.6pt] (10.557,-1.976) -- (10.684,-1.82) -- (10.81,-1.82) -- (10.937,-1.976) -- (11.064,-1.82) -- (11.191,-1.352) -- (11.318,-1.196) -- (11.445,-1.196) -- (11.572,-1.352) -- (11.698,-1.04) -- (11.825,-1.04) -- (11.952,-1.04) -- (12.079,-1.352) -- (12.206,-1.04) -- (12.333,-1.196) -- (12.46,-1.196) -- (12.586,-1.04) -- (12.713,-1.352);
        \foreach \p in {(10.557,-1.976),(10.684,-1.82),(10.81,-1.82),(10.937,-1.976),(11.064,-1.82)} { \filldraw[fill=white,draw=vizsec,line width=0.7pt] \p circle (1.7pt); }
        \foreach \p in {(11.191,-1.352),(11.318,-1.196),(11.445,-1.196),(11.572,-1.352),(11.698,-1.04),(11.825,-1.04),(11.952,-1.04),(12.079,-1.352),(12.206,-1.04),(12.333,-1.196),(12.46,-1.196),(12.586,-1.04),(12.713,-1.352)} { \filldraw[fill=vizblue,draw=white,line width=0.5pt] \p circle (2.1pt); }
        \node[anchor=base west,text=vizink,inner sep=0pt] at (10.523,-0.66) {Egor};
        \node[below,text=vizmuted,inner sep=2pt] at (10.557,-2.7) {1};
        \node[below,text=vizmuted,inner sep=2pt] at (11.191,-2.7) {6};
        \node[below,text=vizmuted,inner sep=2pt] at (12.713,-2.7) {18};
        \fill[vizband,rounded corners=1.5pt] (13.497,-2.68) rectangle (14.131,-0.78);
        \draw[vizaxis,line width=0.35pt] (13.117,-2.6) -- (15.4,-2.6);
        \draw[vizgrid,line width=0.35pt] (13.117,-1.82) -- (15.4,-1.82);
        \draw[vizgrid,line width=0.35pt] (13.117,-1.04) -- (15.4,-1.04);
        \draw[vizorange,line width=1.1pt] (14.131,-2.6) -- (14.131,-0.78);
        \draw[vizmuted,line width=0.6pt] (13.18,-1.976) -- (13.307,-1.82) -- (13.434,-1.82) -- (13.561,-2.132) -- (13.687,-1.976) -- (13.814,-1.976) -- (13.941,-2.132) -- (14.068,-1.82) -- (14.195,-1.04) -- (14.322,-1.352) -- (14.449,-1.196) -- (14.575,-1.04) -- (14.702,-1.196) -- (14.829,-1.04) -- (14.956,-1.196) -- (15.083,-1.508) -- (15.21,-1.196) -- (15.337,-1.352);
        \foreach \p in {(13.18,-1.976),(13.307,-1.82),(13.434,-1.82),(13.561,-2.132),(13.687,-1.976),(13.814,-1.976),(13.941,-2.132),(14.068,-1.82)} { \filldraw[fill=white,draw=vizsec,line width=0.7pt] \p circle (1.7pt); }
        \foreach \p in {(14.195,-1.04),(14.322,-1.352),(14.449,-1.196),(14.575,-1.04),(14.702,-1.196),(14.829,-1.04),(14.956,-1.196),(15.083,-1.508),(15.21,-1.196),(15.337,-1.352)} { \filldraw[fill=vizblue,draw=white,line width=0.5pt] \p circle (2.1pt); }
        \node[anchor=base west,text=vizink,inner sep=0pt] at (13.147,-0.66) {Fabian};
        \node[below,text=vizmuted,inner sep=2pt] at (13.18,-2.7) {1};
        \node[below,text=vizmuted,inner sep=2pt] at (14.195,-2.7) {9};
        \node[below,text=vizmuted,inner sep=2pt] at (15.337,-2.7) {18};
        \node[right,text=vizink,inner sep=0pt,font=\sffamily\footnotesize] at (0,-3.62) {\textbf{Study B}\enspace\textcolor{vizsec}{Correct answers out of 10}};
        \fill[vizband,rounded corners=1.5pt] (0.562,-6.16) rectangle (1.872,-4.26);
        \draw[vizaxis,line width=0.35pt] (0,-6.08) -- (2.808,-6.08);
        \node[left,text=vizmuted,inner sep=1.5pt] at (-0.02,-6.08) {0};
        \draw[vizgrid,line width=0.35pt] (0,-5.3) -- (2.808,-5.3);
        \node[left,text=vizmuted,inner sep=1.5pt] at (-0.02,-5.3) {5};
        \draw[vizgrid,line width=0.35pt] (0,-4.52) -- (2.808,-4.52);
        \node[left,text=vizmuted,inner sep=1.5pt] at (-0.02,-4.52) {10};
        \draw[vizorange,line width=1.1pt] (0.749,-6.08) -- (0.749,-4.26);
        \draw[vizmuted,line width=0.6pt] (0.094,-5.612) -- (0.281,-6.08) -- (0.468,-5.924) -- (0.655,-5.924) -- (0.842,-5.456) -- (1.03,-5.612) -- (1.217,-4.52) -- (1.404,-4.832) -- (1.591,-4.832) -- (1.778,-5.144) -- (1.966,-4.676) -- (2.153,-4.676) -- (2.34,-4.832) -- (2.527,-5.144) -- (2.714,-4.52);
        \foreach \p in {(0.094,-5.612),(0.281,-6.08),(0.468,-5.924),(0.655,-5.924)} { \filldraw[fill=white,draw=vizsec,line width=0.7pt] \p circle (1.7pt); }
        \foreach \p in {(0.842,-5.456),(1.03,-5.612),(1.217,-4.52),(1.404,-4.832),(1.591,-4.832),(1.778,-5.144),(1.966,-4.676),(2.153,-4.676),(2.34,-4.832),(2.527,-5.144),(2.714,-4.52)} { \filldraw[fill=vizblue,draw=white,line width=0.5pt] \p circle (2.1pt); }
        \node[anchor=base west,text=vizink,inner sep=0pt] at (0.03,-4.14) {Asena};
        \node[below,text=vizmuted,inner sep=2pt] at (0.094,-6.18) {1};
        \node[below,text=vizmuted,inner sep=2pt] at (0.842,-6.18) {5};
        \node[below,text=vizmuted,inner sep=2pt] at (2.714,-6.18) {15};
        \fill[vizband,rounded corners=1.5pt] (3.71,-6.16) rectangle (5.02,-4.26);
        \draw[vizaxis,line width=0.35pt] (3.148,-6.08) -- (5.956,-6.08);
        \draw[vizgrid,line width=0.35pt] (3.148,-5.3) -- (5.956,-5.3);
        \draw[vizgrid,line width=0.35pt] (3.148,-4.52) -- (5.956,-4.52);
        \draw[vizorange,line width=1.1pt] (4.646,-6.08) -- (4.646,-4.26);
        \draw[vizmuted,line width=0.6pt] (3.242,-5.768) -- (3.429,-5.612) -- (3.616,-6.08) -- (3.803,-5.924) -- (3.99,-5.612) -- (4.178,-5.612) -- (4.365,-5.924) -- (4.552,-5.768) -- (4.739,-4.52) -- (4.926,-4.52) -- (5.114,-4.52) -- (5.301,-4.52) -- (5.488,-4.52) -- (5.675,-4.676) -- (5.862,-4.52);
        \foreach \p in {(3.242,-5.768),(3.429,-5.612),(3.616,-6.08),(3.803,-5.924),(3.99,-5.612),(4.178,-5.612),(4.365,-5.924),(4.552,-5.768)} { \filldraw[fill=white,draw=vizsec,line width=0.7pt] \p circle (1.7pt); }
        \foreach \p in {(4.739,-4.52),(4.926,-4.52),(5.114,-4.52),(5.301,-4.52),(5.488,-4.52),(5.675,-4.676),(5.862,-4.52)} { \filldraw[fill=vizblue,draw=white,line width=0.5pt] \p circle (2.1pt); }
        \node[anchor=base west,text=vizink,inner sep=0pt] at (3.178,-4.14) {Eman};
        \node[below,text=vizmuted,inner sep=2pt] at (3.242,-6.18) {1};
        \node[below,text=vizmuted,inner sep=2pt] at (4.739,-6.18) {9};
        \node[below,text=vizmuted,inner sep=2pt] at (5.862,-6.18) {15};
        \fill[vizband,rounded corners=1.5pt] (6.858,-6.16) rectangle (8.168,-4.26);
        \draw[vizaxis,line width=0.35pt] (6.296,-6.08) -- (9.104,-6.08);
        \draw[vizgrid,line width=0.35pt] (6.296,-5.3) -- (9.104,-5.3);
        \draw[vizgrid,line width=0.35pt] (6.296,-4.52) -- (9.104,-4.52);
        \draw[vizorange,line width=1.1pt] (7.045,-6.08) -- (7.045,-4.26);
        \draw[vizmuted,line width=0.6pt] (6.39,-5.768) -- (6.577,-6.08) -- (6.764,-5.3) -- (6.951,-5.924) -- (7.138,-5.612) -- (7.326,-5.144) -- (7.513,-4.832) -- (7.7,-4.676) -- (7.887,-4.988) -- (8.074,-5.144) -- (8.262,-4.52) -- (8.449,-4.832) -- (8.636,-4.832) -- (8.823,-5.3) -- (9.01,-4.52);
        \foreach \p in {(6.39,-5.768),(6.577,-6.08),(6.764,-5.3),(6.951,-5.924)} { \filldraw[fill=white,draw=vizsec,line width=0.7pt] \p circle (1.7pt); }
        \foreach \p in {(7.138,-5.612),(7.326,-5.144),(7.513,-4.832),(7.7,-4.676),(7.887,-4.988),(8.074,-5.144),(8.262,-4.52),(8.449,-4.832),(8.636,-4.832),(8.823,-5.3),(9.01,-4.52)} { \filldraw[fill=vizblue,draw=white,line width=0.5pt] \p circle (2.1pt); }
        \node[anchor=base west,text=vizink,inner sep=0pt] at (6.326,-4.14) {Julia};
        \node[below,text=vizmuted,inner sep=2pt] at (6.39,-6.18) {1};
        \node[below,text=vizmuted,inner sep=2pt] at (7.138,-6.18) {5};
        \node[below,text=vizmuted,inner sep=2pt] at (9.01,-6.18) {15};
        \fill[vizband,rounded corners=1.5pt] (10.006,-6.16) rectangle (11.316,-4.26);
        \draw[vizaxis,line width=0.35pt] (9.444,-6.08) -- (12.252,-6.08);
        \draw[vizgrid,line width=0.35pt] (9.444,-5.3) -- (12.252,-5.3);
        \draw[vizgrid,line width=0.35pt] (9.444,-4.52) -- (12.252,-4.52);
        \draw[vizorange,line width=1.1pt] (10.193,-6.08) -- (10.193,-4.26);
        \draw[vizmuted,line width=0.6pt] (9.538,-5.456) -- (9.725,-6.08) -- (9.912,-5.3) -- (10.099,-5.768) -- (10.286,-5.3) -- (10.474,-4.988) -- (10.661,-4.676) -- (10.848,-4.832) -- (11.035,-4.52) -- (11.222,-4.832) -- (11.41,-4.832) -- (11.597,-4.52) -- (11.784,-4.676) -- (11.971,-4.832) -- (12.158,-4.52);
        \foreach \p in {(9.538,-5.456),(9.725,-6.08),(9.912,-5.3),(10.099,-5.768)} { \filldraw[fill=white,draw=vizsec,line width=0.7pt] \p circle (1.7pt); }
        \foreach \p in {(10.286,-5.3),(10.474,-4.988),(10.661,-4.676),(10.848,-4.832),(11.035,-4.52),(11.222,-4.832),(11.41,-4.832),(11.597,-4.52),(11.784,-4.676),(11.971,-4.832),(12.158,-4.52)} { \filldraw[fill=vizblue,draw=white,line width=0.5pt] \p circle (2.1pt); }
        \node[anchor=base west,text=vizink,inner sep=0pt] at (9.474,-4.14) {Leon};
        \node[below,text=vizmuted,inner sep=2pt] at (9.538,-6.18) {1};
        \node[below,text=vizmuted,inner sep=2pt] at (10.286,-6.18) {5};
        \node[below,text=vizmuted,inner sep=2pt] at (12.158,-6.18) {15};
        \fill[vizband,rounded corners=1.5pt] (13.154,-6.16) rectangle (14.464,-4.26);
        \draw[vizaxis,line width=0.35pt] (12.592,-6.08) -- (15.4,-6.08);
        \draw[vizgrid,line width=0.35pt] (12.592,-5.3) -- (15.4,-5.3);
        \draw[vizgrid,line width=0.35pt] (12.592,-4.52) -- (15.4,-4.52);
        \draw[vizorange,line width=1.1pt] (13.528,-6.08) -- (13.528,-4.26);
        \draw[vizmuted,line width=0.6pt] (12.686,-5.768) -- (12.873,-5.924) -- (13.06,-5.612) -- (13.247,-6.08) -- (13.434,-5.768) -- (13.622,-4.52) -- (13.809,-4.832) -- (13.996,-4.52) -- (14.183,-4.52) -- (14.37,-4.52) -- (14.558,-4.52) -- (14.745,-4.52) -- (14.932,-4.52) -- (15.119,-4.52) -- (15.306,-4.52);
        \foreach \p in {(12.686,-5.768),(12.873,-5.924),(13.06,-5.612),(13.247,-6.08),(13.434,-5.768)} { \filldraw[fill=white,draw=vizsec,line width=0.7pt] \p circle (1.7pt); }
        \foreach \p in {(13.622,-4.52),(13.809,-4.832),(13.996,-4.52),(14.183,-4.52),(14.37,-4.52),(14.558,-4.52),(14.745,-4.52),(14.932,-4.52),(15.119,-4.52),(15.306,-4.52)} { \filldraw[fill=vizblue,draw=white,line width=0.5pt] \p circle (2.1pt); }
        \node[anchor=base west,text=vizink,inner sep=0pt] at (12.622,-4.14) {Marvin};
        \node[below,text=vizmuted,inner sep=2pt] at (12.686,-6.18) {1};
        \node[below,text=vizmuted,inner sep=2pt] at (13.622,-6.18) {6};
        \node[below,text=vizmuted,inner sep=2pt] at (15.306,-6.18) {15};
        \node[right,text=vizink,inner sep=0pt,font=\sffamily\footnotesize] at (0,-7.1) {\textbf{Study C}\enspace\textcolor{vizsec}{Words written}};
        \fill[vizband,rounded corners=1.5pt] (1.498,-9.64) rectangle (2.696,-7.74);
        \draw[vizaxis,line width=0.35pt] (0,-9.56) -- (3.595,-9.56);
        \node[left,text=vizmuted,inner sep=1.5pt] at (-0.02,-9.56) {0};
        \draw[vizgrid,line width=0.35pt] (0,-8.78) -- (3.595,-8.78);
        \node[left,text=vizmuted,inner sep=1.5pt] at (-0.02,-8.78) {100};
        \draw[vizgrid,line width=0.35pt] (0,-8) -- (3.595,-8);
        \node[left,text=vizmuted,inner sep=1.5pt] at (-0.02,-8) {200};
        \draw[vizorange,line width=1.1pt] (1.498,-9.56) -- (1.498,-7.74);
        \draw[vizmuted,line width=0.6pt] (0.15,-9.123) -- (0.449,-9.201) -- (0.749,-9.225) -- (1.049,-9.217) -- (1.348,-9.17) -- (1.648,-8.944) -- (1.947,-8.593) -- (2.247,-8.616) -- (2.546,-8.554) -- (2.846,-8.499) -- (3.146,-8.086) -- (3.445,-8.437);
        \foreach \p in {(0.15,-9.123),(0.449,-9.201),(0.749,-9.225),(1.049,-9.217),(1.348,-9.17)} { \filldraw[fill=white,draw=vizsec,line width=0.7pt] \p circle (1.7pt); }
        \foreach \p in {(1.648,-8.944),(1.947,-8.593),(2.247,-8.616),(2.546,-8.554),(2.846,-8.499),(3.146,-8.086),(3.445,-8.437)} { \filldraw[fill=vizblue,draw=white,line width=0.5pt] \p circle (2.1pt); }
        \node[anchor=base west,text=vizink,inner sep=0pt] at (0.03,-7.62) {Adal};
        \node[below,text=vizmuted,inner sep=2pt] at (0.15,-9.66) {1};
        \node[below,text=vizmuted,inner sep=2pt] at (1.648,-9.66) {6};
        \node[below,text=vizmuted,inner sep=2pt] at (3.445,-9.66) {12};
        \fill[vizband,rounded corners=1.5pt] (5.433,-9.64) rectangle (6.631,-7.74);
        \draw[vizaxis,line width=0.35pt] (3.935,-9.56) -- (7.53,-9.56);
        \draw[vizgrid,line width=0.35pt] (3.935,-8.78) -- (7.53,-8.78);
        \draw[vizgrid,line width=0.35pt] (3.935,-8) -- (7.53,-8);
        \draw[vizorange,line width=1.1pt] (6.032,-9.56) -- (6.032,-7.74);
        \draw[vizmuted,line width=0.6pt] (4.085,-9.24) -- (4.384,-9.248) -- (4.684,-9.326) -- (4.984,-9.24) -- (5.283,-9.279) -- (5.583,-9.256) -- (5.882,-9.318) -- (6.182,-8.398) -- (6.481,-8.086) -- (6.781,-8.125) -- (7.081,-8.265) -- (7.38,-8.078);
        \foreach \p in {(4.085,-9.24),(4.384,-9.248),(4.684,-9.326),(4.984,-9.24),(5.283,-9.279),(5.583,-9.256),(5.882,-9.318)} { \filldraw[fill=white,draw=vizsec,line width=0.7pt] \p circle (1.7pt); }
        \foreach \p in {(6.182,-8.398),(6.481,-8.086),(6.781,-8.125),(7.081,-8.265),(7.38,-8.078)} { \filldraw[fill=vizblue,draw=white,line width=0.5pt] \p circle (2.1pt); }
        \node[anchor=base west,text=vizink,inner sep=0pt] at (3.965,-7.62) {Berta};
        \node[below,text=vizmuted,inner sep=2pt] at (4.085,-9.66) {1};
        \node[below,text=vizmuted,inner sep=2pt] at (6.182,-9.66) {8};
        \node[below,text=vizmuted,inner sep=2pt] at (7.38,-9.66) {12};
        \fill[vizband,rounded corners=1.5pt] (9.368,-9.64) rectangle (10.566,-7.74);
        \draw[vizaxis,line width=0.35pt] (7.87,-9.56) -- (11.465,-9.56);
        \draw[vizgrid,line width=0.35pt] (7.87,-8.78) -- (11.465,-8.78);
        \draw[vizgrid,line width=0.35pt] (7.87,-8) -- (11.465,-8);
        \draw[vizorange,line width=1.1pt] (9.368,-9.56) -- (9.368,-7.74);
        \draw[vizmuted,line width=0.6pt] (8.02,-9.248) -- (8.319,-9.209) -- (8.619,-9.201) -- (8.919,-9.295) -- (9.218,-9.256) -- (9.518,-8.92) -- (9.817,-8.842) -- (10.117,-8.796) -- (10.416,-8.725) -- (10.716,-8.702) -- (11.016,-8.749) -- (11.315,-8.686);
        \foreach \p in {(8.02,-9.248),(8.319,-9.209),(8.619,-9.201),(8.919,-9.295),(9.218,-9.256)} { \filldraw[fill=white,draw=vizsec,line width=0.7pt] \p circle (1.7pt); }
        \foreach \p in {(9.518,-8.92),(9.817,-8.842),(10.117,-8.796),(10.416,-8.725),(10.716,-8.702),(11.016,-8.749),(11.315,-8.686)} { \filldraw[fill=vizblue,draw=white,line width=0.5pt] \p circle (2.1pt); }
        \node[anchor=base west,text=vizink,inner sep=0pt] at (7.9,-7.62) {Channa};
        \node[below,text=vizmuted,inner sep=2pt] at (8.02,-9.66) {1};
        \node[below,text=vizmuted,inner sep=2pt] at (9.518,-9.66) {6};
        \node[below,text=vizmuted,inner sep=2pt] at (11.315,-9.66) {12};
        \fill[vizband,rounded corners=1.5pt] (13.303,-9.64) rectangle (14.501,-7.74);
        \draw[vizaxis,line width=0.35pt] (11.805,-9.56) -- (15.4,-9.56);
        \draw[vizgrid,line width=0.35pt] (11.805,-8.78) -- (15.4,-8.78);
        \draw[vizgrid,line width=0.35pt] (11.805,-8) -- (15.4,-8);
        \draw[vizorange,line width=1.1pt] (13.902,-9.56) -- (13.902,-7.74);
        \draw[vizmuted,line width=0.6pt] (11.955,-9.31) -- (12.254,-9.279) -- (12.554,-9.256) -- (12.854,-9.303) -- (13.153,-9.349) -- (13.453,-9.342) -- (13.752,-9.248) -- (14.052,-8.819) -- (14.351,-8.858) -- (14.651,-8.928) -- (14.951,-8.889) -- (15.25,-8.811);
        \foreach \p in {(11.955,-9.31),(12.254,-9.279),(12.554,-9.256),(12.854,-9.303),(13.153,-9.349),(13.453,-9.342),(13.752,-9.248)} { \filldraw[fill=white,draw=vizsec,line width=0.7pt] \p circle (1.7pt); }
        \foreach \p in {(14.052,-8.819),(14.351,-8.858),(14.651,-8.928),(14.951,-8.889),(15.25,-8.811)} { \filldraw[fill=vizblue,draw=white,line width=0.5pt] \p circle (2.1pt); }
        \node[anchor=base west,text=vizink,inner sep=0pt] at (11.835,-7.62) {Daniel};
        \node[below,text=vizmuted,inner sep=2pt] at (11.955,-9.66) {1};
        \node[below,text=vizmuted,inner sep=2pt] at (14.052,-9.66) {8};
        \node[below,text=vizmuted,inner sep=2pt] at (15.25,-9.66) {12};
        \node[text=vizink,inner sep=2pt,font=\sffamily\footnotesize] at (7.7,-10.22) {Session};
    \end{tikzpicture}
    \par\smallskip
    \parbox{0.95\textwidth}{\small\textit{Note.} Each panel shows one student. The shaded band covers the window sessions, and the vertical line marks the onset. The horizontal axis labels the first session, the onset session, and the last session.}
\end{figure}

\subsection{Effects by Study and Pooled Effects}

This analysis estimates the effect of each study on its original scale and the pooled standardized effect of the 15 students. Within a study, we use $c_i=1$ and equal weights. For the pooled effect, we use equal weights and two scales. The first is the pre-window mean of the student. The second is the standard deviation of the pre-window outcomes pooled over the students of a study and divided by the factor $\iota_i$ of Proposition \ref{prop:scale}. We compute $\htheta^{\rm b}$, $\htheta^{\rm c}$, the pooled-anchor estimator $\htheta^{\rm f}$, and the mean of the phase contrasts with the variance estimator in Definition \ref{def:estimators}(e). We also test the sharp null hypothesis of no effect on any student at any session by a one-sided randomization test with the pooled-anchor estimate as the test statistic (Section \ref{sec:test} of the Supplemental Material). The test is exact within each study and uses 100,000 draws of the onsets for the 15 students together. Table \ref{tab:app} reports the estimates, standard errors, $p$-values, and 95\% intervals for the pooled-anchor estimator.

The estimators $\htheta^{\rm c}$ and $\htheta^{\rm f}$ and the phase contrast give similar estimates in Study A, where the effect is immediate, and different estimates in Studies B and C, where it builds up over sessions. The bias of $\htheta^{\rm c}$ in Proposition \ref{prop:biasc} of the Supplemental Material predicts its smaller estimates in Studies B and C. In Study A, the outcomes rose close to their intervention level at the onset session. The estimators $\htheta^{\rm c}$ and $\htheta^{\rm f}$ and the phase contrast yield 4.83, 4.75, and 5.04 correct answers, respectively, and the standard error of $\htheta^{\rm b}$ is about three times larger than theirs. In Studies B and C, the outcomes of several students kept rising for several sessions after the onset. The estimator $\htheta^{\rm c}$ then gives the smallest estimates, 4.55 correct answers and 44.79 words, and the pooled-anchor estimates are 5.08 and 62.00, respectively. The estimates of $\htheta^{\rm b}$, 7.00 and 102.21, exceed the pooled-anchor estimates by less than two of their standard errors. The phase contrasts are larger, 6.64 and 84.65. They place more weight on late sessions, where outcomes are highest, and they include any drift. The sharp null hypothesis is rejected in every study ($p=0.00006, 0.002, 0.013$, respectively) and for the 15 students together ($p=0.0002$). The 95\% intervals of the pooled-anchor estimates by study rest on four to six blocks. They are wide, and the interval of Study C contains 0. The pooled-anchor estimate of the pooled effect is 1.86 pre-window means, a 186\% increase over the pre-window mean, with a 95\% interval of $[1.06, 2.66]$. In standard deviation units, it is 6.98, with interval $[3.13, 10.84]$. The pooled standard deviations before the intervention are small in these studies, 0.75 and 1.95 correct answers and 4.86 words, so effects in standard deviation units are large. With the two pairs of Study C as two blocks, the estimates are the same, and the intervals become $[0.95, 2.77]$ and $[2.05, 11.92]$. The $p$-value of Study C is then 0.08, and the $p$-value for the 15 students is 0.0008 (Section \ref{sec:pairs} of the Supplemental Material).

\begin{table}[t]
    \caption{Estimates (Standard Errors) for Three Randomized Multiple-Baseline Studies}
    \label{tab:app}
    \centering
    \footnotesize
    \setlength{\tabcolsep}{3pt}
    \begin{tabular}{@{}lcccccc@{}}
        \toprule
        & & & & Phase & 95\% interval & \\
        & $\htheta^{\rm b}$ & $\htheta^{\rm c}$ & $\htheta^{\rm f}$ & contrast & for $\htheta^{\rm f}$ & $p$-value \\
        \midrule
        \multicolumn{7}{l}{\textit{Original scale, by study}} \\
        A: correct answers & 3.90 (1.04) & 4.83 (0.35) & 4.75 (0.31) & 5.04 (0.30) & $[3.89, 5.61]$ & 0.00006 \\
        B: correct answers & 7.00 (1.22) & 4.55 (1.33) & 5.08 (1.16) & 6.64 (0.61) & $[1.39, 8.78]$ & 0.002 \\
        C: words written & 102.21 (22.21) & 44.79 (35.01) & 62.00 (21.93) & 84.65 (19.06) & $[-32.34, 156.33]$ & 0.013 \\
        \multicolumn{7}{l}{\textit{Standardized by scale, 15 students}} \\
        Pre-window mean & 2.25 (0.43) & 1.64 (0.36) & 1.86 (0.37) & 2.26 (0.35) & $[1.06, 2.66]$ & \\
        Pre-window SD & 8.41 (2.21) & 5.49 (1.80) & 6.98 (1.78) & 8.00 (1.69) & $[3.13, 10.84]$ & 0.0002 \\
        \bottomrule
    \end{tabular}
    \par\smallskip
    \parbox{0.95\textwidth}{\small\textit{Note.} The $p$-values are for the sharp null hypothesis of no effect, from the one-sided randomization test with the pooled-anchor estimate as the statistic. The pre-window SD is pooled within study and divided by $\iota_i$. The $p$-value for the 15 students together uses the pre-window SD scale. The intervals by study rest on six, five, and four blocks.}
\end{table}

\section{Discussion}

In this paper, we developed design-based estimators of effect sizes for multiple-baseline experiments with randomized onsets, and we showed how to pool them across cases and studies. We found that weighted comparisons within the window of admissible onsets are unbiased for a case-level effect. They can be standardized without bias, and their pooled value has a variance estimator and an asymptotically valid confidence interval that use no model for the series. The pooled-anchor estimator allows the effect to build up over sessions and is asymptotically unbiased. The phase contrast is generally biased when the untreated outcome changes over sessions, and randomizing the onset does not remove this bias. When planning a study, we recommend many admissible onsets and several sessions before and after them. In Table \ref{tab:plan}, both choices reduce the standard deviations of $\htheta^{\rm c}$ and $\htheta^{\rm f}$.

The proposed estimators have two limitations. First, they have larger variances than the phase contrast and the multilevel models, while their biases stay near zero across different baselines. This suggests that they are robust to baseline patterns at the cost of losing precision. Second, the confidence intervals are asymptotic in the number of blocks, and the pooled-anchor estimator is asymptotically unbiased. With 6 and 12 cases under a common shift, its intervals could under-cover (92.7\% and 92.5\%, shown in Section \ref{sec:sim6} of the Supplemental Material). The intervals by study in Table \ref{tab:app} rest on four to six blocks. In practice, a synthesis pools the cases of several studies and therefore has more blocks than any one of them. The 15 students of the three studies together gave an interval of $[1.06, 2.66]$ in pre-window means.

We leave two extensions for future work. In many multiple-baseline studies, the onset is chosen after inspecting the baseline, a practice known as response-guided experimentation \citep{FerronJooLevin2017}. The onset probabilities then depend on the outcomes, and the estimators of this paper would need to be extended to such designs. The estimators of this paper use each case as its own comparison. Cases of the same study that are still in baseline at a session carry further information about the untreated outcome at that session \citep{Ferron2014}, and a design-based estimator that combines the two comparisons could have a smaller variance.

\bibliographystyle{jasa3}
\bibliography{mb_meta_refs}

\clearpage
\setcounter{section}{0}
\setcounter{equation}{0}
\setcounter{table}{0}
\setcounter{figure}{0}
\renewcommand{\thesection}{S\arabic{section}}
\renewcommand{\theequation}{S\arabic{equation}}
\renewcommand{\thetable}{S\arabic{table}}
\renewcommand{\thefigure}{S\arabic{figure}}
\renewcommand{\theHsection}{S\arabic{section}}
\renewcommand{\theHequation}{S\arabic{equation}}
\renewcommand{\theHtable}{S\arabic{table}}
\renewcommand{\theHfigure}{S\arabic{figure}}
\addtocontents{toc}{\protect\setcounter{tocdepth}{2}}

\begin{center}
    {\Large\bfseries Supplemental Material}
\end{center}

\noindent References to Definition, Proposition, Theorem, Corollary, and Assumption numbers and to unprefixed equation, table, and section numbers are to the main text, Sections 1 to 9. Equations, lemmas, propositions, tables and sections numbered with the prefix S are in this supplement.

{\singlespacing\tableofcontents}
\clearpage

\section{Notation and Window Sums}
\label{sec:notation}

Throughout, $W_{it}=W_{it}(S_i)$, and $\bbE$, $\Var$ and $\Cov$ refer to the distribution of the onsets unless stated otherwise, i.e., Section \ref{sec:planning}. For a case $i$ and an onset $s\in\cS_i$, write $\overline Y_{i\cA}(s)=a_i^{-1}\sum_{t\in\cA_i}Y_{it}(s)$ and $\overline Y_{i\cB}(s)=n_i^{-1}\sum_{t\in\cB_i}Y_{it}(s)$, so that $\overline Y_{i\cA}=\overline Y_{i\cA}(S_i)$ and $\overline Y_{i\cB}=\overline Y_{i\cB}(S_i)$. Every $t\in\cA_i$ has $t\le a_i<s$ for every $s\in\cS_i$. Under Assumption \ref{ass:anticipation},
\begin{equation}
    Y_{it}(s)=Y^{(0)}_{it}\quad\text{for all }t\in\cA_i\text{ and }s\in\cS_i,\quad\text{so that}\quad\overline Y_{i\cA}(s)=a_i^{-1}\sum_{t\in\cA_i}Y^{(0)}_{it}\quad\text{for all }s\in\cS_i.
    \label{eq:prefixed}
\end{equation}
The scale $c_i$ is a function of the outcomes $\{Y_{i't}:t\in\cA_{i'},\ 1\le i'\le N\}$, so by \eqref{eq:prefixed}, it takes the same value under every vector of onsets. The centered and scaled potential outcome is
\begin{equation*}
    \widetilde Y_{it}(s)=\frac{Y_{it}(s)-\overline Y_{i\cA}(s)}{c_i},\qquad\widetilde Y_{it}=\widetilde Y_{it}(S_i),
\end{equation*}
and the window sums in Definition \ref{def:estimators}(f) are
\begin{equation}
    \begin{aligned}
        H_{i1}&=\frac1{m_i}\sum_{t\in\cW_i}\frac{W_{it}\widetilde Y_{it}}{\pi_{it}},&\qquad X_{i1}&=\frac1{m_i}\sum_{t\in\cW_i}\frac{W_{it}}{\pi_{it}},\\
        H_{i0}&=\frac1{m_i}\sum_{t\in\cW_i}\frac{(1-W_{it})\widetilde Y_{it}}{1-\pi_{it}},&\qquad X_{i0}&=\frac1{m_i}\sum_{t\in\cW_i}\frac{1-W_{it}}{1-\pi_{it}}.
    \end{aligned}
    \label{eq:sums}
\end{equation}
The sums $X_{i1}$ and $X_{i0}$ are functions of the onset $S_i$ alone. We write $X_{iw}(s)$ for the value of $X_{iw}$ when $S_i=s$. Under Assumption \ref{ass:anticipation}, $\overline Y_{i\cA}$ and $c_i$ are constants by \eqref{eq:prefixed}, so $H_{i1}$, $H_{i0}$ and the estimators of Definition \ref{def:estimators}(a)--(d) for case $i$ are functions of $S_i$ alone. The block averages of the window sums are
\begin{equation}
    H_{(g)w}=\frac1{\Omega_g}\sum_{i\in\cC_g}\omega_iH_{iw},\qquad X_{(g)w}=\frac1{\Omega_g}\sum_{i\in\cC_g}\omega_iX_{iw},
    \label{eq:blocksums}
\end{equation}
and the pooled sums are
\begin{equation*}
    \overline H_w=\sum_{i=1}^N\omega_iH_{iw}=\sum_{g=1}^G\Omega_gH_{(g)w},\qquad\overline X_w=\sum_{i=1}^N\omega_iX_{iw}=\sum_{g=1}^G\Omega_gX_{(g)w}.
\end{equation*}

\begin{lemma}
    \label{lem:weights}
    Let $t\in\cW_i$ and let $\kappa$ be a real function on $\cS_i$. Then,
    \begin{equation}
        \bbE\left\{\frac{W_{it}\,\kappa(S_i)}{\pi_{it}}\right\}=\frac1{\pi_{it}}\sum_{s\in\cS_i,\,s\le t}p_i(s)\kappa(s),\quad\bbE\left\{\frac{(1-W_{it})\,\kappa(S_i)}{1-\pi_{it}}\right\}=\frac1{1-\pi_{it}}\sum_{s\in\cS_i,\,s>t}p_i(s)\kappa(s).
        \label{eq:w1}
    \end{equation}
    In particular, $\bbE(W_{it}/\pi_{it})=\bbE\{(1-W_{it})/(1-\pi_{it})\}=1$, and $\bbE(X_{i1})=\bbE(X_{i0})=1$. Under Assumption \ref{ass:anticipation},
    \begin{equation}
        \bbE\left(\frac{W_{it}Y_{it}}{\pi_{it}}\right)=\overline Y^{(1)}_{it},\qquad\bbE\left\{\frac{(1-W_{it})Y_{it}}{1-\pi_{it}}\right\}=Y^{(0)}_{it}.
        \label{eq:w2}
    \end{equation}
\end{lemma}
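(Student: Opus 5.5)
The plan is to compute each expectation directly by summing over the finite distribution of the onset $S_i$. For a window session $t\in\cW_i$, the indicator $W_{it}=\bbone(S_i\le t)$ is a function of $S_i$. Hence
\begin{equation*}
    \bbE\left\{W_{it}\kappa(S_i)\right\}=\sum_{s\in\cS_i}p_i(s)\bbone(s\le t)\kappa(s)=\sum_{s\in\cS_i,\,s\le t}p_i(s)\kappa(s),
\end{equation*}
and dividing by the constant $\pi_{it}$ gives the first identity in \eqref{eq:w1}. The same computation with $1-W_{it}=\bbone(S_i>t)$ gives the second identity. Here $\pi_{it}\in(0,1)$ for window sessions, as noted in the setup, so neither division is by zero.

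The special cases follow by taking $\kappa\equiv1$. Since $\sum_{s\le t}p_i(s)=\pi_{it}$ and $\sum_{s>t}p_i(s)=1-\pi_{it}$, both weighted indicators have expectation 1. Averaging these identities over $t\in\cW_i$, with the factor $1/m_i$, then gives $\bbE(X_{i1})=\bbE(X_{i0})=1$ by linearity.

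For \eqref{eq:w2}, I apply \eqref{eq:w1} with $\kappa(s)=Y_{it}(s)$. This is legitimate because $Y_{it}=Y_{it}(S_i)$ by SUTVA.
\begin{itemize}
    \item \emph{Treated side.} The first identity gives $\pi_{it}^{-1}\sum_{s\le t}p_i(s)Y_{it}(s)$, which is $\overline Y^{(1)}_{it}$ by definition \eqref{eq:celleffect}. No assumption is needed here.
    \item \emph{Untreated side.} The second identity gives $(1-\pi_{it})^{-1}\sum_{s>t}p_i(s)Y_{it}(s)$. For every $s\in\cS_i$ with $s>t$, Assumption \ref{ass:anticipation} gives $Y_{it}(s)=Y^{(0)}_{it}$. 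The sum therefore collapses to $(1-\pi_{it})^{-1}(1-\pi_{it})Y^{(0)}_{it}=Y^{(0)}_{it}$.
\end{itemize}

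There is no real obstacle in this argument. The only point requiring care is that the no-anticipation assumption is invoked only for onsets $s>t$, which is exactly what the untreated side requires.
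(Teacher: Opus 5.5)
Your proposal is correct and follows the paper's proof essentially step for step. Both compute the expectation by direct summation over the distribution of $S_i$, take $\kappa\equiv1$ and average over $\cW_i$ for the normalization identities, and take $\kappa(s)=Y_{it}(s)$ with no anticipation used only on the untreated side. Your explicit remark that $0<\pi_{it}<1$ on window sessions, so that both divisions are well defined, is a small but welcome addition.
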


\begin{proof}
    Since $W_{it}=\bbone(S_i\le t)$ and $\bbP(S_i=s)=p_i(s)$,
    \begin{equation*}
        \bbE\{W_{it}\kappa(S_i)\}=\sum_{s\in\cS_i}p_i(s)\bbone(s\le t)\kappa(s),\qquad\bbE\{(1-W_{it})\kappa(S_i)\}=\sum_{s\in\cS_i}p_i(s)\bbone(s>t)\kappa(s),
    \end{equation*}
    which is \eqref{eq:w1}. Taking $\kappa\equiv1$ and using $\pi_{it}=\sum_{s\in\cS_i,\,s\le t}p_i(s)$ gives the two expectations equal to 1, and averaging them over $t\in\cW_i$ gives $\bbE(X_{i1})=\bbE(X_{i0})=1$. For \eqref{eq:w2}, $Y_{it}=Y_{it}(S_i)$. Taking $\kappa(s)=Y_{it}(s)$ in the first part of \eqref{eq:w1} gives $\overline Y^{(1)}_{it}$ by its definition in \eqref{eq:celleffect}. Under Assumption \ref{ass:anticipation}, $Y_{it}(s)=Y^{(0)}_{it}$ for every $s>t$, so the second part of \eqref{eq:w1} with $\kappa(s)=Y_{it}(s)$ equals
    \begin{equation*}
        \frac{Y^{(0)}_{it}}{1-\pi_{it}}\sum_{s\in\cS_i,\,s>t}p_i(s)=Y^{(0)}_{it}.
    \end{equation*}
\end{proof}

\section{Proofs of Theorem \ref{thm:unbiased} and Corollary \ref{cor:standardized}}
\label{sec:unbiased}

The three estimators $\htau^{\rm a}_i$, $\htau^{\rm b}_i$ and $\htau^{\rm c}_i$ of Definition \ref{def:estimators} are members of one family. For offsets $\ell^1_{it}$ and $\ell^0_{it}$, $t\in\cW_i$, define
\begin{equation}
    \htau_i(\ell)=\frac1{m_i}\sum_{t\in\cW_i}\left\{\frac{W_{it}(Y_{it}-\ell^1_{it})}{\pi_{it}}-\frac{(1-W_{it})(Y_{it}-\ell^0_{it})}{1-\pi_{it}}\right\}+\frac1{m_i}\sum_{t\in\cW_i}\left(\ell^1_{it}-\ell^0_{it}\right).
    \label{eq:family}
\end{equation}

\begin{lemma}
    \label{lem:offsets}
    Let Assumption \ref{ass:anticipation} hold, and let the offsets $\ell^1_{it}$ and $\ell^0_{it}$, $t\in\cW_i$, take the same values under every onset in $\cS_i$. Then, $\bbE\{\htau_i(\ell)\}=\tau_i$.
\end{lemma}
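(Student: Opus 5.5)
The argument is linearity of expectation applied term by term to \eqref{eq:family}, with Lemma \ref{lem:weights} supplying the expectation of each weighted term. Since the offsets $\ell^1_{it}$ and $\ell^0_{it}$ take the same value under every onset, they are fixed constants. The second sum in \eqref{eq:family}, $m_i^{-1}\sum_{t\in\cW_i}(\ell^1_{it}-\ell^0_{it})$, is therefore nonrandom and passes through the expectation unchanged.

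For the first sum, fix a window session $t\in\cW_i$ and split the treated term as $W_{it}Y_{it}/\pi_{it}-\ell^1_{it}W_{it}/\pi_{it}$. By \eqref{eq:w2} of Lemma \ref{lem:weights}, which uses Assumption \ref{ass:anticipation}, the first piece has expectation $\overline Y^{(1)}_{it}$. Because $\ell^1_{it}$ is constant and $\bbE(W_{it}/\pi_{it})=1$, the second piece has expectation $\ell^1_{it}$. In the same way, the untreated term $(1-W_{it})(Y_{it}-\ell^0_{it})/(1-\pi_{it})$ has expectation $Y^{(0)}_{it}-\ell^0_{it}$.

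Substituting these expectations and averaging over $t\in\cW_i$ gives
\begin{equation*}
\bbE\{\htau_i(\ell)\}=\frac1{m_i}\sum_{t\in\cW_i}\left\{\overline Y^{(1)}_{it}-\ell^1_{it}-Y^{(0)}_{it}+\ell^0_{it}\right\}+\frac1{m_i}\sum_{t\in\cW_i}\left(\ell^1_{it}-\ell^0_{it}\right).
\end{equation*}
The offsets cancel between the two sums, leaving $m_i^{-1}\sum_{t\in\cW_i}\tau_{it}=\tau_i$ by \eqref{eq:celleffect}.

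The only point that needs care is the constancy of the offsets. It is what allows $\ell^1_{it}$ to be pulled out of $\bbE(\ell^1_{it}W_{it}/\pi_{it})$; if the offset varied with $S_i$ it would be correlated with $W_{it}$, and this step would fail. The lemma assumes this constancy, so no other obstacle arises. It also explains why Theorem \ref{thm:unbiased} needs Assumption \ref{ass:duration} for $\htau^{\rm c}_i$: the offset $\overline Y_{i\cB}$ is constant only under that assumption.
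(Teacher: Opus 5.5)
Your proof is correct and follows the paper's argument: treat the offsets as constants, apply \eqref{eq:w2} together with $\bbE(W_{it}/\pi_{it})=\bbE\{(1-W_{it})/(1-\pi_{it})\}=1$ from Lemma~\ref{lem:weights} to each window term, and observe that the offsets cancel against the added sum $m_i^{-1}\sum_{t\in\cW_i}(\ell^1_{it}-\ell^0_{it})$. Your remark on why $\htau^{\rm c}_i$ needs Assumption~\ref{ass:duration} also matches how the paper then uses this lemma in the proof of Theorem~\ref{thm:unbiased}.
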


\begin{proof}
    The offsets are constants. By \eqref{eq:w2} and $\bbE(W_{it}/\pi_{it})=\bbE\{(1-W_{it})/(1-\pi_{it})\}=1$ from Lemma \ref{lem:weights}, for each $t\in\cW_i$,
    \begin{equation*}
        \bbE\left\{\frac{W_{it}(Y_{it}-\ell^1_{it})}{\pi_{it}}-\frac{(1-W_{it})(Y_{it}-\ell^0_{it})}{1-\pi_{it}}\right\}=\overline Y^{(1)}_{it}-\ell^1_{it}-Y^{(0)}_{it}+\ell^0_{it}=\tau_{it}-\left(\ell^1_{it}-\ell^0_{it}\right).
    \end{equation*}
    Averaging over $t\in\cW_i$ and adding the last term of \eqref{eq:family} gives $\bbE\{\htau_i(\ell)\}=m_i^{-1}\sum_{t\in\cW_i}\tau_{it}=\tau_i$.
\end{proof}

\begin{proof}[Proof of Theorem \ref{thm:unbiased}]
    The estimator $\htau^{\rm a}_i$ is \eqref{eq:family} with $\ell^1_{it}=\ell^0_{it}=0$. The estimator $\htau^{\rm b}_i$ is \eqref{eq:family} with $\ell^1_{it}=\ell^0_{it}=\overline Y_{i\cA}$, which takes the same value under every onset by \eqref{eq:prefixed}. Lemma \ref{lem:offsets} gives $\bbE(\htau^{\rm a}_i)=\bbE(\htau^{\rm b}_i)=\tau_i$ under Assumption \ref{ass:anticipation}. The estimator $\htau^{\rm c}_i$ is \eqref{eq:family} with $\ell^0_{it}=\overline Y_{i\cA}$ and $\ell^1_{it}=\overline Y_{i\cB}$. Every $t\in\cB_i$ has $t\ge b_i\ge s$ for every $s\in\cS_i$. Under Assumption \ref{ass:duration},
    \begin{equation}
        Y_{it}(s)=Y^{(1)}_{it}\quad\text{for all }t\in\cB_i\text{ and }s\in\cS_i,\quad\text{so that}\quad\overline Y_{i\cB}(s)=n_i^{-1}\sum_{t\in\cB_i}Y^{(1)}_{it}\quad\text{for all }s\in\cS_i.
        \label{eq:postfixed}
    \end{equation}
    By \eqref{eq:prefixed} and \eqref{eq:postfixed}, both offsets take the same values under every onset, and Lemma \ref{lem:offsets} gives $\bbE(\htau^{\rm c}_i)=\tau_i$ under Assumptions \ref{ass:anticipation} and \ref{ass:duration}.
\end{proof}

\begin{proof}[Proof of Corollary \ref{cor:standardized}]
    By \eqref{eq:prefixed}, $c_i$ is a constant. Hence $\bbE(\htheta_i)=\bbE(\htau_i)/c_i=\tau_i/c_i=\theta_i$ by Theorem \ref{thm:unbiased}, and $\bbE(\htheta)=\sum_{i=1}^N\omega_i\bbE(\htheta_i)=\sum_{i=1}^N\omega_i\theta_i=\theta$ because the weights are constants.
\end{proof}

The scaled estimators are functions of the window sums. By Definition \ref{def:estimators}(d), $\htheta^{\rm b}_i=\htau^{\rm b}_i/c_i$ and $\htheta^{\rm c}_i=\htau^{\rm c}_i/c_i$. Dividing \eqref{eq:family} by $c_i$ and using $\widetilde Y_{it}=(Y_{it}-\overline Y_{i\cA})/c_i$ in \eqref{eq:sums},
\begin{equation}
    \htheta^{\rm b}_i=H_{i1}-H_{i0},\qquad\htheta^{\rm c}_i=\widetilde Y_{i\cB}\left(1-X_{i1}\right)+H_{i1}-H_{i0},\qquad\widetilde Y_{i\cB}=\frac{\overline Y_{i\cB}-\overline Y_{i\cA}}{c_i}.
    \label{eq:bc}
\end{equation}
The first identity holds because the two offsets of $\htau^{\rm b}_i$ are equal. For the second, $Y_{it}-\overline Y_{i\cB}=(Y_{it}-\overline Y_{i\cA})-(\overline Y_{i\cB}-\overline Y_{i\cA})$ on the treated side, which contributes $-(\overline Y_{i\cB}-\overline Y_{i\cA})X_{i1}$ to the first sum of \eqref{eq:family}, and the second sum of \eqref{eq:family} equals $\overline Y_{i\cB}-\overline Y_{i\cA}$.

\section{Proof of Proposition \ref{prop:contrast}}
\label{sec:contrast}

Let Assumptions \ref{ass:anticipation} and \ref{ass:duration} hold. For $t\in\cW_i$, Assumption \ref{ass:duration} gives $\overline Y^{(1)}_{it}=Y^{(1)}_{it}$ in \eqref{eq:celleffect}, so $\tau_{it}=Y^{(1)}_{it}-Y^{(0)}_{it}$ for every $t\ge a_i+1$. For every $s\in\cS_i$ and every session $t$,
\begin{equation}
    Y_{it}(s)=Y^{(0)}_{it}\quad\text{if }t<s,\qquad Y_{it}(s)=Y^{(0)}_{it}+\tau_{it}\quad\text{if }t\ge s.
    \label{eq:pc1}
\end{equation}
The first part is Assumption \ref{ass:anticipation}. The second part is Assumption \ref{ass:duration}, because $t\ge s$ implies $t\ge a_i+1$ and $Y_{it}(s)=Y^{(1)}_{it}$. By \eqref{eq:pc1} and \eqref{eq:contrast},
\begin{equation*}
    \gamma_i(s)=\frac1{T_i-s+1}\sum_{t=s}^{T_i}\tau_{it}+\left\{\frac1{T_i-s+1}\sum_{t=s}^{T_i}Y^{(0)}_{it}-\frac1{s-1}\sum_{t=1}^{s-1}Y^{(0)}_{it}\right\}.
\end{equation*}
Taking the expectation over $S_i$, using the definition of $\zeta_i$ in \eqref{eq:drift}, and exchanging the order of summation in the first term,
\begin{equation*}
    \bbE\{\gamma_i(S_i)\}=\sum_{s\in\cS_i}\frac{p_i(s)}{T_i-s+1}\sum_{t=s}^{T_i}\tau_{it}+\zeta_i=\sum_{t=a_i+1}^{T_i}\tau_{it}\sum_{s\in\cS_i,\,s\le t}\frac{p_i(s)}{T_i-s+1}+\zeta_i=\sum_{t=a_i+1}^{T_i}\lambda_{it}\tau_{it}+\zeta_i,
\end{equation*}
because every $s\in\cS_i$ is at least $a_i+1$. The weights are nonnegative, and
\begin{equation*}
    \sum_{t=a_i+1}^{T_i}\lambda_{it}=\sum_{s\in\cS_i}\frac{p_i(s)}{T_i-s+1}\sum_{t=s}^{T_i}1=\sum_{s\in\cS_i}p_i(s)=1.
\end{equation*}
If $Y^{(0)}_{it}=\phi_i+\beta_it$ for every $t$, then for every $s\in\cS_i$,
\begin{equation*}
    \frac1{T_i-s+1}\sum_{t=s}^{T_i}\left(\phi_i+\beta_it\right)-\frac1{s-1}\sum_{t=1}^{s-1}\left(\phi_i+\beta_it\right)=\beta_i\left(\frac{s+T_i}2-\frac s2\right)=\frac{\beta_iT_i}2,
\end{equation*}
so $\zeta_i=\sum_{s\in\cS_i}p_i(s)\beta_iT_i/2=\beta_iT_i/2$. \qed

\section{Proof of Proposition \ref{prop:variance}}
\label{sec:variance}

For numbers or random variables $U_1,\dots,U_G$ write $\overline U=\sum_{g=1}^G\Omega_gU_g$ and
\begin{equation}
    \cV(U)=\frac1{1+\rho}\sum_{g=1}^G\frac{\Omega_g^2\left(U_g-\overline U\right)^2}{1-2\Omega_g},\qquad\rho=\sum_{g=1}^G\frac{\Omega_g^2}{1-2\Omega_g},
    \label{eq:calV}
\end{equation}
so that $\hV=\cV(\hvartheta)$ with $\hvartheta=(\hvartheta_1,\dots,\hvartheta_G)$.

\begin{lemma}
    \label{lem:between}
    Let $\max_g\Omega_g<1/2$, and let $U_1,\dots,U_G$ be independent random variables with finite variances. Then, $\Var(\overline U)=\sum_{g=1}^G\Omega_g^2\Var(U_g)$ and
    \begin{equation}
        \bbE\{\cV(U)\}=\Var(\overline U)+\frac1{1+\rho}\sum_{g=1}^G\frac{\Omega_g^2\left\{\bbE(U_g)-\bbE(\overline U)\right\}^2}{1-2\Omega_g}.
        \label{eq:between}
    \end{equation}
\end{lemma}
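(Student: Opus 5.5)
The plan is to prove the two claims in turn, working directly with the centered variables. The variance identity is immediate: since $U_1,\dots,U_G$ are independent with finite variances, all cross-covariances in $\Var(\sum_g\Omega_gU_g)$ vanish, which gives $\Var(\overline U)=\sum_g\Omega_g^2\Var(U_g)$. For the expectation of $\cV(U)$, write $\mu_g=\bbE(U_g)$, $\overline\mu=\sum_g\Omega_g\mu_g=\bbE(\overline U)$ and $v_g=\Var(U_g)$. Then use the decomposition
\begin{equation*}
    U_g-\overline U=(\mu_g-\overline\mu)+\{(U_g-\mu_g)-(\overline U-\overline\mu)\}.
\end{equation*}
The cross term has mean zero, so
\begin{equation*}
    \bbE\{(U_g-\overline U)^2\}=(\mu_g-\overline\mu)^2+\Var(U_g-\overline U).
\end{equation*}
The deterministic part, after multiplication by $\Omega_g^2/\{(1+\rho)(1-2\Omega_g)\}$ and summation, is exactly the second term of \eqref{eq:between}.

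It remains to show that
\begin{equation*}
    \frac1{1+\rho}\sum_g\frac{\Omega_g^2\Var(U_g-\overline U)}{1-2\Omega_g}=\sum_h\Omega_h^2v_h.
\end{equation*}
By independence, $\Cov(U_g,\overline U)=\Omega_gv_g$, so
\begin{equation*}
    \Var(U_g-\overline U)=v_g-2\Omega_gv_g+\sum_h\Omega_h^2v_h=(1-2\Omega_g)v_g+\Var(\overline U).
\end{equation*}
Dividing by $1-2\Omega_g>0$ (this is where $\max_g\Omega_g<1/2$ is used) and multiplying by $\Omega_g^2$ gives
\begin{equation*}
    \frac{\Omega_g^2\Var(U_g-\overline U)}{1-2\Omega_g}=\Omega_g^2v_g+\frac{\Omega_g^2}{1-2\Omega_g}\Var(\overline U).
\end{equation*}
Summing over $g$ yields $\Var(\overline U)+\rho\Var(\overline U)=(1+\rho)\Var(\overline U)$, and the factor $1/(1+\rho)$ cancels. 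This is precisely why $\rho$ is defined as it is.

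No step is a real obstacle. The only point needing care is the computation of $\Var(U_g-\overline U)$, where $U_g$ appears inside $\overline U$; the covariance $\Cov(U_g,\overline U)=\Omega_gv_g$ handles this. One must also note that $\sum_g\Omega_g=1$ ensures $\bbE(\overline U)=\overline\mu$, so the deterministic deviations $\mu_g-\overline\mu$ are centered at the same weighted mean used in $\cV$.
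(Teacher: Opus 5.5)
Your proposal is correct and follows the paper's proof essentially step by step. You split $\bbE(U_g-\overline U)^2$ into $\Var(U_g-\overline U)+\{\bbE(U_g)-\bbE(\overline U)\}^2$, obtain $\Var(U_g-\overline U)=(1-2\Omega_g)\Var(U_g)+\Var(\overline U)$ by independence (you via $\Cov(U_g,\overline U)=\Omega_g\Var(U_g)$, the paper by expanding $U_g-\overline U=(1-\Omega_g)U_g-\sum_{g'\ne g}\Omega_{g'}U_{g'}$), and sum with weights $\Omega_g^2/(1-2\Omega_g)$ so that the factor $1+\rho$ cancels.
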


\begin{proof}
    The first claim holds because the $U_g$ are independent. Fix $g$. Since $\sum_{g'=1}^G\Omega_{g'}=1$,
    \begin{equation*}
        U_g-\overline U=(1-\Omega_g)U_g-\sum_{g'\ne g}\Omega_{g'}U_{g'},
    \end{equation*}
    and by independence
    \begin{equation*}
        \Var(U_g-\overline U)=(1-\Omega_g)^2\Var(U_g)+\sum_{g'\ne g}\Omega_{g'}^2\Var(U_{g'})=(1-2\Omega_g)\Var(U_g)+\Var(\overline U).
    \end{equation*}
    Hence
    \begin{equation*}
        \bbE\left(U_g-\overline U\right)^2=(1-2\Omega_g)\Var(U_g)+\Var(\overline U)+\left\{\bbE(U_g)-\bbE(\overline U)\right\}^2.
    \end{equation*}
    Multiplying by $\Omega_g^2/(1-2\Omega_g)$ and summing over $g$,
    \begin{equation*}
        \sum_{g=1}^G\frac{\Omega_g^2\,\bbE\left(U_g-\overline U\right)^2}{1-2\Omega_g}=\Var(\overline U)+\rho\Var(\overline U)+\sum_{g=1}^G\frac{\Omega_g^2\left\{\bbE(U_g)-\bbE(\overline U)\right\}^2}{1-2\Omega_g}.
    \end{equation*}
    Dividing by $1+\rho$ gives \eqref{eq:between}.
\end{proof}

\begin{proof}[Proof of Proposition \ref{prop:variance}]
    By Section \ref{sec:notation}, each $\htheta_i$ is a function of $S_i$ alone, so the block estimate $\hvartheta_g$ is a function of the onsets of block $g$. Onsets of different blocks are independent, and thus are $\hvartheta_1,\dots,\hvartheta_G$. They have finite variances because each takes finitely many values. By Corollary \ref{cor:standardized}, $\bbE(\hvartheta_g)=\Omega_g^{-1}\sum_{i\in\cC_g}\omega_i\theta_i=\vartheta_g$ and $\bbE(\htheta)=\theta$. Lemma \ref{lem:between} with $U_g=\hvartheta_g$ and $\overline U=\htheta$ gives \eqref{eq:variance}. With $\Omega_g=1/G$ for every $g$, $\rho=G\cdot G^{-2}/(1-2/G)=1/(G-2)$ and
    \begin{equation*}
        \frac{\Omega_g^2}{(1+\rho)(1-2\Omega_g)}=\frac{G^{-2}}{\{(G-1)/(G-2)\}\{(G-2)/G\}}=\frac1{G(G-1)},
    \end{equation*}
    which gives the stated form of $\hV$ for equal block weights.
\end{proof}

\section{Proof of Theorem \ref{thm:coverage}}
\label{sec:coverage}

\subsection{Three Facts}

Let $\Phi$ be the standard normal distribution function and $z_\alpha=\Phi^{-1}(1-\alpha/2)$, so that $z_\alpha>0$.

\begin{lemma}[\citealp{Feller1971}, Section XVI.5, Theorem 2]
    \label{lem:be}
    Let $\xi_1,\dots,\xi_G$ be independent random variables with $\bbE(\xi_g)=0$ and $\bbE|\xi_g|^3<\infty$, and let $V_\xi=\sum_{g=1}^G\bbE(\xi_g^2)>0$. Then, for every real $x$,
    \begin{equation*}
        \left|\bbP\left(\sum_{g=1}^G\xi_g\le xV_\xi^{1/2}\right)-\Phi(x)\right|\le\frac{6\sum_{g=1}^G\bbE|\xi_g|^3}{V_\xi^{3/2}}.
    \end{equation*}
\end{lemma}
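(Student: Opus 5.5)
This is the classical Berry--Esseen inequality for independent, non-identically distributed summands, which the paper cites from \citet{Feller1971}. I would reproduce Feller's argument through Esseen's smoothing inequality, keeping the constants explicit.

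First, normalize. Set $\eta_g=\xi_g/V_\xi^{1/2}$, so that $\sum_g\bbE(\eta_g^2)=1$, and let $\ell=\sum_g\bbE|\eta_g|^3=\sum_g\bbE|\xi_g|^3/V_\xi^{3/2}$. The claim becomes $\sup_x|F(x)-\Phi(x)|\le6\ell$, where $F$ is the distribution function of $\sum_g\eta_g$. Both $F$ and $\Phi$ lie in $[0,1]$, so the difference never exceeds 1. Hence the claim is trivial when $\ell\ge1/6$, and I may assume $\ell<1/6$. By Lyapunov's inequality, $\sigma_g^3\le\bbE|\eta_g|^3\le\ell$ for every $g$, where $\sigma_g^2=\bbE(\eta_g^2)$. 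So every $\sigma_g$ is small, and this is used repeatedly below.

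Second, apply Esseen's smoothing inequality. With $\varphi$ the characteristic function of $\sum_g\eta_g$ and any $T>0$,
\begin{equation*}
\sup_x|F(x)-\Phi(x)|\le\frac1\pi\int_{-T}^{T}\frac{|\varphi(u)-e^{-u^2/2}|}{|u|}\,du+\frac{24}{\pi\sqrt{2\pi}\,T},
\end{equation*}
where $24/\sqrt{2\pi}$ bounds the density of $\Phi$ times 24. I would take $T=4/(3\ell)$, so the last term is a fixed multiple of $\ell$.

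Third, bound the characteristic function for $|u|\le T$. Write $\varphi_g$ for the characteristic function of $\eta_g$. A third-order Taylor expansion gives
\begin{equation*}
\left|\varphi_g(u)-1+\tfrac12\sigma_g^2u^2\right|\le\tfrac16|u|^3\bbE|\eta_g|^3 .
\end{equation*}
The same expansion holds for $e^{-\sigma_g^2u^2/2}$ with a remainder of order $\sigma_g^4u^4$, which is at most $\sigma_g^3|u|^3\cdot\sigma_g|u|$.

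For the product I would use $|\prod_g a_g-\prod_g b_g|\le\sum_g|a_g-b_g|\prod_{g'\ne g}\max(|a_{g'}|,|b_{g'}|)$. On the range where $\sigma_g|u|$ is small, this yields
\begin{equation*}
|\varphi(u)-e^{-u^2/2}|\le C\,\ell|u|^3e^{-u^2/4}.
\end{equation*}
Getting the factor $e^{-u^2/4}$ requires bounding each $|\varphi_g(u)|$. I would do this through the symmetrized variable $\eta_g-\eta_g'$:
\begin{equation*}
|\varphi_g(u)|^2\le1-\sigma_g^2u^2+\tfrac43|u|^3\bbE|\eta_g|^3\le\exp\left(-\sigma_g^2u^2+\tfrac43|u|^3\bbE|\eta_g|^3\right).
\end{equation*}
Taking the product over $g$ gives $|\varphi(u)|\le\exp(-u^2/2+\tfrac23\ell|u|^3)\le e^{-u^2/4}$ for $|u|\le T$, with $T$ chosen so that $\tfrac23\ell T\le\tfrac14$ (this may force a slightly smaller $T$ than $4/(3\ell)$). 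Following Feller, I would split into $|u|\le(2\ell)^{-1/3}$, where the product comparison applies, and the rest of $[-T,T]$, where both $|\varphi|$ and $e^{-u^2/2}$ are bounded separately by $e^{-u^2/4}$ and $8\ell|u|^3\ge1$ absorbs the bound.

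Finally, integrate. Since $\int|u|^2e^{-u^2/4}\,du$ is finite, the integral term is at most a constant times $\ell$, and adding the smoothing term gives $\sup_x|F(x)-\Phi(x)|\le C'\ell$. The main obstacle is the bookkeeping needed to show $C'\le6$: balancing the choice of $T$ against the constants in the characteristic-function bound. If the sharp tracking fails, I would first try alternative cutoffs, such as $T=1/\ell$. Failing that, I would simply invoke the cited theorem, since the paper only needs some absolute constant.
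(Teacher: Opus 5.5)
The paper gives no proof of this lemma; it quotes Feller's theorem. Your closing fallback of invoking the cited result is therefore exactly the paper's route.

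The reconstruction you outline, however, cannot deliver the constant $6$ as written, and the obstruction is the choice of $T$, not bookkeeping. The smoothing term $24/(\pi\sqrt{2\pi}\,T)\approx3.05/T$ is below $6\ell$ only when $T>4/(\pi\sqrt{2\pi}\,\ell)\approx0.51/\ell$. Your envelope $|\varphi(u)|\le\exp(-u^2/2+\tfrac23\ell|u|^3)\le e^{-u^2/4}$ holds only for $|u|\le3/(8\ell)$. So the requirement $\tfrac23\ell T\le\tfrac14$ does not shave $T$ slightly below $4/(3\ell)$; it forces $T\le3/(8\ell)$. There the smoothing term alone is $64\ell/(\pi\sqrt{2\pi})\approx8.1\ell>6\ell$, before the integral contributes anything.

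Keeping $T=4/(3\ell)$ does not help either. The symmetrized bound says nothing on $3/(4\ell)<|u|\le T$, because its exponent is positive there. Bounding $|\varphi|$ by $1$ on that range costs up to $\tfrac2\pi\ln\tfrac{16}{9}\approx0.37$, which is not $O(\ell)$. Hence the step ``$|\varphi(u)|\le e^{-u^2/4}$ for all $|u|\le T$'' fails for every $T$ that leaves room for the constant $6$.

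A repair would have to do the following:
\begin{enumerate}
\item Take $T$ well above $0.51/\ell$, for instance near $3/(4\ell)$, and drop the uniform Gaussian envelope.
\item Run the product comparison on the whole range where every $\sigma_g|u|\le\sqrt2$ is guaranteed, namely $|u|\le\sqrt2\,\ell^{-1/3}$.
\item Bound the rest directly by $\exp\{-\tfrac12u^2(1-\tfrac43\ell|u|)\}$.
\item Verify explicitly that the integral stays below $6\ell-24/(\pi\sqrt{2\pi}\,T)$.
\end{enumerate}
Alternatively, you could supply a sharper estimate of $|\varphi|$ for $|u|$ of order $1/\ell$. None of this is in your outline, which as it stands yields a Berry--Esseen bound only with a constant above $8$.

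Your remark that the paper only needs some absolute constant is only half right. The asymptotic conclusions of Theorem~\ref{thm:coverage} would survive any absolute constant. But the lemma as stated asserts $6$, and the explicit finite-$G$ coverage bounds rest on that value: the factor $48=6\cdot8$ in $\Lambda_G$ of Lemma~\ref{lem:linear} comes directly from it. Proving the inequality with a larger constant would establish a weaker lemma. It would also require changing $\Lambda_G$ and the supplement's finite-$G$ bounds built on it. Citing Feller, as the paper does, avoids all of this.
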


\begin{lemma}
    \label{lem:tquantile}
    For every integer $\nu\ge1$, $q_\nu\ge z_\alpha$.
\end{lemma}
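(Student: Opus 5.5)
We need to show that $q_\nu\ge z_\alpha$ for every integer $\nu\ge1$, i.e.\ that the $(1-\alpha/2)$-quantile of the $t$ distribution with $\nu$ degrees of freedom is at least the matching standard normal quantile. Write $F_\nu$ for the $t_\nu$ distribution function. Both distributions are symmetric about $0$, and $1-\alpha/2>1/2$, so both quantiles are positive. Since $F_\nu$ is continuous and strictly increasing, it suffices to show $F_\nu(z)\le\Phi(z)$ for every $z>0$: then $F_\nu(z_\alpha)\le\Phi(z_\alpha)=1-\alpha/2$, which forces $q_\nu\ge z_\alpha$.

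To prove the tail comparison, represent a $t_\nu$ variable as $T=Z/\sqrt{V/\nu}$ with $Z$ standard normal and $V\sim\chi^2_\nu$ independent. Conditioning on $V$, for $z>0$,
\begin{equation*}
F_\nu(z)=\bbE\left\{\Phi\left(z\sqrt{V/\nu}\right)\right\}.
\end{equation*}
The function $u\mapsto\Phi(z\sqrt u)$ is concave on $u\ge0$ for fixed $z>0$. Indeed, with $\varphi$ the normal density and $h(u)=\Phi(z\sqrt u)$, we have $h'(u)=z\varphi(z\sqrt u)/(2\sqrt u)$, and both $\varphi(z\sqrt u)$ and $1/\sqrt u$ are decreasing in $u$, so $h'$ is decreasing. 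Since $\bbE(V/\nu)=1$, Jensen's inequality gives $F_\nu(z)\le\Phi(z\cdot1)=\Phi(z)$. This completes the argument.

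The only point needing care is the concavity check, and the monotone-derivative observation above handles it without computing second derivatives. An alternative route would compare densities, noting that the ratio of the $t_\nu$ density to $\varphi$ is increasing in $|x|$ and applying a single-crossing argument. The Jensen route is shorter, so I would use it.
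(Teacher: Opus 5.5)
Your proof is correct and follows the paper's route: write the $t_\nu$ variable as a normal scale mixture, condition on the chi-squared variable, and apply Jensen's inequality. The only difference is cosmetic. The paper applies concavity of $\Phi$ on $[0,\infty)$ to $xJ$ with $J=(\chi^2/\nu)^{1/2}$, then uses $\bbE(J)\le1$ and the monotonicity of $\Phi$. You instead fold the square root into the concave map $u\mapsto\Phi(z\sqrt u)$, so that $\bbE(V/\nu)=1$ applies directly.
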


\begin{proof}
    Let $Z$ be standard normal, and let $\chi^2$ be independent of $Z$ with a chi-squared distribution with $\nu$ degrees of freedom. Put $J=(\chi^2/\nu)^{1/2}$, so that $Z/J$ has the $t$ distribution with $\nu$ degrees of freedom and $\bbE(J)\le\{\bbE(J^2)\}^{1/2}=1$. Let $x>0$. The function $\Phi$ is concave and increasing on $[0,\infty)$, and $xJ\ge0$. Conditioning on $J$ and using Jensen's inequality,
    \begin{equation*}
        \bbP\left(\frac ZJ\le x\right)=\bbE\{\Phi(xJ)\}\le\Phi\{x\,\bbE(J)\}\le\Phi(x).
    \end{equation*}
    Take $x=z_\alpha$. Then, $\bbP(Z/J\le z_\alpha)\le1-\alpha/2$, so the $1-\alpha/2$ quantile $q_\nu$ of $Z/J$ is at least $z_\alpha$.
\end{proof}

The next lemma collects what the proof needs about a weighted sum of independent bounded terms. Let $C_\omega$ and $\lv$ be the constants of Assumptions \ref{ass:weights} and \ref{ass:nondegenerate}. Since $1=\sum_{g=1}^G\Omega_g\le C_\omega$ under Assumption \ref{ass:weights}, $C_\omega\ge1$.

\begin{lemma}
    \label{lem:linear}
    Let Assumption \ref{ass:weights} hold and $G\ge4C_\omega$. Let $U_1,\dots,U_G$ be independent with $|U_g|\le M_\ast$ for a constant $M_\ast$, and suppose that $G\Var(\overline U)\ge\lv>0$. Then,
    \begin{align}
        &\Var(\overline U)\le\frac{C_\omega^2M_\ast^2}G,\label{eq:l1} \displaybreak[0]\\
        &\bbP\left\{\left|\overline U-\bbE(\overline U)\right|\le x\Var(\overline U)^{1/2}\right\}\ge2\Phi(x)-1-2\Lambda_G\quad\text{for every }x\ge0,\quad\Lambda_G=\frac{48C_\omega^3M_\ast^3}{\lv^{3/2}G^{1/2}},\label{eq:l2}\displaybreak[0]\\
        &\bbP\left\{\cV(U)<(1-2\delta)\Var(\overline U)\right\}\le\eta_G(\delta)\quad\text{for every }\delta\in(0,1/2),\label{eq:l3}
    \end{align}
    where
    \begin{equation*}
        \eta_G(\delta)=\frac{64C_\omega^4M_\ast^4}{\delta^2\lv^2G}+\frac{16C_\omega^3M_\ast^2}{\delta\lv G^{1/2}}+\frac{4C_\omega^4M_\ast^2}{\delta\lv G}.
    \end{equation*}
\end{lemma}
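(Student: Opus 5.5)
We prove the three bounds in turn. Each uses only the independence of the $U_g$, the bound $|U_g|\le M_\ast$, and $\Omega_g\le C_\omega/G$.

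\emph{Bound \eqref{eq:l1}.} By Lemma \ref{lem:between}, $\Var(\overline U)=\sum_g\Omega_g^2\Var(U_g)$. Since $\Var(U_g)\le M_\ast^2$ and $\Omega_g^2\le C_\omega^2/G^2$, summing over the $G$ blocks gives $\Var(\overline U)\le C_\omega^2M_\ast^2/G$.

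\emph{Bound \eqref{eq:l2}.} Apply Lemma \ref{lem:be} with $\xi_g=\Omega_g\{U_g-\bbE(U_g)\}$, so that $V_\xi=\Var(\overline U)\ge\lv/G$. Each term satisfies $|\xi_g|\le2C_\omega M_\ast/G$, hence
\[
\bbE|\xi_g|^3\le\frac{2C_\omega M_\ast}{G}\,\bbE(\xi_g^2).
\]
Summing over $g$ gives $\sum_g\bbE|\xi_g|^3\le(2C_\omega M_\ast/G)V_\xi$. The Berry--Esseen ratio is therefore at most
\[
\frac{12C_\omega M_\ast}{G\,V_\xi^{1/2}}\le\frac{12C_\omega M_\ast}{\lv^{1/2}G^{1/2}}.
\]
This is below $\Lambda_G$, because $C_\omega\ge1$ and $M_\ast^2\ge V_\xi G/C_\omega^2\ge\lv/C_\omega^2$; the latter gives $C_\omega^2M_\ast^2/\lv\ge1$, so the stated constant has enough room. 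The two-sided probability is then at least $\Phi(x)-\Phi(-x)$ minus twice the Berry--Esseen error, which is $2\Phi(x)-1-2\Lambda_G$. The event $\{\le x\}$ versus $\{<-x\}$ is handled by applying the bound at $-x$ and taking limits from the left; the inequality still holds because Lemma \ref{lem:be} is uniform in $x$.

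\emph{Bound \eqref{eq:l3}.} This is the main step. Write $D_g=U_g-\bbE(U_g)$, $\overline D=\overline U-\bbE(\overline U)$ and $\mu_g=\bbE(U_g)-\bbE(\overline U)$, and set $\kappa_g=\Omega_g^2/\{(1+\rho)(1-2\Omega_g)\}$. Because $G\ge4C_\omega$ gives $\Omega_g\le1/4$, we have $\kappa_g\le2\Omega_g^2$, and $\rho\le2C_\omega^2/G$. Expand
\[
\cV(U)=\sum_g\kappa_g(D_g-\overline D+\mu_g)^2\ge\sum_g\kappa_g(D_g-\overline D)^2+2\sum_g\kappa_g\mu_g(D_g-\overline D).
\]
Further, $(D_g-\overline D)^2\ge D_g^2-2D_g\overline D$. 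The main term is $Q=\sum_g\kappa_gD_g^2$. By the computation in Lemma \ref{lem:between},
\[
\bbE(Q)=\sum_g\kappa_g\Var(U_g),
\]
which is at least $\Var(\overline U)$ up to a relative error of order $1/G$, namely the factor $(1+\rho)^{-1}$. The event $\cV(U)<(1-2\delta)\Var(\overline U)$ then forces one of three deviations, each handled by Chebyshev's inequality. First, $Q<(1-\delta)\Var(\overline U)$ roughly: since $\Var(Q)\le\sum_g\kappa_g^2M_\ast^4\cdot16\lesssim C_\omega^4M_\ast^4/G^3$ and $\Var(\overline U)\ge\lv/G$, this gives the first term of $\eta_G$. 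Second, the cross term $2\sum_g\kappa_gD_g\overline D$: bound $|\overline D|\le2M_\ast$ and control $\sum_g\kappa_g|D_g|$ and $\overline D$ in mean via Markov's inequality; this gives the $G^{-1/2}$ term. Third, the linear term in $\mu_g$: it has mean zero and variance at most $\sum_g4\kappa_g^2\mu_g^2\Var(U_g)\lesssim C_\omega^4M_\ast^4/G^3$, so Chebyshev applied with threshold $\delta\Var(\overline U)$, using $\mu_g^2\le4M_\ast^2$, gives the last term. Alternatively, this linear term can be absorbed with the nonnegative $\mu_g^2$ term by completing the square. The bookkeeping of the constants in $\eta_G(\delta)$, and the choice of splitting the $2\delta$ slack between the three pieces (for instance $\delta/2$, $\delta/2$, $\delta$), is the expected obstacle; the orders $1/(\delta^2G)$, $1/(\delta G^{1/2})$ and $1/(\delta G)$ follow from the moment bounds above.
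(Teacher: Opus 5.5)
Your proofs of \eqref{eq:l1} and \eqref{eq:l2} are correct. In \eqref{eq:l2} you use $\bbE|\xi_g|^3\le(2C_\omega M_\ast/G)\,\bbE(\xi_g^2)$, which is sharper than the paper's bound $|\xi_g|^3\le(2C_\omega M_\ast/G)^3$. You then correctly relax it to $\Lambda_G$ through $\lv\le G\Var(\overline U)\le C_\omega^2M_\ast^2$.

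The gap is in \eqref{eq:l3}. After you discard $\overline D^{\,2}\sum_g\kappa_g$ and $\sum_g\kappa_g\mu_g^2$, your lower bound $Q-2\overline D\sum_g\kappa_gD_g+2\sum_g\kappa_g\mu_g(D_g-\overline D)$ has mean
\[
\sum_{g=1}^G\kappa_g(1-2\Omega_g)\Var(U_g)=\frac{\Var(\overline U)}{1+\rho},
\]
not $\Var(\overline U)$. The discarded term $\overline D^{\,2}\sum_g\kappa_g$ has mean $\rho\Var(\overline U)/(1+\rho)$, and it is exactly what lifts $\bbE\{\cV(U)\}$ to $\Var(\overline U)$ in Lemma \ref{lem:between}. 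Your three deviation events are therefore measured against the wrong center.

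Take your first event $\{Q<(1-\delta)\Var(\overline U)\}$. Chebyshev bounds its probability by a $\delta^{-2}$ term only if $\bbE(Q)$ exceeds $(1-\delta)\Var(\overline U)$ by a margin of order $\delta\Var(\overline U)$. You only know $\bbE(Q)\ge\Var(\overline U)/(1+\rho)$, and nothing ties $\rho$ to $\delta$. Under $G\ge4C_\omega$ alone, $\rho$ can be of order one: two blocks of weight $1/4$ already give $\rho\ge1/4$. If the variance then sits in the small-weight blocks, $\bbE(Q)$ falls below $(1-\delta)\Var(\overline U)$ and the event is not a rare deviation at all.

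The bookkeeping also does not reproduce $\eta_G(\delta)$:
\begin{itemize}
\item Chebyshev on your linear term gives a term of type $M_\ast^4/(\delta^2\lv^2G)$, not $4C_\omega^4M_\ast^2/(\delta\lv G)$. In the paper that third term comes precisely from the $\overline D^{\,2}$ piece you dropped.
\item Your variance for the linear term omits its $-2\overline D\sum_g\kappa_g\mu_g$ part.
\item The completing-the-square alternative merely reassembles $(D_g-\overline D+\mu_g)^2$.
\item In the cross term the roles are reversed. You need $|\sum_g\kappa_gD_g|\le4C_\omega^2M_\ast/G$ deterministically and $\bbE|\overline D|\le\Var(\overline U)^{1/2}\le C_\omega M_\ast/G^{1/2}$. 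Bounding $|\overline D|\le2M_\ast$ instead leaves a Markov bound that does not decay in $G$.
\end{itemize}

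The missing idea is to keep an exact identity rather than a lossy lower bound. The paper centers at the common mean $u=\bbE(\overline U)$ instead of at $\bbE(U_g)$, and drops nothing: $\cV(U)=Q_1+Q_2$. Here $Q_1=\sum_g\widetilde\Omega_g(U_g-u)^2$ absorbs the heterogeneity $\bbE(U_g)-u$ while its terms stay bounded by $4M_\ast^2$, and $Q_2=-2(\overline U-u)\sum_g\widetilde\Omega_g(U_g-u)+(\overline U-u)^2\sum_g\widetilde\Omega_g$. Then $\bbE(Q_1)+\bbE(Q_2)=\bbE\{\cV(U)\}\ge\Var(\overline U)$ by Lemma \ref{lem:between}. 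Only two deviations are needed, each at threshold $\delta\lv/G\le\delta\Var(\overline U)$:
\begin{itemize}
\item Chebyshev with $\Var(Q_1)\le64C_\omega^4M_\ast^4/G^3$ gives the first term of $\eta_G(\delta)$.
\item Markov with $\bbE|Q_2-\bbE(Q_2)|\le2\bbE|Q_2|\le16C_\omega^3M_\ast^2/G^{3/2}+4C_\omega^4M_\ast^2/G^2$ gives the other two.
\end{itemize}

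If you want to keep your decomposition, first note that $\eta_G(\delta)\ge4C_\omega^4M_\ast^2/(\delta\lv G)\ge4C_\omega^2/(\delta G)\ge2\rho/\delta$. So \eqref{eq:l3} is trivial unless $\rho<\delta/2$, which removes the centering problem. The slack must then be re-split and the constants recomputed, and matching $\eta_G(\delta)$ exactly is not automatic.
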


\begin{proof}
    Write $u_g=\bbE(U_g)$ and $u=\bbE(\overline U)=\sum_{g=1}^G\Omega_gu_g$, so that $|u_g|\le M_\ast$ and $|u|\le M_\ast$. By Lemma \ref{lem:between}, $\Var(U_g)\le\bbE(U_g^2)\le M_\ast^2$ and Assumption \ref{ass:weights},
    \begin{equation*}
        \Var(\overline U)=\sum_{g=1}^G\Omega_g^2\Var(U_g)\le M_\ast^2\sum_{g=1}^G\frac{C_\omega^2}{G^2}=\frac{C_\omega^2M_\ast^2}G,
    \end{equation*}
    which is \eqref{eq:l1}. For \eqref{eq:l2}, let $\xi_g=\Omega_g(U_g-u_g)$. Then, $\overline U-u=\sum_{g=1}^G\xi_g$, $V_\xi=\Var(\overline U)\ge\lv/G$, and $|\xi_g|\le2C_\omega M_\ast/G$, so
    \begin{equation}
        \frac{6\sum_{g=1}^G\bbE|\xi_g|^3}{V_\xi^{3/2}}\le\frac{6\,G\,(2C_\omega M_\ast/G)^3}{(\lv/G)^{3/2}}=\frac{48C_\omega^3M_\ast^3}{\lv^{3/2}G^{1/2}}=\Lambda_G.
        \label{eq:be2}
    \end{equation}
    Let $F$ be the distribution function of $(\overline U-u)/V_\xi^{1/2}$. By Lemma \ref{lem:be} and \eqref{eq:be2}, $F(x)\ge\Phi(x)-\Lambda_G$, and $F(y)\le\Phi(y)+\Lambda_G$ for every $y<-x$. Letting $y$ increase to $-x$ and using the continuity of $\Phi$,
    \begin{equation*}
        \bbP\left\{\left|\overline U-u\right|\le xV_\xi^{1/2}\right\}=F(x)-\lim_{y\uparrow-x}F(y)\ge\Phi(x)-\Lambda_G-\Phi(-x)-\Lambda_G=2\Phi(x)-1-2\Lambda_G.
    \end{equation*}

    For \eqref{eq:l3}, let $\widetilde\Omega_g=\Omega_g^2/\{(1+\rho)(1-2\Omega_g)\}$, so that $\cV(U)=\sum_{g=1}^G\widetilde\Omega_g(U_g-\overline U)^2$. Since $G\ge4C_\omega$, Assumption \ref{ass:weights} gives $\Omega_g\le1/4$ and $1-2\Omega_g\ge1/2$. With $\rho\ge0$,
    \begin{equation}
        \widetilde\Omega_g\le2\Omega_g^2\le\frac{2C_\omega^2}{G^2},\qquad\sum_{g=1}^G\widetilde\Omega_g\le\frac{2C_\omega^2}G.
        \label{eq:nu}
    \end{equation}
    Expanding $(U_g-\overline U)^2=\{(U_g-u)-(\overline U-u)\}^2$, we have
    \begin{equation}
        \cV(U)=Q_1+Q_2,\quad Q_1=\sum_{g=1}^G\widetilde\Omega_g(U_g-u)^2,\quad Q_2=-2(\overline U-u)\sum_{g=1}^G\widetilde\Omega_g(U_g-u)+(\overline U-u)^2\sum_{g=1}^G\widetilde\Omega_g.
        \label{eq:QR}
    \end{equation}
    The terms of $Q_1$ are independent and $|U_g-u|\le2M_\ast$. By \eqref{eq:nu},
    \begin{equation*}
        \Var(Q_1)=\sum_{g=1}^G\widetilde\Omega_g^2\Var\{(U_g-u)^2\}\le G\left(\frac{2C_\omega^2}{G^2}\right)^2(2M_\ast)^4=\frac{64C_\omega^4M_\ast^4}{G^3},
    \end{equation*}
    and Chebyshev's inequality gives
    \begin{equation}
        \bbP\left\{|Q_1-\bbE(Q_1)|>\frac{\delta\lv}G\right\}\le\frac{G^2\Var(Q_1)}{\delta^2\lv^2}\le\frac{64C_\omega^4M_\ast^4}{\delta^2\lv^2G}.
        \label{eq:Qtail}
    \end{equation}
    By \eqref{eq:nu}, $|\sum_{g=1}^G\widetilde\Omega_g(U_g-u)|\le4C_\omega^2M_\ast/G$. With $\bbE|\overline U-u|\le\Var(\overline U)^{1/2}$, $\bbE(\overline U-u)^2=\Var(\overline U)$ and \eqref{eq:l1},
    \begin{equation*}
        \bbE|Q_2|\le\frac{8C_\omega^2M_\ast}G\cdot\frac{C_\omega M_\ast}{G^{1/2}}+\frac{2C_\omega^2}G\cdot\frac{C_\omega^2M_\ast^2}G=\frac{8C_\omega^3M_\ast^2}{G^{3/2}}+\frac{2C_\omega^4M_\ast^2}{G^2}.
    \end{equation*}
    Markov's inequality and $\bbE|Q_2-\bbE(Q_2)|\le2\bbE|Q_2|$ give
    \begin{equation}
        \bbP\left\{|Q_2-\bbE(Q_2)|>\frac{\delta\lv}G\right\}\le\frac{2G\,\bbE|Q_2|}{\delta\lv}\le\frac{16C_\omega^3M_\ast^2}{\delta\lv G^{1/2}}+\frac{4C_\omega^4M_\ast^2}{\delta\lv G}.
        \label{eq:Rtail}
    \end{equation}
    On the complement of the events in \eqref{eq:Qtail} and \eqref{eq:Rtail}, \eqref{eq:QR}, Lemma \ref{lem:between} and $\lv/G\le\Var(\overline U)$ give
    \begin{equation*}
        \cV(U)\ge\bbE\{\cV(U)\}-\frac{2\delta\lv}G\ge\Var(\overline U)-2\delta\Var(\overline U).
    \end{equation*}
    The sum of the bounds in \eqref{eq:Qtail} and \eqref{eq:Rtail} is $\eta_G(\delta)$, which proves \eqref{eq:l3}.
\end{proof}

\subsection{Bounds on the Case-Level Quantities}

Let $M$ and $\lpi$ be the constants of Assumptions \ref{ass:bounded} and \ref{ass:overlap}, and put $M_\ast=\max(4M,1)/\lpi$. Every case has a window session $t$, and $\lpi\le\pi_{it}\le1-\lpi$ gives $\lpi\le1/2$, so $M_\ast\ge2$.

\begin{lemma}
    \label{lem:bounds}
    Let Assumptions \ref{ass:anticipation}, \ref{ass:bounded} and \ref{ass:overlap} hold. For every case $i$, every $w\in\{0,1\}$ and every vector of onsets in the support of the design,
    \begin{equation}
        |H_{iw}|\le\frac M\lpi,\qquad0\le X_{iw}\le\frac1\lpi,\qquad\left|\htheta^{\rm b}_i\right|\le M_\ast,\qquad\left|\htheta^{\rm c}_i\right|\le M_\ast.
        \label{eq:bd1}
    \end{equation}
    Let $\mu_w=\bbE(\overline H_w)$, as in Definition \ref{def:estimators}(f). Then, $|\mu_w|\le M$, $\mu_1-\mu_0=\theta$, and the quantity
    \begin{equation}
        e_i=\left(H_{i1}-\mu_1X_{i1}\right)-\left(H_{i0}-\mu_0X_{i0}\right)
        \label{eq:ei}
    \end{equation}
    satisfies $|e_i|\le M_\ast$ and $\bbE(e_i)=\theta_i-\theta$. The bounds on $H_{iw}$ and $X_{iw}$ hold for $H_{(g)w}$ and $X_{(g)w}$, and the bounds on $\htheta^{\rm b}_i$, $\htheta^{\rm c}_i$ and $e_i$ hold for their block averages.
\end{lemma}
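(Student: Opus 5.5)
The bounds follow directly from the definitions. I will use Assumption \ref{ass:overlap} for the inverse weights and Assumption \ref{ass:bounded} for the centered outcomes, and then take expectations with Lemma \ref{lem:weights}.

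\emph{Bounds on $H_{iw}$ and $X_{iw}$.} For every window session, $W_{it}/\pi_{it}\le1/\lpi$ and $(1-W_{it})/(1-\pi_{it})\le1/\lpi$. Averaging over the $m_i$ window sessions gives $0\le X_{iw}\le1/\lpi$. Assumption \ref{ass:bounded} gives $|\widetilde Y_{it}|\le M$ because $\overline Y_{i\cA}=\overline Y_{i\cA}(s)$ by \eqref{eq:prefixed}. Hence $|H_{iw}|\le M/\lpi$.

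\emph{Bounds on $\htheta^{\rm b}_i$ and $\htheta^{\rm c}_i$.} By \eqref{eq:bc}, $|\htheta^{\rm b}_i|\le|H_{i1}|+|H_{i0}|\le2M/\lpi$. Since $\cB_i$ consists of sessions $t$ with $Y_{it}=Y_{it}(S_i)$, Assumption \ref{ass:bounded} gives $|\widetilde Y_{i\cB}|\le M$. Then $|1-X_{i1}|\le\max(1,1/\lpi)=1/\lpi$, so $|\htheta^{\rm c}_i|\le M/\lpi+2M/\lpi\le4M/\lpi\le M_\ast$. This bound needs no Assumption \ref{ass:duration}, since it holds pointwise in $S_i$.

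\emph{Properties of $\mu_w$.} By Lemma \ref{lem:weights} with $\kappa(s)=\widetilde Y_{it}(s)$, we have $\bbE(W_{it}\widetilde Y_{it}/\pi_{it})$ equal to a $p_i$-weighted average of values bounded by $M$. So $|\bbE H_{iw}|\le M$, and, since $\omega_i$ sum to one, $|\mu_w|\le M$. Taking $\kappa$ constant in Lemma \ref{lem:weights} shows the constant offset $\overline Y_{i\cA}$ passes through. Then $\bbE(H_{i1}-H_{i0})=\bbE(\htheta^{\rm b}_i)=\theta_i$ by \eqref{eq:bc} and Corollary \ref{cor:standardized}, hence $\mu_1-\mu_0=\sum_i\omega_i\theta_i=\theta$.

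\emph{Properties of $e_i$.} Using $\bbE X_{iw}=1$ gives $\bbE e_i=\theta_i-(\mu_1-\mu_0)=\theta_i-\theta$. For the bound, $|e_i|\le|H_{i1}|+|H_{i0}|+(|\mu_1|+|\mu_0|)/\lpi\le 4M/\lpi\le M_\ast$.

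\emph{Block averages.} Finally, $H_{(g)w}$, $X_{(g)w}$ and the block averages of $\htheta^{\rm b}_i$, $\htheta^{\rm c}_i$ and $e_i$ are convex combinations, with weights $\omega_i/\Omega_g$, of the case-level quantities. So they obey the same bounds, since the intervals $[-B,B]$ and $[0,1/\lpi]$ are convex.

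The only delicate point is $\widetilde Y_{i\cB}$ in $\htheta^{\rm c}_i$, where one must note that Assumption \ref{ass:bounded} covers post-window sessions for every onset. The rest is bookkeeping.
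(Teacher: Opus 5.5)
Your proof is correct and follows the paper's argument step for step. The steps match: overlap for the inverse weights, Assumption~\ref{ass:bounded} for the centered outcomes, \eqref{eq:bc} for $\htheta^{\rm b}_i$ and $\htheta^{\rm c}_i$, $\bbE(X_{iw})=1$ for $\bbE(e_i)$, the tight bound $|\mu_w|\le M$ to control $\mu_wX_{iw}$ in $|e_i|$, and convexity for block averages. The minor differences are that you apply Lemma~\ref{lem:weights} with $\kappa(s)=\widetilde Y_{it}(s)$ directly, which bypasses the paper's step through \eqref{eq:EH} and $Y^{(0)}_{it}=Y_{it}(b_i)$, and that you take $\mu_1-\mu_0=\theta$ from Corollary~\ref{cor:standardized} instead of recomputing it; both are equally valid.
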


\begin{proof}
    By \eqref{eq:prefixed} and Assumption \ref{ass:bounded}, $|\widetilde Y_{it}(s)|\le M$ for all $i$, $t$ and $s\in\cS_i$. Each term of the four averages in \eqref{eq:sums} has a weight in $[0,1/\lpi]$ by Assumption \ref{ass:overlap}, which gives the first two bounds of \eqref{eq:bd1}. By \eqref{eq:bc}, $|\htheta^{\rm b}_i|\le2M/\lpi\le M_\ast$. Since $|\widetilde Y_{i\cB}|\le M$ as an average of the $\widetilde Y_{it}$, $t\in\cB_i$, and $1\le1/\lpi$,
    \begin{equation*}
        \left|\htheta^{\rm c}_i\right|\le M\left(1+\frac1\lpi\right)+\frac{2M}\lpi\le\frac{4M}\lpi\le M_\ast.
    \end{equation*}
    The outcome $\widetilde Y_{it}$ is $Y_{it}$ shifted and scaled by constants. By \eqref{eq:w2} and $\bbE(W_{it}/\pi_{it})=\bbE\{(1-W_{it})/(1-\pi_{it})\}=1$,
    \begin{equation}
        \bbE(H_{i1})=\frac1{m_i}\sum_{t\in\cW_i}\frac{\overline Y^{(1)}_{it}-\overline Y_{i\cA}}{c_i},\qquad\bbE(H_{i0})=\frac1{m_i}\sum_{t\in\cW_i}\frac{Y^{(0)}_{it}-\overline Y_{i\cA}}{c_i},
        \label{eq:EH}
    \end{equation}
    so $\bbE(H_{i1})-\bbE(H_{i0})=\tau_i/c_i=\theta_i$. For $t\in\cW_i$, $t<b_i$ and $b_i\in\cS_i$, so Assumption \ref{ass:anticipation} gives $Y^{(0)}_{it}=Y_{it}(b_i)$, and Assumption \ref{ass:bounded} gives $|Y^{(0)}_{it}-\overline Y_{i\cA}|\le Mc_i$. By \eqref{eq:celleffect}, $\overline Y^{(1)}_{it}$ is a weighted average of the $Y_{it}(s)$, so both expectations in \eqref{eq:EH} are averages of numbers bounded by $M$ in absolute value. Hence $|\mu_w|\le M$, and $\mu_1-\mu_0=\sum_{i=1}^N\omega_i\theta_i=\theta$. By the bounds already shown, $|e_i|\le2(M/\lpi+M/\lpi)=4M/\lpi\le M_\ast$. By \eqref{eq:EH} and $\bbE(X_{iw})=1$ from Lemma \ref{lem:weights}, $\bbE(e_i)=\bbE(H_{i1})-\mu_1-\bbE(H_{i0})+\mu_0=\theta_i-\theta$. A block average is a weighted average with nonnegative weights that sum to 1, so it satisfies the same bounds.
\end{proof}

\subsection{The Pooled Estimator}

\begin{propositionS}
    \label{prop:rate}
    Let Assumptions \ref{ass:anticipation} and \ref{ass:weights}--\ref{ass:nondegenerate} hold, let $G\ge4C_\omega$, and let $\Lambda_G$ and $\eta_G$ be as in Lemma \ref{lem:linear} with $M_\ast=\max(4M,1)/\lpi$. For the pooled estimator $\htheta=\htheta^{\rm b}$ in Definition \ref{def:estimators}(e), and for every $\delta\in(0,1/2)$,
    \begin{align}
        &\sup_{x\in\bbR}\left|\bbP\left\{\htheta-\theta\le x\Var(\htheta)^{1/2}\right\}-\Phi(x)\right|\le\Lambda_G,\label{eq:r1}\\
        &\bbP\left\{\left|\htheta-\theta\right|\le q_{G-1}\hV^{1/2}\right\}\ge2\Phi\left\{z_\alpha(1-2\delta)^{1/2}\right\}-1-2\Lambda_G-\eta_G(\delta).\label{eq:r2}
    \end{align}
    Under Assumption \ref{ass:duration} in addition, the same holds for $\htheta=\htheta^{\rm c}$.
\end{propositionS}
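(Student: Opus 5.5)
The plan is to apply Lemma \ref{lem:linear} with $U_g=\hvartheta_g$, so that $\overline U=\htheta$ and $\cV(U)=\hV$. The bound $M_\ast=\max(4M,1)/\lpi$ is the one fixed before Lemma \ref{lem:bounds}. First I check the hypotheses. By Section \ref{sec:notation}, each $\htheta_i$ is a function of $S_i$ alone, so $\hvartheta_g$ is a function of the onsets of block $g$. The $\hvartheta_g$ are therefore independent across blocks. Lemma \ref{lem:bounds} gives $|\hvartheta_g|\le M_\ast$ for $\htheta^{\rm b}$, and for $\htheta^{\rm c}$ as well, since its bound in \eqref{eq:bd1} uses only Assumptions \ref{ass:anticipation}, \ref{ass:bounded} and \ref{ass:overlap}. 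Corollary \ref{cor:standardized} gives $\bbE(\htheta)=\theta$; for $\htheta^{\rm c}$ this needs Assumption \ref{ass:duration}, which is the only place that assumption enters. Finally, $L=\htheta-\theta$, so Assumption \ref{ass:nondegenerate} reads $G\Var(\htheta)\ge\lv$. Together with $G\ge4C_\omega$ and Assumption \ref{ass:weights}, all conditions of Lemma \ref{lem:linear} hold.

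For \eqref{eq:r1}, I set $\xi_g=\Omega_g(\hvartheta_g-\vartheta_g)$ and apply Lemma \ref{lem:be} directly. The bound \eqref{eq:be2} on the Berry--Esseen ratio is already $\Lambda_G$, and it holds for every $x$. This is the same computation as in the proof of \eqref{eq:l2}, without symmetrizing.

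For \eqref{eq:r2}, fix $\delta$ and let $E$ be the event $\hV\ge(1-2\delta)\Var(\htheta)$. By \eqref{eq:l3}, $\bbP(E^c)\le\eta_G(\delta)$. On $E$, Lemma \ref{lem:tquantile} gives $q_{G-1}\hV^{1/2}\ge z_\alpha(1-2\delta)^{1/2}\Var(\htheta)^{1/2}$. Hence the event $|\htheta-\theta|\le x\Var(\htheta)^{1/2}$ with $x=z_\alpha(1-2\delta)^{1/2}$, intersected with $E$, is contained in the coverage event. A union bound then gives
\[
\bbP\{|\htheta-\theta|\le q_{G-1}\hV^{1/2}\}\ge\bbP\{|\htheta-\theta|\le x\Var(\htheta)^{1/2}\}-\bbP(E^c),
\]
and \eqref{eq:l2} bounds the first term below by $2\Phi(x)-1-2\Lambda_G$.

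I expect no step to be a real obstacle. The main care point is confirming that the random scale introduces no extra randomness: $c_i$ and $\overline Y_{i\cA}$ are constants by \eqref{eq:prefixed}. For $\htheta^{\rm c}$, $\overline Y_{i\cB}$ varies with the onset without Assumption \ref{ass:duration}, but it is still a function of $S_i$ only, so independence and boundedness hold regardless. Unbiasedness is the one property that genuinely needs Assumption \ref{ass:duration}.
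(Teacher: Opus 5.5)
Your proposal is correct and follows the paper's proof essentially step for step. You apply Lemma \ref{lem:linear} with $U_g=\hvartheta_g$, using independence across blocks, the bound $M_\ast$ from Lemma \ref{lem:bounds} and unbiasedness from Corollary \ref{cor:standardized}; you obtain \eqref{eq:r1} from Lemma \ref{lem:be} with \eqref{eq:be2}, and \eqref{eq:r2} by intersecting the event of \eqref{eq:l2} at $x=z_\alpha(1-2\delta)^{1/2}$ with $\{\hV\ge(1-2\delta)\Var(\htheta)\}$ and invoking Lemma \ref{lem:tquantile}, and, as in the paper, Assumption \ref{ass:duration} enters only through the unbiasedness of $\htheta^{\rm c}$.
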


\begin{proof}
    Let $U_g=\hvartheta_g$. The $U_g$ are independent as in the proof of Proposition \ref{prop:variance}, $|U_g|\le M_\ast$ by Lemma \ref{lem:bounds}, $\overline U=\htheta$, and $\bbE(\overline U)=\theta$ by Corollary \ref{cor:standardized}. Assumption \ref{ass:nondegenerate} is $G\Var(\overline U)\ge\lv$. The bound \eqref{eq:r1} is Lemma \ref{lem:be} with $\xi_g=\Omega_g\{U_g-\bbE(U_g)\}$ and \eqref{eq:be2}. For \eqref{eq:r2}, recall that $\hV=\cV(U)$. On the event $\{\cV(U)\ge(1-2\delta)\Var(\overline U)\}$,
    \begin{equation*}
        \left|\htheta-\theta\right|\le z_\alpha(1-2\delta)^{1/2}\Var(\htheta)^{1/2}\quad\text{implies}\quad\left|\htheta-\theta\right|\le z_\alpha\hV^{1/2}\le q_{G-1}\hV^{1/2},
    \end{equation*}
    where the last inequality is Lemma \ref{lem:tquantile}. By \eqref{eq:l2} with $x=z_\alpha(1-2\delta)^{1/2}$ and \eqref{eq:l3},
    \begin{equation*}
        \bbP\left\{\left|\htheta-\theta\right|\le q_{G-1}\hV^{1/2}\right\}\ge2\Phi\left\{z_\alpha(1-2\delta)^{1/2}\right\}-1-2\Lambda_G-\eta_G(\delta).
    \end{equation*}
    Under Assumption \ref{ass:duration}, $\bbE(\htheta^{\rm c})=\theta$ by Corollary \ref{cor:standardized}, and Lemma \ref{lem:bounds} bounds $\htheta^{\rm c}_i$ by the same $M_\ast$, so the same displays hold.
\end{proof}

\subsection{The Pooled-Anchor Estimator}

By Definition \ref{def:estimators}(f), $\hmu_w=\overline H_w/\overline X_w$ when $\overline X_w>0$ and $\hmu_w=0$ otherwise. With $\mu_w$ and $e_i$ as in Lemma \ref{lem:bounds}, let
\begin{equation}
    L_w=\overline H_w-\mu_w\overline X_w,\qquad r_g=\frac1{\Omega_g}\sum_{i\in\cC_g}\omega_ie_i,\qquad L=L_1-L_0=\sum_{i=1}^N\omega_ie_i=\sum_{g=1}^G\Omega_gr_g.
    \label{eq:Lz}
\end{equation}
This $L$ is the linear term of the main text, and Assumption \ref{ass:nondegenerate} for the pooled-anchor estimator is $G\Var(L)\ge\lv$.

\begin{propositionS}
    \label{prop:anchor}
    Let Assumptions \ref{ass:anticipation} and \ref{ass:weights}--\ref{ass:nondegenerate} hold, let $G\ge4C_\omega$, and let $M_\ast$, $\Lambda_G$ and $\eta_G$ be as in Proposition \ref{prop:rate}. Then,
    \begin{equation}
        \left|\bbE(\htheta^{\rm f})-\theta\right|\le\frac{4C_\omega^2M_\ast^2(1+4M_\ast)}G.
        \label{eq:a0}
    \end{equation}
    Let $\delta\in(0,1/3)$ with $z_\alpha(1-3\delta)^{1/2}>\delta$, and $\varsigma_\delta=\min[1,\ \delta\lv/\{2C_\omega^2(4M_\ast+1)\}]$. Then,
    \begin{equation}
        \begin{aligned}
            \bbP\left\{\left|\htheta^{\rm f}-\theta\right|\le q_{G-2}\left(\hV^{\rm f}\right)^{1/2}\right\}&\ge2\Phi\left\{z_\alpha(1-3\delta)^{1/2}-\delta\right\}-1-2\Lambda_G-\eta_G(\delta)\\
            &\quad-\frac{8C_\omega^2M_\ast^2}G-\frac{4C_\omega^2M_\ast^2}{\delta\lv^{1/2}G^{1/2}}-\frac{6C_\omega M_\ast^2}{\varsigma_\delta G^{1/2}}.
        \end{aligned}
        \label{eq:a1}
    \end{equation}
\end{propositionS}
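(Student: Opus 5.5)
The plan is to linearize the ratio estimator around $L$ and control the remainder in probability; the remainder is of order $1/G$. Three observations drive everything.
\begin{itemize}
\item[(i)] $H_{iw}$ is a weighted sum of the $\widetilde Y_{it}$ with the same nonnegative weights as $X_{iw}$. By Assumption \ref{ass:bounded}, $|\overline H_w|\le M\overline X_w$, so $|\hmu_w|\le M$ always, including the case $\overline X_w=0$.
\item[(ii)] When $\overline X_w>0$, $\hmu_w-\mu_w=L_w/\overline X_w$ with $L_w$ as in \eqref{eq:Lz}.
\item[(iii)] By Lemma \ref{lem:bounds}, $\bbE(L)=\sum_i\omega_i(\theta_i-\theta)=0$ and $\bbE(L_w)=0$.
\end{itemize}
Moreover, $\overline X_w$ and $L_w$ are weighted sums of independent block quantities bounded by multiples of $M_\ast$ (Lemma \ref{lem:bounds}). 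Hence \eqref{eq:l1} gives $\Var(\overline X_w)$ and $\Var(L_w)$ of order $C_\omega^2M_\ast^2/G$, and $\bbE(\overline X_w)=1$ by Lemma \ref{lem:weights}.

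\textbf{Bias \eqref{eq:a0}.} Write $\htheta^{\rm f}-\theta=L+R$ with $R=\sum_w\pm\{(\hmu_w-\mu_w)-L_w\}$. Then $\bbE(\htheta^{\rm f})-\theta=\bbE(R)$, and I split each term on the event $E_w=\{\overline X_w\ge1/2\}$.
\begin{itemize}
\item On $E_w$, use the algebraic identity $L_w/\overline X_w-L_w=-L_w(\overline X_w-1)+L_w(\overline X_w-1)^2/\overline X_w$. The first piece has expectation bounded by $\{\Var(L_w)\Var(\overline X_w)\}^{1/2}\le C_\omega^2M_\ast^2/G$ by Cauchy--Schwarz. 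The second is at most $2|L_w|(\overline X_w-1)^2$ with $|L_w|\le 2M_\ast$, so its expectation is bounded by a multiple of $M_\ast\Var(\overline X_w)$.
\item On $E_w^c$, Chebyshev gives $\bbP(E_w^c)\le4\Var(\overline X_w)$. There $|(\hmu_w-\mu_w)-L_w|\le 2M+|L_w|$ is bounded by a multiple of $M_\ast$.
\end{itemize}
Summing over $w$ gives a bound of the form $4C_\omega^2M_\ast^2(1+4M_\ast)/G$.

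\textbf{Coverage \eqref{eq:a1}.} I would intersect three good events.
\begin{itemize}
\item[(a)] $|L|\le\{z_\alpha(1-3\delta)^{1/2}-\delta\}\Var(L)^{1/2}$. Apply \eqref{eq:l2} with $U_g=r_g$, which are independent, bounded by $M_\ast$, and have $\bbE(\overline U)=0$, together with Assumption \ref{ass:nondegenerate}; the threshold $x>0$ by hypothesis. This contributes $2\Phi(\cdot)-1-2\Lambda_G$.
\item[(b)] $|R|\le\delta\Var(L)^{1/2}$. Use Markov with the same $\bbE|R|$ bound as in the bias step, now written as $4C_\omega^2M_\ast^2/G$ for the on-$E_w$ part. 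Dividing by $\delta(\lv/G)^{1/2}$ gives the term $4C_\omega^2M_\ast^2/(\delta\lv^{1/2}G^{1/2})$. The event $E_w^c$ is charged separately through $8C_\omega^2M_\ast^2/G$.
\item[(c)] $\cV(\widehat r)\ge(1-3\delta)\Var(L)$. This is the main obstacle.
\end{itemize}
Given (a)--(c), we get $|\htheta^{\rm f}-\theta|\le z_\alpha(1-3\delta)^{1/2}\Var(L)^{1/2}\le z_\alpha\cV(\widehat r)^{1/2}$. Then $(G-1)/(G-2)\ge1$ and Lemma \ref{lem:tquantile} with $\nu=G-2$ finish the argument.

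For (c), write $\widehat r_g=r_g+d_g$ with $d_g=-(\hmu_1-\mu_1)X_{(g)1}+(\hmu_0-\mu_0)X_{(g)0}$. Since $\cV(U)^{1/2}$ is a seminorm in $U$, $\cV(\widehat r)^{1/2}\ge\cV(r)^{1/2}-\cV(d)^{1/2}$. By \eqref{eq:nu} and $X_{(g)w}\le1/\lpi$,
\[
\cV(d)\le\frac{2C_\omega^2}{G}\cdot\frac{(|\hmu_1-\mu_1|+|\hmu_0-\mu_0|)^2}{\lpi^2}.
\]
Next, \eqref{eq:l3} applied to $r$ gives $\cV(r)\ge(1-2\delta)\Var(L)$ outside probability $\eta_G(\delta)$. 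It then remains to show $\cV(d)$ is below a $\delta$-fraction of $\Var(L)\ge\lv/G$ whenever $|\hmu_w-\mu_w|\le\varsigma_\delta$ for both $w$; the constant $\varsigma_\delta$ and its $(4M_\ast+1)$ factor are tuned to make this true. Finally, bound $\bbP(|\hmu_w-\mu_w|>\varsigma_\delta)$ using $|\hmu_w-\mu_w|\le 2|L_w|+$ (the $E_w^c$ case) and Chebyshev/Markov on $L_w$. That gives the last term $6C_\omega M_\ast^2/(\varsigma_\delta G^{1/2})$, and a union bound over all events yields \eqref{eq:a1}.

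The delicate step is (c): I must combine the seminorm triangle inequality with the square-root form of $(1-2\delta)$ versus $(1-3\delta)$. I expect to use $(1-2\delta)^{1/2}-\epsilon\ge(1-3\delta)^{1/2}$ for $\epsilon$ of order $\delta$, and I would first check that $\varsigma_\delta$ as defined makes $\epsilon$ small enough.
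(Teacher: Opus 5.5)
Your architecture is the paper's. You linearize $\htheta^{\rm f}-\theta=L+\Delta$ on $\{\overline X_1\ge1/2,\ \overline X_0\ge1/2\}$, use Cauchy--Schwarz with $\Var(L_w),\Var(\overline X_w)\le C_\omega^2M_\ast^2/G$ for the bias and for the Markov step (b), apply Lemma \ref{lem:linear} with $U_g=r_g$ for (a) and for $\cV(r)\ge(1-2\delta)\Var(L)$, and finish with a union bound. Two small points in the bias step. The second-order expansion buys nothing: the direct bound $|L_w(1-\overline X_w)/\overline X_w|\le2|L_w||\overline X_w-1|$ is what yields the $4C_\omega^2M_\ast^2/G$ you reuse in (b). And you need $|L_w|\le M_\ast$ (each $|H_{(g)w}-\mu_wX_{(g)w}|\le2M/\lpi\le M_\ast$), not $2M_\ast$, to land on the constant in \eqref{eq:a0}. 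Replacing the paper's expansion $\cV(\widehat r)-\cV(r)=\sum_g\widetilde\Omega_g\{2(r_g-L)d_g+d_g^2\}$ by the reverse triangle inequality for the seminorm $\cV^{1/2}$ is a legitimate idea. However, step (c) does not close as you set it up.

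The reverse triangle inequality needs $\cV(d)^{1/2}\le\{(1-2\delta)^{1/2}-(1-3\delta)^{1/2}\}\Var(L)^{1/2}$. That is, $\cV(d)$ must be below a $\delta^2$-fraction (not a $\delta$-fraction) of $\Var(L)$, roughly $\cV(d)\le\delta^2\lv/(4G)$. On your event $\{|\hmu_w-\mu_w|\le\varsigma_\delta\}$, the bound $X_{(g)w}\le1/\lpi$ gives only $\cV(d)$ of order $C_\omega^2\varsigma_\delta^2/(\lpi^2G)$. So you would need $\varsigma_\delta\lesssim\delta\lpi\lv^{1/2}/C_\omega$, and the given $\varsigma_\delta$ does not satisfy this in general. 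The a priori bound $\lv^{1/2}\le C_\omega M_\ast$ from \eqref{eq:l1} is off by a factor of order $1/\lpi$. Concretely, take $C_\omega=1$ and one window session per case with $\pi_{it}=\lpi$. Then $G\Var(L)$, and hence $\lv$, can be of order $M^2/\lpi$, so $\varsigma_\delta\asymp\delta M$, while your requirement is $\varsigma_\delta\lesssim\delta M\lpi^{1/2}$. The factor $(4M_\ast+1)$ in $\varsigma_\delta$ is not tuned to your argument. It comes from the cross term $2(r_g-L)d_g$, with $|r_g-L|\le2M_\ast$, in the paper's expansion. A symptom: your Markov bound on $\bbP(|\hmu_w-\mu_w|>\varsigma_\delta)$ carries one power of $M_\ast$, whereas the last term of \eqref{eq:a1} carries $M_\ast^2$.

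The missing step is to absorb the weight bound $X_{(g)w}\le1/\lpi\le M_\ast$ into the event. Work on $\{M_\ast(|\hmu_1-\mu_1|+|\hmu_0-\mu_0|)\le\varsigma_\delta\}$; on $\{\overline X_1\ge1/2,\ \overline X_0\ge1/2\}$ this is implied by $2M_\ast(|L_1|+|L_0|)\le\varsigma_\delta$. (The paper uses $D=3M_\ast(|L_1|+|L_0|)\le\varsigma_\delta$.) Then $|d_g-\overline d|\le2\varsigma_\delta$, so $\cV(d)\le8C_\omega^2\varsigma_\delta^2/G$. Since $\lv^{1/2}\le C_\omega M_\ast$, you get $\varsigma_\delta\le\delta\lv^{1/2}/(8C_\omega)$, hence $\cV(d)\le\delta^2\Var(L)/8$. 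Combined with $(1-2\delta)^{1/2}-(1-3\delta)^{1/2}\ge\delta/2$, this closes the inequality. Markov then gives $4C_\omega M_\ast^2/(\varsigma_\delta G^{1/2})$, within the last term of \eqref{eq:a1}. With this change your seminorm route works and avoids the cross term.

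Two smaller fixes remain. Your displayed bound on $\cV(d)$ drops the centering $d_g-\overline d$, which costs a constant factor. And identifying $\hV^{\rm f}$ with $\{(G-1)/(G-2)\}\cV(\widehat r)$ uses $\sum_g\Omega_g\widehat r_g=0$, because $\hV^{\rm f}$ is built from the uncentered $\widehat r_g^{\,2}$. This holds when $\overline X_1,\overline X_0>0$, by the definition of $\hmu_w$.
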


\begin{proof}
    Since $C_\omega\ge1$, $G\ge4$. By Lemma \ref{lem:bounds}, $|H_{(g)w}-\mu_wX_{(g)w}|\le M/\lpi+M/\lpi\le M_\ast$, $X_{(g)w}\in[0,1/\lpi]\subseteq[0,M_\ast]$ and $|r_g|\le M_\ast$. By Section \ref{sec:notation}, each of these block quantities is a function of the onsets of block $g$, so each family is independent across blocks. By \eqref{eq:blocksums}, $L_w=\sum_{g=1}^G\Omega_g(H_{(g)w}-\mu_wX_{(g)w})$. Since $\bbE(\overline X_w)=1$ by Lemma \ref{lem:weights}, $\bbE(L_w)=\mu_w-\mu_w=0$. The argument for \eqref{eq:l1}, which uses only independence, the bound $M_\ast$ and Assumption \ref{ass:weights}, gives
    \begin{equation}
        \bbE(L_w^2)=\Var(L_w)\le\frac{C_\omega^2M_\ast^2}G,\qquad\bbE\left(\overline X_w-1\right)^2=\Var(\overline X_w)\le\frac{C_\omega^2M_\ast^2}G,\qquad w\in\{0,1\}.
        \label{eq:a2}
    \end{equation}
    Let $\cE_0=\{|\overline X_1-1|\le1/2\}\cap\{|\overline X_0-1|\le1/2\}$. By Chebyshev's inequality and \eqref{eq:a2},
    \begin{equation}
        \bbP(\cE_0^c)\le4\,\bbE\left(\overline X_1-1\right)^2+4\,\bbE\left(\overline X_0-1\right)^2\le\frac{8C_\omega^2M_\ast^2}G.
        \label{eq:a3}
    \end{equation}
    On $\cE_0$, $\overline X_w\ge1/2$ and $\hmu_w-\mu_w=(\overline H_w-\mu_w\overline X_w)/\overline X_w=L_w/\overline X_w$. By Lemma \ref{lem:bounds}, $\theta=\mu_1-\mu_0$, so on $\cE_0$,
    \begin{equation}
        \begin{gathered}
            \htheta^{\rm f}-\theta=\frac{L_1}{\overline X_1}-\frac{L_0}{\overline X_0}=L+\Delta,\qquad\Delta=\frac{L_1(1-\overline X_1)}{\overline X_1}-\frac{L_0(1-\overline X_0)}{\overline X_0},\\
            |\Delta|\le2|L_1||\overline X_1-1|+2|L_0||\overline X_0-1|.
        \end{gathered}
        \label{eq:a4}
    \end{equation}
    Off $\cE_0$, put $\Delta=0$. By the Cauchy--Schwarz inequality and \eqref{eq:a2},
    \begin{equation}
        \bbE\left(|\Delta|\bbone_{\cE_0}\right)\le2\left\{\bbE(L_1^2)\bbE(\overline X_1-1)^2\right\}^{1/2}+2\left\{\bbE(L_0^2)\bbE(\overline X_0-1)^2\right\}^{1/2}\le\frac{4C_\omega^2M_\ast^2}G.
        \label{eq:a5}
    \end{equation}
    When $\overline X_w>0$, $\hmu_w$ is a weighted average of numbers $\widetilde Y_{it}$ with nonnegative weights, and $\hmu_w=0$ otherwise, so $|\hmu_w|\le M$ for every vector of onsets and $|\htheta^{\rm f}|\le2M$. With $|\theta|=|\mu_1-\mu_0|\le2M$ and $|L|\le M_\ast$, $|\htheta^{\rm f}-\theta-L|\le4M+M_\ast\le2M_\ast$. Since $\bbE(L)=0$, \eqref{eq:a3}, \eqref{eq:a4} and \eqref{eq:a5} give
    \begin{equation*}
        \left|\bbE(\htheta^{\rm f})-\theta\right|=\left|\bbE\left(\htheta^{\rm f}-\theta-L\right)\right|\le\bbE\left(|\Delta|\bbone_{\cE_0}\right)+2M_\ast\bbP(\cE_0^c)\le\frac{4C_\omega^2M_\ast^2}G+\frac{16C_\omega^2M_\ast^3}G,
    \end{equation*}
    which is \eqref{eq:a0}.

    For \eqref{eq:a1}, recall the residuals $\widehat e_i$ and their block averages $\widehat r_g$ from Definition \ref{def:estimators}(f). On $\cE_0$, the definition of $\hmu_w$ gives $\sum_{g=1}^G\Omega_g\widehat r_g=(\overline H_1-\hmu_1\overline X_1)-(\overline H_0-\hmu_0\overline X_0)=0$. With the notation \eqref{eq:calV} and $\widehat r=(\widehat r_1,\dots,\widehat r_G)$, and since $G\ge4$,
    \begin{equation}
        \hV^{\rm f}=\frac{G-1}{G-2}\,\cV(\widehat r)\ge\cV(\widehat r)\quad\text{on }\cE_0.
        \label{eq:a6}
    \end{equation}
    Let $r=(r_1,\dots,r_G)$, so that $\overline r=L$ by \eqref{eq:Lz}. On $\cE_0$, by \eqref{eq:ei},
    \begin{equation*}
        \widehat r_g-(r_g-L)=-(\hmu_1-\mu_1)X_{(g)1}+(\hmu_0-\mu_0)X_{(g)0}+L.
    \end{equation*}
    Call this difference $d_g$. On $\cE_0$, $|\hmu_w-\mu_w|\le2|L_w|$. With $0\le X_{(g)w}\le M_\ast$, $|L|\le|L_1|+|L_0|$ and $M_\ast\ge1$,
    \begin{equation*}
        |d_g|\le2M_\ast\left(|L_1|+|L_0|\right)+|L|\le D,\qquad D=3M_\ast\left(|L_1|+|L_0|\right).
    \end{equation*}
    With $\widetilde\Omega_g$ as in the proof of Lemma \ref{lem:linear}, $|r_g-L|\le2M_\ast$ and \eqref{eq:nu}, on $\cE_0\cap\{D\le\varsigma_\delta\}$,
    \begin{equation}
        \begin{split}\left|\cV(\widehat r)-\cV(r)\right|&=\left|\sum_{g=1}^G\widetilde\Omega_g\left\{2(r_g-L)d_g+d_g^2\right\}\right|\le\frac{2C_\omega^2}G\left(4M_\ast D+D^2\right)\\&\le\frac{2C_\omega^2(4M_\ast+1)\varsigma_\delta}G\le\frac{\delta\lv}G,\end{split}
        \label{eq:a7}
    \end{equation}
    where $D^2\le D$ because $D\le\varsigma_\delta\le1$. By Markov's inequality, $\bbE|L_w|\le\{\bbE(L_w^2)\}^{1/2}$ and \eqref{eq:a2},
    \begin{equation}
        \bbP(D>\varsigma_\delta)\le\frac{3M_\ast\left\{\bbE|L_1|+\bbE|L_0|\right\}}{\varsigma_\delta}\le\frac{6C_\omega M_\ast^2}{\varsigma_\delta G^{1/2}}.
        \label{eq:a8}
    \end{equation}
    The $r_g$ are independent and bounded by $M_\ast$, and $G\Var(L)\ge\lv$ by Assumption \ref{ass:nondegenerate}. Lemma \ref{lem:linear} with $U_g=r_g$, for which $\overline U=L$ and $\bbE(\overline U)=0$, gives
    \begin{equation}
        \begin{gathered}\bbP\left\{\cV(r)<(1-2\delta)\Var(L)\right\}\le\eta_G(\delta),\\\bbP\left\{|L|\le x\Var(L)^{1/2}\right\}\ge2\Phi(x)-1-2\Lambda_G\quad\text{for }x\ge0.\end{gathered}
        \label{eq:a9}
    \end{equation}
    By Markov's inequality, \eqref{eq:a5} and $\Var(L)\ge\lv/G$,
    \begin{equation}
        \bbP\left\{|\Delta|\bbone_{\cE_0}>\delta\Var(L)^{1/2}\right\}\le\frac{4C_\omega^2M_\ast^2}{G\,\delta\,(\lv/G)^{1/2}}=\frac{4C_\omega^2M_\ast^2}{\delta\lv^{1/2}G^{1/2}}.
        \label{eq:a10}
    \end{equation}
    Let $\cE$ be the intersection of $\cE_0$, $\{D\le\varsigma_\delta\}$, $\{\cV(r)\ge(1-2\delta)\Var(L)\}$ and $\{|\Delta|\bbone_{\cE_0}\le\delta\Var(L)^{1/2}\}$. On $\cE$, \eqref{eq:a6}, \eqref{eq:a7} and $\lv/G\le\Var(L)$ give
    \begin{equation*}
        \hV^{\rm f}\ge\cV(r)-\frac{\delta\lv}G\ge(1-3\delta)\Var(L),
    \end{equation*}
    and \eqref{eq:a4} gives $|\htheta^{\rm f}-\theta|\le|L|+\delta\Var(L)^{1/2}$. Hence, on $\cE$,
    \begin{align*}
        &|L|\le\left\{z_\alpha(1-3\delta)^{1/2}-\delta\right\}\Var(L)^{1/2}\quad\text{implies}\displaybreak[0]\\
        &\left|\htheta^{\rm f}-\theta\right|\le z_\alpha(1-3\delta)^{1/2}\Var(L)^{1/2}\le z_\alpha\left(\hV^{\rm f}\right)^{1/2}\le q_{G-2}\left(\hV^{\rm f}\right)^{1/2},
    \end{align*}
    where the last inequality is Lemma \ref{lem:tquantile}. Therefore,
    \begin{equation*}
        \bbP\left\{\left|\htheta^{\rm f}-\theta\right|\le q_{G-2}\left(\hV^{\rm f}\right)^{1/2}\right\}\ge\bbP\left[|L|\le\left\{z_\alpha(1-3\delta)^{1/2}-\delta\right\}\Var(L)^{1/2}\right]-\bbP(\cE^c).
    \end{equation*}
    The second part of \eqref{eq:a9} with $x=z_\alpha(1-3\delta)^{1/2}-\delta$ bounds the first probability from below. The union bound with \eqref{eq:a3}, \eqref{eq:a8}, the first part of \eqref{eq:a9} and \eqref{eq:a10} bounds $\bbP(\cE^c)$ from above. Together they give \eqref{eq:a1}.
\end{proof}

\begin{proof}[Proof of Theorem \ref{thm:coverage}]
    Along the sequence of the theorem, the constants $C_\omega$, $M$, $\lpi$ and $\lv$ are fixed and $G\to\infty$, so $G\ge4C_\omega$ for all large $G$, and
    \begin{equation*}
        \lim_{G\to\infty}\Lambda_G=0,\qquad\lim_{G\to\infty}\eta_G(\delta)=0\quad\text{for every fixed }\delta.
    \end{equation*}
    For $\htheta=\htheta^{\rm b}$, and for $\htheta=\htheta^{\rm c}$ under Assumption \ref{ass:duration}, \eqref{eq:r2} gives
    \begin{equation*}
        \liminf_{G\to\infty}\bbP\left\{\left|\htheta-\theta\right|\le q_{G-1}\hV^{1/2}\right\}\ge2\Phi\left\{z_\alpha(1-2\delta)^{1/2}\right\}-1\quad\text{for every }\delta\in(0,1/2).
    \end{equation*}
    The left side does not depend on $\delta$. Letting $\delta\to0$ and using the continuity of $\Phi$, the right side tends to $2\Phi(z_\alpha)-1=1-\alpha$. For the pooled-anchor estimator, the last three terms of \eqref{eq:a1} tend to 0 for every fixed $\delta$, so
    \begin{equation*}
        \liminf_{G\to\infty}\bbP\left\{\left|\htheta^{\rm f}-\theta\right|\le q_{G-2}\left(\hV^{\rm f}\right)^{1/2}\right\}\ge2\Phi\left\{z_\alpha(1-3\delta)^{1/2}-\delta\right\}-1
    \end{equation*}
    for every sufficiently small $\delta>0$, and letting $\delta\to0$ gives $1-\alpha$. By \eqref{eq:a0},
    \begin{equation*}
        \left|\bbE(\htheta^{\rm f})-\theta\right|\le\frac{4C_\omega^2M_\ast^2(1+4M_\ast)}G\to0\quad\text{as }G\to\infty.
    \end{equation*}
\end{proof}

\section{Proof of Proposition \ref{prop:planning}}
\label{sec:planning}

In this section $\bbE$, $\Var$ and $\Cov$ are taken over the onsets and the noise $\varepsilon_{it}$ in model \eqref{eq:working}, and $\bm S=(S_1,\dots,S_N)$ is the vector of onsets. The $\varepsilon_{it}$ are independent of $\bm S$, so conditionally on $\bm S$ they are uncorrelated with mean 0 and variance $\sigma_i^2$. Let
\begin{equation}
    \overline\varepsilon_{i\cA}=\frac1{a_i}\sum_{t\in\cA_i}\varepsilon_{it},\qquad\overline\varepsilon_{i\cB}=\frac1{n_i}\sum_{t\in\cB_i}\varepsilon_{it},\qquad\varepsilon^\ast_i=\frac1{m_i}\sum_{t\in\cW_i}\left\{\frac{W_{it}}{\pi_{it}}-\frac{1-W_{it}}{1-\pi_{it}}\right\}\varepsilon_{it}.
    \label{eq:noise}
\end{equation}
The three quantities in \eqref{eq:noise} use disjoint sets of sessions. Conditionally on $\bm S$ they are uncorrelated with mean 0, and since $W_{it}(1-W_{it})=0$,
\begin{equation}
    \Var(\overline\varepsilon_{i\cA}\mid\bm S)=\frac{\sigma_i^2}{a_i},\qquad\Var(\overline\varepsilon_{i\cB}\mid\bm S)=\frac{\sigma_i^2}{n_i},\qquad\Var(\varepsilon^\ast_i\mid\bm S)=\frac{\sigma_i^2}{m_i^2}\sum_{t\in\cW_i}\left\{\frac{W_{it}}{\pi_{it}^2}+\frac{1-W_{it}}{(1-\pi_{it})^2}\right\}.
    \label{eq:noisevar}
\end{equation}
Since $\bbE(W_{it})=\pi_{it}$, the definition of $\psi_i$ in \eqref{eq:psi} gives
\begin{equation}
    \bbE\{\Var(\varepsilon^\ast_i\mid\bm S)\}=\frac{\sigma_i^2}{m_i^2}\sum_{t\in\cW_i}\left\{\frac1{\pi_{it}}+\frac1{1-\pi_{it}}\right\}=\frac{\sigma_i^2}{m_i^2}\sum_{t\in\cW_i}\frac1{\pi_{it}(1-\pi_{it})}=\sigma_i^2\psi_i.
    \label{eq:psiE}
\end{equation}

Under \eqref{eq:working}, $W_{it}=0$ for $t\in\cA_i$ and $W_{it}=1$ for $t\in\cB_i$, so $\overline Y_{i\cA}=\phi_i+\overline\varepsilon_{i\cA}$ and $\overline Y_{i\cB}=\phi_i+\tau_i+\overline\varepsilon_{i\cB}$. With $\theta_i=\tau_i/c_i$, for $t\in\cW_i$,
\begin{equation*}
    \widetilde Y_{it}=\frac{Y_{it}-\overline Y_{i\cA}}{c_i}=\theta_iW_{it}+\frac{\varepsilon_{it}-\overline\varepsilon_{i\cA}}{c_i},\qquad\widetilde Y_{i\cB}=\frac{\overline Y_{i\cB}-\overline Y_{i\cA}}{c_i}=\theta_i+\frac{\overline\varepsilon_{i\cB}-\overline\varepsilon_{i\cA}}{c_i}.
\end{equation*}
Substituting into \eqref{eq:sums} and using $W_{it}^2=W_{it}$ and $(1-W_{it})W_{it}=0$,
\begin{equation}
    H_{i1}=\theta_iX_{i1}+\frac1{c_im_i}\sum_{t\in\cW_i}\frac{W_{it}\varepsilon_{it}}{\pi_{it}}-\frac{\overline\varepsilon_{i\cA}X_{i1}}{c_i},\qquad H_{i0}=\frac1{c_im_i}\sum_{t\in\cW_i}\frac{(1-W_{it})\varepsilon_{it}}{1-\pi_{it}}-\frac{\overline\varepsilon_{i\cA}X_{i0}}{c_i}.
    \label{eq:Hmodel}
\end{equation}
Taking expectations in \eqref{eq:Hmodel} conditionally on $\bm S$ and then over $\bm S$, with $\bbE(X_{i1})=1$ from Lemma \ref{lem:weights}, gives $\bbE(H_{i1})=\theta_i$ and $\bbE(H_{i0})=0$. Hence $\mu_1=\sum_{i=1}^N\omega_i\theta_i=\theta$ and $\mu_0=0$, as stated in the main text. By \eqref{eq:bc}, \eqref{eq:ei}, \eqref{eq:noise} and \eqref{eq:Hmodel},
\begin{align}
    \htheta^{\rm b}_i&=H_{i1}-H_{i0}=\theta_iX_{i1}+\frac1{c_i}\left\{\varepsilon^\ast_i-\overline\varepsilon_{i\cA}\left(X_{i1}-X_{i0}\right)\right\},\label{eq:pb}\\
    \htheta^{\rm c}_i&=\widetilde Y_{i\cB}\left(1-X_{i1}\right)+\htheta^{\rm b}_i=\theta_i+\frac1{c_i}\left\{\varepsilon^\ast_i+\overline\varepsilon_{i\cB}\left(1-X_{i1}\right)-\overline\varepsilon_{i\cA}\left(1-X_{i0}\right)\right\},\label{eq:pc}\\
    e_i&=H_{i1}-\theta X_{i1}-H_{i0}=\left(\theta_i-\theta\right)X_{i1}+\frac1{c_i}\left\{\varepsilon^\ast_i-\overline\varepsilon_{i\cA}\left(X_{i1}-X_{i0}\right)\right\}.\label{eq:pe}
\end{align}
In \eqref{eq:pc}, the terms in $\theta_i$ sum to $\theta_i(1-X_{i1})+\theta_iX_{i1}=\theta_i$, and the terms in $\overline\varepsilon_{i\cA}$ sum to $-\overline\varepsilon_{i\cA}\{(1-X_{i1})+(X_{i1}-X_{i0})\}=-\overline\varepsilon_{i\cA}(1-X_{i0})$.

Let $\Theta_i$ denote any one of the three quantities in \eqref{eq:pb}--\eqref{eq:pe}. The sums $X_{i1}$ and $X_{i0}$ are functions of $S_i$. By \eqref{eq:noisevar}, the conditional means and variances are
\begin{equation*}
    \bbE(\htheta^{\rm b}_i\mid\bm S)=\theta_iX_{i1},\qquad\bbE(\htheta^{\rm c}_i\mid\bm S)=\theta_i,\qquad\bbE(e_i\mid\bm S)=(\theta_i-\theta)X_{i1},
\end{equation*}
and
\begin{align*}
    \Var(\htheta^{\rm b}_i\mid\bm S)=\Var(e_i\mid\bm S)&=\frac1{c_i^2}\left\{\Var(\varepsilon^\ast_i\mid\bm S)+\frac{\sigma_i^2(X_{i1}-X_{i0})^2}{a_i}\right\},\\
    \Var(\htheta^{\rm c}_i\mid\bm S)&=\frac1{c_i^2}\left\{\Var(\varepsilon^\ast_i\mid\bm S)+\frac{\sigma_i^2(1-X_{i1})^2}{n_i}+\frac{\sigma_i^2(1-X_{i0})^2}{a_i}\right\}.
\end{align*}
By Lemma \ref{lem:weights}, $\bbE(X_{i1})=\bbE(X_{i0})=1$, so $\bbE(1-X_{iw})^2=\Var(X_{iw})$ and $\bbE(X_{i1}-X_{i0})^2=\Var(X_{i1}-X_{i0})$. The decomposition $\Var(\Theta_i)=\bbE\{\Var(\Theta_i\mid\bm S)\}+\Var\{\bbE(\Theta_i\mid\bm S)\}$ and \eqref{eq:psiE} give the three variances
\begin{align}
    \Var(\htheta^{\rm b}_i)&=\theta_i^2\Var(X_{i1})+\frac{\sigma_i^2}{c_i^2}\left\{\psi_i+\frac{\Var(X_{i1}-X_{i0})}{a_i}\right\},\notag\\
    \Var(\htheta^{\rm c}_i)&=\frac{\sigma_i^2}{c_i^2}\left\{\psi_i+\frac{\Var(X_{i1})}{n_i}+\frac{\Var(X_{i0})}{a_i}\right\},\label{eq:pvar}\\
    \Var(e_i)&=(\theta_i-\theta)^2\Var(X_{i1})+\frac{\sigma_i^2}{c_i^2}\left\{\psi_i+\frac{\Var(X_{i1}-X_{i0})}{a_i}\right\}.\notag
\end{align}
For two cases $i\ne i'$, $\Cov(\Theta_i,\Theta_{i'}\mid\bm S)=0$ because the noise of different cases is uncorrelated given $\bm S$. The conditional means $\bbE(\Theta_i\mid\bm S)$ and $\bbE(\Theta_{i'}\mid\bm S)$ are functions of $S_i$ and of $S_{i'}$, which are independent. Hence
\begin{equation*}
    \Cov(\Theta_i,\Theta_{i'})=\bbE\{\Cov(\Theta_i,\Theta_{i'}\mid\bm S)\}+\Cov\{\bbE(\Theta_i\mid\bm S),\bbE(\Theta_{i'}\mid\bm S)\}=0,
\end{equation*}
and therefore $\Var(\sum_{i=1}^N\omega_i\Theta_i)=\sum_{i=1}^N\omega_i^2\Var(\Theta_i)$.
With the three variances in \eqref{eq:pvar}, $\htheta^{\rm b}=\sum_{i=1}^N\omega_i\htheta^{\rm b}_i$, $\htheta^{\rm c}=\sum_{i=1}^N\omega_i\htheta^{\rm c}_i$ and $L=\sum_{i=1}^N\omega_ie_i$, this is \eqref{eq:planb}--\eqref{eq:planf}. \qed

\subsection{Order of the Three Variances}

\begin{lemma}
    \label{lem:order}
    For every case $i$, $\Cov(X_{i1},X_{i0})\le0$ and $\Var(X_{i1}-X_{i0})\ge\Var(X_{i1})+\Var(X_{i0})$. If $n_i\ge a_i$, then
    \begin{equation}
        \frac{\Var(X_{i1})}{n_i}+\frac{\Var(X_{i0})}{a_i}\le\frac{\Var(X_{i1}-X_{i0})}{a_i}.
        \label{eq:order}
    \end{equation}
\end{lemma}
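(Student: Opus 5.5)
The plan is to use the fact that $X_{i1}$ and $X_{i0}$ are both functions of the single onset $S_i$ (Section \ref{sec:notation}), and that they move in opposite directions as the onset changes. Write
\begin{equation*}
X_{i1}(s)=\frac1{m_i}\sum_{t\in\cW_i,\,t\ge s}\frac1{\pi_{it}},\qquad X_{i0}(s)=\frac1{m_i}\sum_{t\in\cW_i,\,t<s}\frac1{1-\pi_{it}},
\end{equation*}
which follows from $W_{it}(s)=\bbone(t\ge s)$. All summands are positive. A later onset removes terms from the first sum and adds terms to the second. Hence $X_{i1}(s)$ is nonincreasing in $s$ and $X_{i0}(s)$ is nondecreasing in $s$ on $\cS_i$.

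The first step is Chebyshev's association inequality for oppositely monotone functions of one random variable. Let $S_i'$ be an independent copy of $S_i$. Then
\begin{equation*}
\Cov(X_{i1},X_{i0})=\tfrac12\,\bbE\bigl[\{X_{i1}(S_i)-X_{i1}(S_i')\}\{X_{i0}(S_i)-X_{i0}(S_i')\}\bigr].
\end{equation*}
By the opposite monotonicity, the two factors never have the same strict sign, so the integrand is $\le0$ pointwise and $\Cov(X_{i1},X_{i0})\le0$. This is the only step with any content; the obstacle is just checking the monotonicity carefully, which the explicit sums above make immediate.

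Next, expand $\Var(X_{i1}-X_{i0})=\Var(X_{i1})+\Var(X_{i0})-2\Cov(X_{i1},X_{i0})$. The covariance bound then gives the second claim.

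Finally, suppose $n_i\ge a_i$. Then $\Var(X_{i1})/n_i\le\Var(X_{i1})/a_i$, because variances are nonnegative. So the left side of \eqref{eq:order} is at most $\{\Var(X_{i1})+\Var(X_{i0})\}/a_i$, which is at most $\Var(X_{i1}-X_{i0})/a_i$ by the previous step.
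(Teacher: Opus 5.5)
Your proof is correct and follows the paper's own argument essentially step for step. Both use the opposite monotonicity of $X_{i1}(s)$ and $X_{i0}(s)$ in $s$, write the covariance through an independent copy $S_i'$ to conclude $\Cov(X_{i1},X_{i0})\le0$, expand $\Var(X_{i1}-X_{i0})$, and finish with $n_i\ge a_i$ and the nonnegativity of $\Var(X_{i1})$.
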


\begin{proof}
    The indicator $W_{it}(s)=\bbone(t\ge s)$ is nonincreasing in $s$, so by \eqref{eq:sums}, $X_{i1}(s)\ge X_{i1}(s')$ and $X_{i0}(s)\le X_{i0}(s')$ for $s<s'$. Let $S'_i$ be an independent copy of $S_i$. The two differences below have opposite signs or one of them is 0, so
    \begin{equation*}
        \Cov\left(X_{i1},X_{i0}\right)=\frac12\,\bbE\left[\left\{X_{i1}(S_i)-X_{i1}(S'_i)\right\}\left\{X_{i0}(S_i)-X_{i0}(S'_i)\right\}\right]\le0.
    \end{equation*}
    Hence $\Var(X_{i1}-X_{i0})=\Var(X_{i1})+\Var(X_{i0})-2\Cov(X_{i1},X_{i0})\ge\Var(X_{i1})+\Var(X_{i0})$. If $n_i\ge a_i$,
    \begin{equation*}
        \frac{\Var(X_{i1})}{n_i}+\frac{\Var(X_{i0})}{a_i}\le\frac{\Var(X_{i1})+\Var(X_{i0})}{a_i}\le\frac{\Var(X_{i1}-X_{i0})}{a_i}.
    \end{equation*}
\end{proof}

Suppose that $\theta_i=\theta$ and $n_i\ge a_i$ for every case. Then the first term in brackets in \eqref{eq:planf} is 0, and \eqref{eq:planb}--\eqref{eq:planf} with \eqref{eq:order} give
\begin{equation*}
    \Var(\htheta^{\rm c})\le\Var(L)\le\Var(L)+\sum_{i=1}^N\omega_i^2\theta^2\Var(X_{i1})=\Var(\htheta^{\rm b}).
\end{equation*}

\subsection{Equally Likely Consecutive Onsets}

Let $\cS_i=\{a_i+1,\dots,a_i+K\}$ with $p_i(s)=1/K$, and let $h_k=\sum_{j=1}^{k-1}1/j$ for integers $k\ge1$, so that $h_1=0$ and $h_K$ is as in the main text. The window sessions are $t=a_i+j$ with $j=1,\dots,K-1$, so $m_i=K-1$ and $\pi_{i,a_i+j}=j/K$. Since $K^2/\{j(K-j)\}=K\{1/j+1/(K-j)\}$,
\begin{equation*}
    \psi_i=\frac1{(K-1)^2}\sum_{j=1}^{K-1}\frac{K^2}{j(K-j)}=\frac K{(K-1)^2}\sum_{j=1}^{K-1}\left(\frac1j+\frac1{K-j}\right)=\frac{2Kh_K}{(K-1)^2}.
\end{equation*}
When $S_i=a_i+k$ with $1\le k\le K$, session $a_i+j$ is treated if and only if $j\ge k$, so
\begin{equation}
    \begin{aligned}
        X_{i1}(a_i+k)&=\frac1{K-1}\sum_{j=k}^{K-1}\frac Kj=\frac{K(h_K-h_k)}{K-1},\\
        X_{i0}(a_i+k)&=\frac1{K-1}\sum_{j=1}^{k-1}\frac K{K-j}=\frac{K(h_K-h_{K-k+1})}{K-1}.
    \end{aligned}
    \label{eq:xuniform}
\end{equation}

\begin{lemma}
    \label{lem:harmonic}
    For every integer $K\ge1$, $\sum_{k=1}^K(h_K-h_k)^2=2(K-1)-h_K$.
\end{lemma}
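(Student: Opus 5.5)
We prove $\sum_{k=1}^K(h_K-h_k)^2=2(K-1)-h_K$ by induction on $K$, using only $h_{k+1}=h_k+1/k$. For $K=1$ both sides are $0$, since $h_1=0$.

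\textbf{Inductive step.} Write $F(K)=\sum_{k=1}^K(h_K-h_k)^2$ and let $d=h_{K+1}-h_K=1/K$. The $k=K+1$ term of $F(K+1)$ vanishes, so
\begin{equation*}
F(K+1)=\sum_{k=1}^K\{(h_K-h_k)+d\}^2=F(K)+2d\sum_{k=1}^K(h_K-h_k)+Kd^2.
\end{equation*}
The last term is $Kd^2=1/K$.

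\textbf{The middle sum.} This is the only step needing an auxiliary identity: we need $\sum_{k=1}^K(h_K-h_k)=K-1$. Exchange the order of summation in $\sum_{k=1}^K\sum_{j=k}^{K-1}1/j$. Each $j$ is counted for the $j$ values $k=1,\dots,j$, so the sum equals $\sum_{j=1}^{K-1}j\cdot(1/j)=K-1$. This is consistent with $\bbE(X_{i1})=1$ from Lemma \ref{lem:weights} applied to \eqref{eq:xuniform}. Hence the middle term is $2(K-1)/K$.

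\textbf{Closing the induction.} By the inductive hypothesis,
\begin{equation*}
F(K+1)=2(K-1)-h_K+\frac{2(K-1)}K+\frac1K=2(K-1)-h_K+2-\frac1K.
\end{equation*}
Since $h_K+1/K=h_{K+1}$, this equals $2K-h_{K+1}=2\{(K+1)-1\}-h_{K+1}$, which completes the induction.

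This yields $\Var(X_{i1})$ in \eqref{eq:uniform} as follows. By \eqref{eq:xuniform}, $\bbE(X_{i1}^2)=K^{-1}\{K/(K-1)\}^2\{2(K-1)-h_K\}$, and subtracting $1=\{\bbE(X_{i1})\}^2$ gives the stated expression. The case of $X_{i0}$ is symmetric under $k\mapsto K-k+1$.
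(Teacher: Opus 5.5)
Your proof is correct and follows essentially the same route as the paper: induction from $K=1$, expanding $(h_K-h_k+1/K)^2$, evaluating $\sum_{k=1}^K(h_K-h_k)=K-1$ by exchanging the order of summation, and closing with $h_{K+1}=h_K+1/K$. Your closing remark recovering $\Var(X_{i1})$ in \eqref{eq:uniform} also matches the paper's use of the lemma.
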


\begin{proof}
    The claim holds for $K=1$ because both sides are 0. Suppose that it holds for some $K\ge1$. Exchanging the order of summation,
    \begin{equation*}
        \sum_{k=1}^K(h_K-h_k)=\sum_{k=1}^K\sum_{j=k}^{K-1}\frac1j=\sum_{j=1}^{K-1}\frac1j\cdot j=K-1.
    \end{equation*}
    Since $h_{K+1}=h_K+1/K$ and the term $k=K+1$ of the sum for $K+1$ is 0,
    \begin{align*}
        \sum_{k=1}^{K+1}(h_{K+1}-h_k)^2&=\sum_{k=1}^K\left(h_K-h_k+\frac1K\right)^2=\sum_{k=1}^K(h_K-h_k)^2+\frac2K\sum_{k=1}^K(h_K-h_k)+\frac1K\\
        &=2(K-1)-h_K+\frac{2(K-1)}K+\frac1K=2K-h_K-\frac1K=2K-h_{K+1},
    \end{align*}
    which is the claim for $K+1$.
\end{proof}

By \eqref{eq:xuniform}, Lemma \ref{lem:harmonic} and $\bbE(X_{i1})=1$,
\begin{equation*}
    \Var(X_{i1})=\frac1K\sum_{k=1}^K\frac{K^2(h_K-h_k)^2}{(K-1)^2}-1=\frac{K\{2(K-1)-h_K\}}{(K-1)^2}-1.
\end{equation*}
The map $k\mapsto K-k+1$ is a bijection of $\{1,\dots,K\}$, and by \eqref{eq:xuniform} it carries the values of $X_{i0}$ to the values of $X_{i1}$. The onsets are equally likely, so $X_{i0}$ and $X_{i1}$ have the same distribution and $\Var(X_{i0})=\Var(X_{i1})$. This proves \eqref{eq:uniform}. The variance of $X_{i1}-X_{i0}$ is computed from \eqref{eq:xuniform} by summing over the $K$ onsets.

Under model \eqref{eq:working}, the phase contrast \eqref{eq:contrast} at the realized onset is $\tau_i$ plus the mean of the $\varepsilon_{it}$ over sessions $S_i,\dots,T_i$ minus their mean over sessions $1,\dots,S_i-1$. Its conditional mean given $\bm S$ is $\tau_i$, so
\begin{equation}
    \Var\{\gamma_i(S_i)\}=\sigma_i^2\,\bbE\left(\frac1{T_i-S_i+1}+\frac1{S_i-1}\right).
    \label{eq:pcvar}
\end{equation}
Table \ref{tab:planfull} lists $\psi_i$, the variances of $X_{i1}$ and of $X_{i1}-X_{i0}$, and the standard deviations that one case contributes, for a grid of designs that contains the designs of Table \ref{tab:plan}.

\begin{table}[ht]
    \centering
    \caption{Planning quantities for one case under model \eqref{eq:working} with $K$ equally likely consecutive onsets, $a_i=n_i$ sessions before and after the window, $c_i=\sigma_i$, and the same standardized effect for all cases. The last three columns are the square roots of the terms in braces in \eqref{eq:planc} and \eqref{eq:planf} and of \eqref{eq:pcvar} divided by $\sigma_i^2$, which are standard deviations in units of $\sigma_i$.}
    \label{tab:planfull}
    \small
    \begin{tabular}{cccccccc}
        \toprule
        $K$ & $a_i=n_i$ & $\psi_i$ & $\Var(X_{i1})$ & $\Var(X_{i1}-X_{i0})$ & $\htheta^{\rm c}$ & $\htheta^{\rm f}$ & Phase contrast \\
        \midrule
        4 & 3 & 1.630 & 0.852 & 3.210 & 1.48 & 1.64 & 0.69 \\
        4 & 5 & 1.630 & 0.852 & 3.210 & 1.40 & 1.51 & 0.56 \\
        4 & 8 & 1.630 & 0.852 & 3.210 & 1.36 & 1.43 & 0.46 \\
        6 & 3 & 1.096 & 0.852 & 3.119 & 1.29 & 1.46 & 0.64 \\
        6 & 5 & 1.096 & 0.852 & 3.119 & 1.20 & 1.31 & 0.53 \\
        6 & 8 & 1.096 & 0.852 & 3.119 & 1.14 & 1.22 & 0.44 \\
        8 & 3 & 0.847 & 0.862 & 3.103 & 1.19 & 1.37 & 0.60 \\
        8 & 5 & 0.847 & 0.862 & 3.103 & 1.09 & 1.21 & 0.50 \\
        8 & 8 & 0.847 & 0.862 & 3.103 & 1.03 & 1.11 & 0.43 \\
        12 & 3 & 0.599 & 0.882 & 3.109 & 1.09 & 1.28 & 0.54 \\
        12 & 5 & 0.599 & 0.882 & 3.109 & 0.98 & 1.10 & 0.47 \\
        12 & 8 & 0.599 & 0.882 & 3.109 & 0.91 & 0.99 & 0.40 \\
        16 & 3 & 0.472 & 0.897 & 3.124 & 1.03 & 1.23 & 0.50 \\
        16 & 5 & 0.472 & 0.897 & 3.124 & 0.91 & 1.05 & 0.44 \\
        16 & 8 & 0.472 & 0.897 & 3.124 & 0.83 & 0.93 & 0.38 \\
        \bottomrule
    \end{tabular}
\end{table}

\section{Proof of Proposition \ref{prop:scale}}
\label{sec:scale}

As in Section \ref{sec:planning}, $\bbE$ is taken over the onsets and the noise of model \eqref{eq:working}. Let $i'\in\cN_i$. For $t\in\cA_{i'}$, $W_{i't}=0$ and \eqref{eq:working} gives $Y_{i't}=\phi_{i'}+\varepsilon_{i't}$, so that $Y_{i't}-\overline Y_{i'\cA}=\varepsilon_{i't}-\overline\varepsilon_{i'\cA}$ with the notation \eqref{eq:noise}. The $\varepsilon_{i't}$, $t\in\cA_{i'}$, are independent and normal with mean 0 and variance $\sigma^2$, so $\sigma^{-2}\sum_{t\in\cA_{i'}}(\varepsilon_{i't}-\overline\varepsilon_{i'\cA})^2$ has the chi-squared distribution with $a_{i'}-1$ degrees of freedom. These sums are independent across the cases in $\cN_i$. By \eqref{eq:pooledsd},
\begin{equation*}
    \chi^2=\frac{\nu_ic_i^2}{\sigma^2}=\frac1{\sigma^2}\sum_{i'\in\cN_i}\sum_{t\in\cA_{i'}}\left(\varepsilon_{i't}-\overline\varepsilon_{i'\cA}\right)^2
\end{equation*}
has the chi-squared distribution with $\nu_i=\sum_{i'\in\cN_i}(a_{i'}-1)$ degrees of freedom, whose density at $x>0$ is $x^{\nu_i/2-1}e^{-x/2}/\{2^{\nu_i/2}\Gamma(\nu_i/2)\}$. Since $\nu_i\ge2$, the integral below is finite, and the definition of the gamma function gives
\begin{equation*}
    \begin{split}\bbE\left\{(\chi^2)^{-1/2}\right\}&=\frac1{2^{\nu_i/2}\Gamma(\nu_i/2)}\int_0^\infty x^{(\nu_i-1)/2-1}e^{-x/2}\,dx\\&=\frac{2^{(\nu_i-1)/2}\,\Gamma\{(\nu_i-1)/2\}}{2^{\nu_i/2}\,\Gamma(\nu_i/2)}=\frac{\Gamma\{(\nu_i-1)/2\}}{2^{1/2}\,\Gamma(\nu_i/2)}.\end{split}
\end{equation*}
Since $1/c_i=\nu_i^{1/2}/\{\sigma(\chi^2)^{1/2}\}$ and $\tau_i$ is a constant,
\begin{equation*}
    \bbE\left(\frac{\tau_i}{c_i}\right)=\frac{\tau_i\nu_i^{1/2}}\sigma\cdot\frac{\Gamma\{(\nu_i-1)/2\}}{2^{1/2}\,\Gamma(\nu_i/2)}=\frac{\tau_i}\sigma\cdot\frac{(\nu_i/2)^{1/2}\,\Gamma\{(\nu_i-1)/2\}}{\Gamma(\nu_i/2)}=\frac{\tau_i}{\sigma\iota_i}.
\end{equation*}
\qed

The scale $c_i/\iota_i$ is a function of the pre-window outcomes, so it takes the same value under every vector of onsets by \eqref{eq:prefixed}, and Corollary \ref{cor:standardized} holds with this scale. With it, $\theta_i=\iota_i\tau_i/c_i$, and the display above gives $\bbE(\theta_i)=\tau_i/\sigma$.

The factor satisfies $\iota_i<1$. The two displays above give $\bbE(1/c_i)=1/(\sigma\iota_i)$. The chi-squared distribution with $\nu_i$ degrees of freedom has mean $\nu_i$, so $\bbE(c_i^2)=\sigma^2$ and $\bbE(c_i)\le\{\bbE(c_i^2)\}^{1/2}=\sigma$. The function $x\mapsto1/x$ is strictly convex on $(0,\infty)$ and $c_i$ is not constant, so Jensen's inequality gives
\begin{equation*}
    \frac1{\sigma\iota_i}=\bbE\left(\frac1{c_i}\right)>\frac1{\bbE(c_i)}\ge\frac1\sigma.
\end{equation*}

\section{Further Results}
\label{sec:further}

\subsection{The Two-Anchor Estimator Without Assumption \ref{ass:duration}}
\label{sec:biasc}

\begin{propositionS}
    \label{prop:biasc}
    Under Assumption \ref{ass:anticipation},
    \begin{equation}
        \bbE(\htau^{\rm c}_i)-\tau_i=-\Cov\left(X_{i1},\overline Y_{i\cB}\right),\quad\left|\bbE(\htau^{\rm c}_i)-\tau_i\right|\le\frac{\Var(X_{i1})^{1/2}}2\left\{\max_{s\in\cS_i}\overline Y_{i\cB}(s)-\min_{s\in\cS_i}\overline Y_{i\cB}(s)\right\}.
        \label{eq:biasc}
    \end{equation}
    If $\overline Y_{i\cB}(s)\ge\overline Y_{i\cB}(s')$ for all $s,s'\in\cS_i$ with $s<s'$, then $\bbE(\htau^{\rm c}_i)\le\tau_i$.
\end{propositionS}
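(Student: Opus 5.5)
Under Assumption \ref{ass:anticipation} alone, I would start from the family representation \eqref{eq:family} of $\htau^{\rm c}_i$, with offsets $\ell^0_{it}=\overline Y_{i\cA}$ and $\ell^1_{it}=\overline Y_{i\cB}$. The anchor $\overline Y_{i\cA}$ is still a constant by \eqref{eq:prefixed}. The post-window mean $\overline Y_{i\cB}=\overline Y_{i\cB}(S_i)$, however, now varies with the onset.

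First, I would rewrite $\htau^{\rm c}_i$ so that the random anchor appears in a single term. Expanding the treated side gives
\begin{equation*}
\htau^{\rm c}_i=\htau^{\rm b}_i+\overline Y_{i\cB}\,(1-X_{i1}),
\end{equation*}
which is the unscaled version of \eqref{eq:bc}. The two terms $\overline Y_{i\cA}X_{i1}$ and $\overline Y_{i\cA}$ from $\htau^{\rm b}_i$ combine correctly; I would check this bookkeeping directly. Theorem \ref{thm:unbiased} gives $\bbE(\htau^{\rm b}_i)=\tau_i$ under Assumption \ref{ass:anticipation}. Lemma \ref{lem:weights} gives $\bbE(X_{i1})=1$. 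Together these yield
\begin{equation*}
\bbE\{\overline Y_{i\cB}(1-X_{i1})\}=-\Cov(X_{i1},\overline Y_{i\cB}),
\end{equation*}
which is the identity in \eqref{eq:biasc}. No boundedness or duration assumption is needed, since both $X_{i1}$ and $\overline Y_{i\cB}$ are functions of $S_i$ alone and take finitely many values.

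For the bound, let $M_+$ and $M_-$ be the maximum and minimum of $\overline Y_{i\cB}(s)$ over $s\in\cS_i$. Covariance is unchanged by subtracting the midpoint $(M_++M_-)/2$ from $\overline Y_{i\cB}$. After centering, the absolute value is at most $(M_+-M_-)/2$, so its standard deviation is at most $(M_+-M_-)/2$. Cauchy--Schwarz then gives
\begin{equation*}
|\Cov|\le \Var(X_{i1})^{1/2}\,\frac{M_+-M_-}{2}.
\end{equation*}

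For the sign statement, I would reuse the argument of Lemma \ref{lem:order}. From \eqref{eq:sums}, $X_{i1}(s)$ is nonincreasing in $s$, and by hypothesis $\overline Y_{i\cB}(s)$ is also nonincreasing. Take an independent copy $S_i'$ of $S_i$. Then
\begin{equation*}
\Cov(X_{i1},\overline Y_{i\cB})=\tfrac12\,\bbE\left[\{X_{i1}(S_i)-X_{i1}(S_i')\}\{\overline Y_{i\cB}(S_i)-\overline Y_{i\cB}(S_i')\}\right]\ge0,
\end{equation*}
because the two differences never have opposite signs. Hence the bias is at most $0$. The main place to be careful is the first algebraic step, namely confirming that the constant anchors cancel exactly in the decomposition. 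The remaining steps are standard covariance inequalities.
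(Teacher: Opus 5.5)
Your route is the paper's: write $\htau^{\rm c}_i$ as $\htau^{\rm b}_i$ plus a correction proportional to $1-X_{i1}$, and use $\bbE(\htau^{\rm b}_i)=\tau_i$ and $\bbE(X_{i1})=1$. Then bound the covariance by Cauchy--Schwarz after centering $\overline Y_{i\cB}$ at the midpoint of its range, and sign it with an independent copy of $S_i$.

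The step that fails is the decomposition itself. The identity $\htau^{\rm c}_i=\htau^{\rm b}_i+\overline Y_{i\cB}(1-X_{i1})$ is false, and the bookkeeping you defer does not cancel. Subtract $\htau^{\rm b}_i$ from $\htau^{\rm c}_i$. The treated sums differ by $-(\overline Y_{i\cB}-\overline Y_{i\cA})X_{i1}$, the untreated sums are identical, and the add-back contributes $\overline Y_{i\cB}-\overline Y_{i\cA}$. Hence
\begin{equation*}
    \htau^{\rm c}_i=\htau^{\rm b}_i+\left(\overline Y_{i\cB}-\overline Y_{i\cA}\right)\left(1-X_{i1}\right).
\end{equation*}
This is what \eqref{eq:bc} actually gives after multiplying by $c_i$, because $\widetilde Y_{i\cB}=(\overline Y_{i\cB}-\overline Y_{i\cA})/c_i$, not $\overline Y_{i\cB}/c_i$. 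The pre-window terms $-\overline Y_{i\cA}$ and $+\overline Y_{i\cA}X_{i1}$ leave $-\overline Y_{i\cA}(1-X_{i1})$, which is not zero for a given onset. In the example of Figure~\ref{fig:window}, $X_{i1}=10/9$ and $\htau^{\rm b}_i=41/9$. Your identity gives $34/9\approx3.78$, whereas $\htau^{\rm c}_i=73/18\approx4.06$.

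The repair is immediate. It is exactly where the constancy of $\overline Y_{i\cA}$, which you mention but do not use here, must be invoked. By \eqref{eq:prefixed}, $\overline Y_{i\cA}$ takes the same value under every onset. By Lemma~\ref{lem:weights}, $\bbE\{\overline Y_{i\cA}(1-X_{i1})\}=\overline Y_{i\cA}\,\bbE(1-X_{i1})=0$. Hence $\bbE(\htau^{\rm c}_i)-\tau_i=\bbE\{\overline Y_{i\cB}(1-X_{i1})\}=-\Cov(X_{i1},\overline Y_{i\cB})$, as you claim. The anchors cancel in expectation, not pathwise.

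With this correction, the rest goes through as written and matches the paper's proof. That includes the bound $\Var(\overline Y_{i\cB})\le(M_+-M_-)^2/4$ with Cauchy--Schwarz. It also includes the sign argument using that $X_{i1}(s)$ and $\overline Y_{i\cB}(s)$ are both nonincreasing.
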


\begin{proof}
    Multiplying the second identity of \eqref{eq:bc} by $c_i$ gives $\htau^{\rm c}_i=\htau^{\rm b}_i+(\overline Y_{i\cB}-\overline Y_{i\cA})(1-X_{i1})$. By Theorem \ref{thm:unbiased}, $\bbE(\htau^{\rm b}_i)=\tau_i$. By \eqref{eq:prefixed}, $\overline Y_{i\cA}$ is a constant, and $\bbE(1-X_{i1})=0$ by Lemma \ref{lem:weights}. Hence
    \begin{equation*}
        \bbE(\htau^{\rm c}_i)-\tau_i=\bbE\left\{\overline Y_{i\cB}\left(1-X_{i1}\right)\right\}=-\Cov\left(X_{i1},\overline Y_{i\cB}\right).
    \end{equation*}
    Let $y_{\min}$ and $y_{\max}$ be the smallest and the largest value of $\overline Y_{i\cB}(s)$ over $s\in\cS_i$. Then, $|\overline Y_{i\cB}-(y_{\min}+y_{\max})/2|\le(y_{\max}-y_{\min})/2$, so
    \begin{equation*}
        \Var(\overline Y_{i\cB})\le\bbE\left[\left\{\overline Y_{i\cB}-\frac{y_{\min}+y_{\max}}2\right\}^2\right]\le\frac{(y_{\max}-y_{\min})^2}4,
    \end{equation*}
    and the Cauchy--Schwarz inequality gives the bound in \eqref{eq:biasc}. For the last claim, $W_{it}(s)=\bbone(t\ge s)$ is nonincreasing in $s$, so $X_{i1}(s)\ge X_{i1}(s')$ for $s<s'$. Let $S'_i$ be an independent copy of $S_i$. The two differences below have the same sign or one of them is 0, so
    \begin{equation*}
        \Cov\left(X_{i1},\overline Y_{i\cB}\right)=\frac12\,\bbE\left[\left\{X_{i1}(S_i)-X_{i1}(S'_i)\right\}\left\{\overline Y_{i\cB}(S_i)-\overline Y_{i\cB}(S'_i)\right\}\right]\ge0.
    \end{equation*}
\end{proof}

Suppose that the treated outcome at each post-window session is larger when the intervention started earlier. Then an earlier onset gives a larger post-window mean, and the last claim of Proposition \ref{prop:biasc} gives a bias that is negative or 0. The bias of $\htheta^{\rm c}$ is $-\sum_{i=1}^N\omega_i\Cov(X_{i1},\overline Y_{i\cB})/c_i$.

\subsection{Permuted Onsets}
\label{sec:permuted}

\begin{propositionS}
    \label{prop:permuted}
    Let the cases of block $g$ have the same weight $\omega_i=\Omega_g/|\cC_g|$, and let their onsets be a uniformly distributed permutation of a fixed list of $|\cC_g|$ sessions that contains at least two distinct sessions. Then, $X_{(g)w}=1$ for $w\in\{0,1\}$ and every permutation. If this holds for every block, then $\overline X_1=\overline X_0=1$ and $\htheta^{\rm f}=\htheta^{\rm b}$.
\end{propositionS}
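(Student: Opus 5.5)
The plan is to exploit the fact that, under a permutation design, the multiset of onsets in block $g$ is the fixed list whichever permutation is drawn. Each per-session sum of treatment indicators over the block is therefore a constant, and the inverse-probability weights turn it into exactly $|\cC_g|$.

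First I pin down the marginal design of each case. Write $n_g=|\cC_g|$ and let $s_1,\dots,s_{n_g}$ be the list. Under a uniformly distributed permutation, every case $i\in\cC_g$ has $\bbP(S_i=s)=n_g^{-1}\#\{k:s_k=s\}$. Hence every case in the block has the same admissible set $\cS_i$, namely the distinct values of the list. It follows that $a_i$, $b_i$, $\cW_i$ and $m_i$ are common to the block; call them $a$, $b$, $\cW$ and $m$. Moreover
\begin{equation*}
    \pi_{it}=\pi_t=\frac{\#\{k:s_k\le t\}}{n_g}.
\end{equation*}
Because the list contains at least two distinct sessions, $\cS_i$ has at least two elements, so $m\ge1$ and $0<\pi_t<1$ for $t\in\cW$. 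The weights in \eqref{eq:sums} are therefore well defined. The numbers of sessions $T_i$ may differ across cases, but they do not enter the window sums.

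Next comes the key counting identity. For a fixed $t\in\cW$ and any realized permutation, the onsets $\{S_i:i\in\cC_g\}$ are exactly the list, so
\begin{equation*}
    \sum_{i\in\cC_g}W_{it}=\#\{k:s_k\le t\}=n_g\pi_t,\qquad\sum_{i\in\cC_g}(1-W_{it})=n_g(1-\pi_t).
\end{equation*}
With equal weights $\omega_i=\Omega_g/n_g$, \eqref{eq:blocksums} gives
\begin{equation*}
    X_{(g)1}=\frac1{n_g}\sum_{i\in\cC_g}\frac1m\sum_{t\in\cW}\frac{W_{it}}{\pi_t}=\frac1{n_gm}\sum_{t\in\cW}\frac{n_g\pi_t}{\pi_t}=1.
\end{equation*}
The same computation with $1-W_{it}$ and $1-\pi_t$ gives $X_{(g)0}=1$.

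Finally, if this holds for every block, then $\overline X_w=\sum_g\Omega_gX_{(g)w}=\sum_g\Omega_g=1$. This denominator is positive, so Definition \ref{def:estimators}(f) gives $\hmu_w=\overline H_w$. Then $\htheta^{\rm f}=\overline H_1-\overline H_0=\sum_i\omega_i(H_{i1}-H_{i0})$, which equals $\sum_i\omega_i\htheta^{\rm b}_i=\htheta^{\rm b}$ by the first identity in \eqref{eq:bc}.

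The only delicate step is the first one: the probabilities $\pi_{it}$ and the windows must coincide across the cases of a block. That is what lets $1/\pi_t$ be pulled out of the sum over cases. It follows from exchangeability of the permutation, and it is also where the hypothesis of at least two distinct sessions is used.
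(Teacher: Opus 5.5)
Your proof is correct and follows the paper's argument. Both rest on two facts: the cases of a block share $\cS_i$, the window and the $\pi_{it}$, and the multiset of onsets is the fixed list under every permutation. The only cosmetic difference is in the final count: the paper notes that $\sum_{i\in\cC_g}X_{iw}(S_i)$ is permutation-invariant and so equals its expectation $|\cC_g|$ by Lemma \ref{lem:weights}, whereas you evaluate it directly by counting $\sum_{i\in\cC_g}W_{it}=|\cC_g|\pi_t$ session by session.
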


\begin{proof}
    Every case of the block has the same marginal distribution of the onset, which puts probability on each session in proportion to the number of times it appears in the list. The cases therefore share $\cS_i$, $a_i$, $b_i$, $m_i$ and the probabilities $\pi_{it}$, and the function $s\mapsto X_{iw}(s)$ is the same for all of them. The sum $\sum_{i\in\cC_g}X_{iw}(S_i)$ is the sum of this function over the list, which is the same for every permutation. It equals its expectation, which is $|\cC_g|$ by Lemma \ref{lem:weights}. By \eqref{eq:blocksums},
    \begin{equation*}
        X_{(g)w}=\frac1{\Omega_g}\sum_{i\in\cC_g}\frac{\Omega_g}{|\cC_g|}X_{iw}(S_i)=1.
    \end{equation*}
    If this holds for every block, $\overline X_w=\sum_{g=1}^G\Omega_gX_{(g)w}=1$, so $\hmu_w=\overline H_w$ and, by \eqref{eq:bc},
    \begin{equation*}
        \htheta^{\rm f}=\overline H_1-\overline H_0=\sum_{i=1}^N\omega_i\left(H_{i1}-H_{i0}\right)=\sum_{i=1}^N\omega_i\htheta^{\rm b}_i.
    \end{equation*}
\end{proof}

\subsection{Randomization Test of the Sharp Null Hypothesis}
\label{sec:test}

The sharp null hypothesis states that $Y_{it}(s)=Y_{it}(s')$ for every case $i$, every session $t$ and all $s,s'\in\cS_i$. Under it, the observed outcomes are the outcomes under every vector of onsets. A test statistic computed from the onsets and the outcomes is then a known function $\cT(\bm s)$ of the vector of onsets $\bm s$. Let $\Psi(x)=\bbP\{\cT(\bm S)\ge x\}$, and let $\bm S^{(1)},\dots,\bm S^{(B)}$ be independent draws from the design, independent of $\bm S$. The exact and the Monte Carlo $p$-values are
\begin{equation}
    P_{\rm ex}=\Psi\{\cT(\bm S)\},\qquad P_{\rm mc}=\frac{1+\#\{1\le j\le B:\cT(\bm S^{(j)})\ge\cT(\bm S)\}}{1+B}.
    \label{eq:pvalues}
\end{equation}

\begin{propositionS}
    \label{prop:test}
    Under the sharp null hypothesis, $\bbP(P_{\rm ex}\le\alpha)\le\alpha$ and $\bbP(P_{\rm mc}\le\alpha)\le\alpha$ for every $\alpha\in(0,1)$.
\end{propositionS}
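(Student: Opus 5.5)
The exact $p$-value is a survival function evaluated at its own random variable, so I will handle it with the standard argument for such quantities. The Monte Carlo $p$-value I will treat by exchangeability, conditioning on the observed statistic. Under the sharp null hypothesis, the observed outcomes do not depend on the onsets, so $\cT(\bm S)$ is a fixed function of $\bm S$ alone. The draws $\bm S,\bm S^{(1)},\dots,\bm S^{(B)}$ are independent and identically distributed from the design, hence so are $T_0=\cT(\bm S),T_1=\cT(\bm S^{(1)}),\dots,T_B=\cT(\bm S^{(B)})$. The design has finitely many onset vectors, so $\cT(\bm S)$ takes finitely many values.

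\emph{Exact $p$-value.} Let $\Psi(x)=\bbP(T_0\ge x)$, which is nonincreasing and left-continuous. Fix $\alpha$. Let $x^\ast=\inf\{x:\Psi(x)\le\alpha\}$, taking $x^\ast=+\infty$ if the set is empty, in which case $\bbP(P_{\rm ex}\le\alpha)=0$.
\begin{itemize}
\item If $\Psi(x^\ast)\le\alpha$, then $\{\Psi(T_0)\le\alpha\}=\{T_0\ge x^\ast\}$, and this event has probability $\Psi(x^\ast)\le\alpha$.
\item Otherwise, $\{\Psi(T_0)\le\alpha\}=\{T_0>x^\ast\}$. Its probability is $\lim_{x\downarrow x^\ast}\Psi(x)\le\alpha$.
\end{itemize}
Because the support is finite, I can argue directly over the ordered support points instead of through the infimum. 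Either way, $\bbP(P_{\rm ex}\le\alpha)\le\alpha$.

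\emph{Monte Carlo $p$-value.} Condition on $T_0=x$. The count $N=\#\{j:T_j\ge x\}$ is then Binomial$(B,\Psi(x))$. The event $P_{\rm mc}\le\alpha$ is $N\le k_\alpha$, where $k_\alpha=\lfloor\alpha(1+B)\rfloor-1$.

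The cleanest route is to avoid binomial calculations and use exchangeability. Ties need a randomized tie-breaking, or equivalently the observation that ties only make $P_{\rm mc}$ larger. Let $R$ be the rank of $T_0$ among $T_0,\dots,T_B$, counted from the top with ties broken uniformly at random and independently of the draws. By exchangeability, $R$ is uniform on $\{1,\dots,B+1\}$. Also $1+N\ge R$, since every $T_j\ge T_0$ with $j\ge1$ either ranks above $T_0$ or ties with it. Therefore
\[
\bbP(P_{\rm mc}\le\alpha)\le\bbP\{R\le\alpha(1+B)\}=\frac{\lfloor\alpha(1+B)\rfloor}{1+B}\le\alpha.
\]

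The main obstacle is handling ties rigorously, since the finite onset set makes ties in $\cT$ common. I would make the tie-breaking device explicit. I would introduce independent uniform variables $U_0,\dots,U_B$ and rank the pairs $(T_j,U_j)$ lexicographically. The exchangeability of these pairs gives the uniform rank, and the inequality $1+N\ge R$ holds deterministically.
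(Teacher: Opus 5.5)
Your proof is correct, and for the Monte Carlo $p$-value it takes a genuinely different route from the paper.

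\textbf{Exact $p$-value.} You reach the paper's conclusion by a slightly more general route. The paper uses the finite support: it takes $x_\ast$ to be the smallest attained value of $\cT(\bm S)$ with $\Psi(x_\ast)\le\alpha$, so that $\{P_{\rm ex}\le\alpha\}\subseteq\{\cT(\bm S)\ge x_\ast\}$ in one line. Your infimum argument, with the cases $\Psi(x^\ast)\le\alpha$ and $\Psi(x^\ast)>\alpha$ (the latter using the right limit $\bbP(T_0>x^\ast)$), works for any distribution. Left-continuity of $\Psi$ is never actually used.

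\textbf{Monte Carlo $p$-value.} You break ties with auxiliary independent uniforms $U_0,\dots,U_B$, so the rank $R$ is exactly uniform on $\{1,\dots,B+1\}$ and $R\le 1+N$ holds deterministically.

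The paper instead stays on the original probability space and handles ties by counting. It sets $I_j=\#\{j':\cT(\bm S^{(j')})\ge\cT(\bm S^{(j)})\}$, so that $P_{\rm mc}=I_0/(1+B)$. It then shows that at most $k$ indices can have $I_j\le k$. Every such index has a statistic at least that of the one with the smallest statistic among them, say $j_\ast$, so there are at most $I_{j_\ast}\le k$ of them. Exchangeability of $I_0,\dots,I_B$ then gives $\bbP(I_0\le k)\le k/(1+B)$.

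\textbf{What each route buys.} Your rank argument is the familiar one and yields a bit more: the tie-broken quantity $R/(1+B)$ is exactly uniform on the grid. The cost is an enlargement of the probability space and the almost-sure distinctness of the pairs $(T_j,U_j)$. The paper's argument is essentially a deterministic version of the same exchangeability idea and needs no auxiliary randomization.
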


\begin{proof}
    The statistic $\cT(\bm S)$ takes finitely many values. If none of these values $x$ has $\Psi(x)\le\alpha$, then $\bbP(P_{\rm ex}\le\alpha)=0$. Otherwise let $x_\ast$ be the smallest of these values with $\Psi(x_\ast)\le\alpha$. On the event $\{P_{\rm ex}\le\alpha\}$, the value $x=\cT(\bm S)$ has $\Psi(x)\le\alpha$, so $\cT(\bm S)\ge x_\ast$. Hence
    \begin{equation*}
        \bbP(P_{\rm ex}\le\alpha)\le\bbP\{\cT(\bm S)\ge x_\ast\}=\Psi(x_\ast)\le\alpha.
    \end{equation*}
    For the Monte Carlo $p$-value, write $\bm S^{(0)}=\bm S$ and, for $0\le j\le B$, let $I_j$ be the number of indices $j'\in\{0,\dots,B\}$ with $\cT(\bm S^{(j')})\ge\cT(\bm S^{(j)})$. Then $P_{\rm mc}=I_0/(1+B)$. Fix an integer $k\ge0$. If some index has $I_j\le k$, let $j_\ast$ be such an index with the smallest value of $\cT(\bm S^{(j)})$. Every index $j$ with $I_j\le k$ has $\cT(\bm S^{(j)})\ge\cT(\bm S^{(j_\ast)})$, so there are at most $I_{j_\ast}\le k$ of them. The vectors $\bm S^{(0)},\dots,\bm S^{(B)}$ are independent and identically distributed, so $I_0,\dots,I_B$ are exchangeable, and
    \begin{equation*}
        \bbP(I_0\le k)=\frac1{1+B}\sum_{j=0}^B\bbP(I_j\le k)=\frac{\bbE\,\#\{0\le j\le B:I_j\le k\}}{1+B}\le\frac k{1+B}.
    \end{equation*}
    Let $k$ be the integer part of $\alpha(1+B)$. Since $I_0$ is an integer,
    \begin{equation*}
        \bbP(P_{\rm mc}\le\alpha)=\bbP\{I_0\le\alpha(1+B)\}=\bbP(I_0\le k)\le\frac k{1+B}\le\alpha.
    \end{equation*}
\end{proof}

\section{Simulation Details and Additional Studies}
\label{sec:simulations}

In Studies 1 and 3--6, the untreated outcome of a case is a level drawn from a normal distribution with mean 10 and standard deviation 2, plus a baseline pattern and a first-order autoregressive normal noise with unit variance and autocorrelation 0.3. The effect of a case is drawn from a normal distribution with mean 2 and standard deviation 0.5. Unless a study states otherwise, there are 20 sessions, and the onsets are drawn independently and uniformly from sessions 6 to 13. The effect is present in full from the onset session onward, with $c_i=1$, equal weights, and each case as a block. Every replication draws a new population and new onsets, and every study has 1,000 replications. The bias is the mean of the estimate minus the pooled effect $\theta$ of the replication, and its Monte Carlo standard error is the standard deviation of this difference divided by $1000^{1/2}$. Intervals have nominal level 95\%, and the Monte Carlo standard error of a coverage rate near 95\% is 0.7 percentage points.

\subsection{Studies 1 to 3: Full Results}

Table \ref{tab:sim1full} adds the Monte Carlo standard error of the bias and the mean standard error to Table \ref{tab:sim1}. Table \ref{tab:sim2full} adds the bias with its Monte Carlo standard error and the mean standard error to Table \ref{tab:sim2}. Table \ref{tab:sim3full} adds the mean standard error to Table \ref{tab:sim3}.

\begin{table}[ht]
    \centering
    \caption{Study 1: bias with its Monte Carlo standard error (MCSE), standard deviation of the estimation error (SD), mean standard error (SE), and coverage of 95\% intervals (\%).}
    \label{tab:sim1full}
    \small
    \begin{tabular}{llccccc}
        \toprule
        Baseline & Method & Bias & MCSE & SD & Mean SE & Coverage \\
        \midrule
        Flat & $\htheta^{\rm b}$ & 0.005 & 0.016 & 0.501 & 0.493 & 94.3 \\
        & $\htheta^{\rm c}$ & 0.005 & 0.008 & 0.257 & 0.275 & 96.5 \\
        & $\htheta^{\rm f}$ & 0.002 & 0.010 & 0.313 & 0.316 & 94.8 \\
        & Phase contrast & 0.005 & 0.004 & 0.123 & 0.158 & 99.0 \\
        & Multilevel, no trend & 0.004 & 0.004 & 0.119 & 0.126 & 96.7 \\
        & Multilevel, linear trend & 0.000 & 0.006 & 0.197 & 0.203 & 95.5 \\
        Linear trend & $\htheta^{\rm b}$ & $-0.014$ & 0.022 & 0.687 & 0.654 & 94.7 \\
        & $\htheta^{\rm c}$ & $-0.004$ & 0.009 & 0.292 & 0.301 & 95.1 \\
        & $\htheta^{\rm f}$ & 0.007 & 0.010 & 0.318 & 0.315 & 94.4 \\
        & Phase contrast & 1.000 & 0.004 & 0.116 & 0.158 & 0.0 \\
        & Multilevel, no trend & 0.881 & 0.004 & 0.124 & 0.138 & 0.0 \\
        & Multilevel, linear trend & $-0.001$ & 0.006 & 0.201 & 0.202 & 95.1 \\
        Common shift & $\htheta^{\rm b}$ & $-0.030$ & 0.021 & 0.672 & 0.669 & 95.8 \\
        & $\htheta^{\rm c}$ & $-0.004$ & 0.012 & 0.374 & 0.362 & 93.6 \\
        & $\htheta^{\rm f}$ & 0.022 & 0.012 & 0.369 & 0.350 & 92.3 \\
        & Phase contrast & 1.263 & 0.004 & 0.126 & 0.159 & 0.0 \\
        & Multilevel, no trend & 1.078 & 0.005 & 0.143 & 0.144 & 0.0 \\
        & Multilevel, linear trend & 0.378 & 0.007 & 0.229 & 0.217 & 56.4 \\
        \bottomrule
    \end{tabular}
\end{table}

\begin{table}[ht]
    \centering
    \caption{Study 2: bias with its Monte Carlo standard error (MCSE), standard deviation of the estimation error (SD), standard deviation predicted by Proposition \ref{prop:planning} (Pred.), mean standard error (SE), and coverage of 95\% intervals (\%).}
    \label{tab:sim2full}
    \footnotesize
    \setlength{\tabcolsep}{3pt}
    \begin{tabular}{cccccccccccccc}
        \toprule
        & & \multicolumn{6}{c}{$\htheta^{\rm c}$} & \multicolumn{6}{c}{$\htheta^{\rm f}$} \\
        \cmidrule(lr){3-8}\cmidrule(lr){9-14}
        $K$ & $N$ & Bias & MCSE & SD & Pred. & Mean SE & Cov. & Bias & MCSE & SD & Pred. & Mean SE & Cov. \\
        \midrule
        4 & 4 & 0.027 & 0.022 & 0.684 & 0.684 & 0.630 & 96.1 & 0.010 & 0.025 & 0.805 & 0.754 & 0.699 & 97.7 \\
        4 & 6 & $-0.005$ & 0.018 & 0.561 & 0.558 & 0.531 & 96.6 & $-0.008$ & 0.020 & 0.637 & 0.615 & 0.574 & 95.2 \\
        4 & 8 & $-0.017$ & 0.015 & 0.468 & 0.484 & 0.464 & 95.3 & $-0.008$ & 0.017 & 0.528 & 0.533 & 0.497 & 94.0 \\
        4 & 16 & $-0.007$ & 0.011 & 0.336 & 0.342 & 0.335 & 95.3 & $-0.004$ & 0.012 & 0.375 & 0.377 & 0.364 & 95.5 \\
        4 & 32 & $-0.004$ & 0.008 & 0.238 & 0.242 & 0.242 & 95.4 & $-0.008$ & 0.008 & 0.261 & 0.266 & 0.264 & 95.4 \\
        8 & 4 & 0.016 & 0.017 & 0.537 & 0.531 & 0.474 & 96.7 & 0.014 & 0.018 & 0.580 & 0.606 & 0.514 & 97.1 \\
        8 & 6 & $-0.017$ & 0.014 & 0.428 & 0.433 & 0.395 & 95.6 & $-0.033$ & 0.015 & 0.471 & 0.494 & 0.430 & 95.6 \\
        8 & 8 & 0.032 & 0.012 & 0.366 & 0.375 & 0.360 & 95.7 & 0.031 & 0.013 & 0.421 & 0.428 & 0.391 & 95.0 \\
        8 & 16 & 0.003 & 0.009 & 0.274 & 0.265 & 0.257 & 94.9 & 0.002 & 0.010 & 0.306 & 0.303 & 0.285 & 93.4 \\
        8 & 32 & $-0.009$ & 0.006 & 0.196 & 0.188 & 0.185 & 94.0 & $-0.009$ & 0.007 & 0.224 & 0.214 & 0.207 & 94.3 \\
        \bottomrule
    \end{tabular}
\end{table}

\begin{table}[ht]
    \centering
    \caption{Study 3: bias with its Monte Carlo standard error (MCSE), standard deviation of the estimation error (SD), mean standard error (SE), and coverage of 95\% intervals (\%), where the effect reaches its full size at the $R$th treated session.}
    \label{tab:sim3full}
    \small
    \begin{tabular}{clccccc}
        \toprule
        $R$ & Method & Bias & MCSE & SD & Mean SE & Coverage \\
        \midrule
        1 & $\htheta^{\rm b}$ & 0.016 & 0.015 & 0.485 & 0.485 & 95.4 \\
        & $\htheta^{\rm c}$ & 0.013 & 0.008 & 0.248 & 0.269 & 96.5 \\
        & $\htheta^{\rm f}$ & 0.013 & 0.010 & 0.305 & 0.310 & 95.1 \\
        3 & $\htheta^{\rm b}$ & 0.013 & 0.013 & 0.407 & 0.401 & 95.5 \\
        & $\htheta^{\rm c}$ & $-0.038$ & 0.009 & 0.274 & 0.280 & 95.9 \\
        & $\htheta^{\rm f}$ & 0.005 & 0.009 & 0.295 & 0.297 & 95.2 \\
        6 & $\htheta^{\rm b}$ & $-0.005$ & 0.011 & 0.345 & 0.339 & 94.7 \\
        & $\htheta^{\rm c}$ & $-0.143$ & 0.010 & 0.319 & 0.309 & 93.9 \\
        & $\htheta^{\rm f}$ & $-0.004$ & 0.009 & 0.296 & 0.289 & 95.2 \\
        \bottomrule
    \end{tabular}
\end{table}

\subsection{Study 4: Onsets Permuted Within Studies}
\label{sec:sim4}

This study checks Theorem \ref{thm:coverage} and Proposition \ref{prop:permuted} when the blocks are studies. There are $G=6$, 12, or 24 studies of four cases each. Within a study, the onsets 6, 8, 10 and 12 are assigned to the four cases by a uniformly distributed permutation, independently across studies. The baseline has a linear trend of 0.1 per session. We compute $\htheta^{\rm b}$, $\htheta^{\rm c}$ and $\htheta^{\rm f}$, with the studies as blocks. Table \ref{tab:sim4} reports the bias, the standard deviation of the estimation error, the mean standard error, and the coverage of the intervals.

The results match Theorem \ref{thm:coverage} and Proposition \ref{prop:permuted}. The biases are within two Monte Carlo standard errors of 0, and the intervals cover the pooled effect in 94.5\% to 97.4\% of the replications. The estimators $\htheta^{\rm b}$ and $\htheta^{\rm f}$ have the same bias and standard deviation in every row, because the two estimates are equal by Proposition \ref{prop:permuted}. Their intervals differ. With six studies, the mean standard errors are 0.305 and 0.341, because $\hV^{\rm f}$ contains the factor $(G-1)/(G-2)=1.25$.

\begin{table}[ht]
    \centering
    \caption{Study 4: bias with its Monte Carlo standard error (MCSE), standard deviation of the estimation error (SD), mean standard error (SE), and coverage of 95\% intervals (\%).}
    \label{tab:sim4}
    \small
    \begin{tabular}{clccccc}
        \toprule
        Studies & Method & Bias & MCSE & SD & Mean SE & Coverage \\
        \midrule
        6 & $\htheta^{\rm b}$ & 0.012 & 0.010 & 0.324 & 0.305 & 94.5 \\
        & $\htheta^{\rm c}$ & 0.005 & 0.008 & 0.262 & 0.258 & 96.3 \\
        & $\htheta^{\rm f}$ & 0.012 & 0.010 & 0.324 & 0.341 & 97.4 \\
        12 & $\htheta^{\rm b}$ & $-0.006$ & 0.007 & 0.216 & 0.226 & 96.1 \\
        & $\htheta^{\rm c}$ & $-0.009$ & 0.006 & 0.180 & 0.188 & 96.4 \\
        & $\htheta^{\rm f}$ & $-0.006$ & 0.007 & 0.216 & 0.237 & 96.7 \\
        24 & $\htheta^{\rm b}$ & 0.002 & 0.005 & 0.157 & 0.164 & 96.1 \\
        & $\htheta^{\rm c}$ & $-0.002$ & 0.004 & 0.129 & 0.137 & 96.3 \\
        & $\htheta^{\rm f}$ & 0.002 & 0.005 & 0.157 & 0.167 & 96.6 \\
        \bottomrule
    \end{tabular}
\end{table}

\subsection{Study 5: Standardized Effects}
\label{sec:sim5}

This study examines Corollary \ref{cor:standardized} and Theorem \ref{thm:coverage} for three scales. There are six studies, each with four cases. The outcomes of the six studies are multiplied by 1, 2, 5, 10, 20, and 50, so the studies measure the outcome in different units. The baseline has a linear trend of 0.1 per session before the multiplication. The scale $c_i$ is the standard deviation of the five pre-window outcomes of the case, the standard deviation of the pre-window outcomes pooled across the four cases of the study, or the pre-window mean of the case. We compute $\htheta^{\rm c}$ and $\htheta^{\rm f}$, with each case as a block. Table \ref{tab:sim5} reports the mean pooled effect, bias, standard deviation of the estimation error, mean standard error, and coverage of the intervals.

The results match Corollary \ref{cor:standardized} and Theorem \ref{thm:coverage}. The biases of $\htheta^{\rm c}$ are within two Monte Carlo standard errors of 0 for every scale, and the bias of $\htheta^{\rm f}$ is at most 1.5\% of the pooled effect. For the pre-window mean, the pooled effect is 0.203, and the biases are 0.001 and 0.003. The intervals cover the pooled effect in 96.1\% to 98.7\% of the replications. Coverage exceeds the nominal level because standardized case effects differ across cases, adding the last term of \eqref{eq:variance} to the expectation of the variance estimator.

\begin{table}[ht]
    \centering
    \caption{Study 5: mean pooled effect, bias with its Monte Carlo standard error (MCSE), standard deviation of the estimation error (SD), mean standard error (SE), and coverage of 95\% intervals (\%).}
    \label{tab:sim5}
    \small
    \setlength{\tabcolsep}{5pt}
    \begin{tabular}{lclccccc}
        \toprule
        Scale & Mean $\theta$ & Method & Bias & MCSE & SD & Mean SE & Coverage \\
        \midrule
        Standard deviation, case & 2.722 & $\htheta^{\rm c}$ & 0.018 & 0.014 & 0.448 & 0.526 & 98.7 \\
        & & $\htheta^{\rm f}$ & 0.015 & 0.019 & 0.604 & 0.627 & 97.0 \\
        Standard deviation, study & 2.256 & $\htheta^{\rm c}$ & 0.014 & 0.011 & 0.340 & 0.348 & 96.8 \\
        & & $\htheta^{\rm f}$ & 0.023 & 0.012 & 0.382 & 0.383 & 96.1 \\
        Mean, case & 0.203 & $\htheta^{\rm c}$ & 0.001 & 0.001 & 0.031 & 0.033 & 96.8 \\
        & & $\htheta^{\rm f}$ & 0.003 & 0.001 & 0.039 & 0.038 & 96.5 \\
        \bottomrule
    \end{tabular}
\end{table}

\subsection{Study 6: Number of Cases Under a Common Shift}
\label{sec:sim6}

This study checks Theorem \ref{thm:coverage} as the number of cases grows when the untreated outcome of every case shifts at the same session. We consider $N=6$, 12, 24, 48, or 96 cases, and the baseline pattern is a shift of 1.5 from session 10 onward, common to all cases. We compute $\htheta^{\rm b}$, $\htheta^{\rm c}$ and $\htheta^{\rm f}$. Table \ref{tab:sim6} reports the bias, the standard deviation of the estimation error, the mean standard error, and the coverage of the intervals.

The results match Theorem \ref{thm:coverage}. The bias of $\htheta^{\rm f}$ decreases as the number of cases grows, and the coverage of its intervals approaches the nominal level. The bias of $\htheta^{\rm f}$ is 0.174 with 6 cases, 0.072 with 12 cases, 0.033 with 24 cases, 0.023 with 48 cases, and 0.015 with 96 cases. It is a quarter of the standard deviation with 6 cases and less than a tenth from 24 cases onward. The estimators $\htheta^{\rm b}$ and $\htheta^{\rm c}$ have biases within two Monte Carlo standard errors of 0. The intervals of $\htheta^{\rm f}$ cover the pooled effect in 92.7\% and 92.5\% of the replications with 6 and 12 cases, and in 93.2\% to 95.7\% with 24 cases or more. The intervals of $\htheta^{\rm b}$ and $\htheta^{\rm c}$ cover it in 93.2\% to 96.1\%.

\begin{table}[ht]
    \centering
    \caption{Study 6: bias with its Monte Carlo standard error (MCSE), standard deviation of the estimation error (SD), mean standard error (SE), and coverage of 95\% intervals (\%).}
    \label{tab:sim6}
    \small
    \begin{tabular}{clccccc}
        \toprule
        Cases & Method & Bias & MCSE & SD & Mean SE & Coverage \\
        \midrule
        6 & $\htheta^{\rm b}$ & 0.037 & 0.043 & 1.359 & 1.264 & 96.1 \\
        & $\htheta^{\rm c}$ & $-0.007$ & 0.022 & 0.710 & 0.697 & 93.3 \\
        & $\htheta^{\rm f}$ & 0.174 & 0.022 & 0.685 & 0.641 & 92.7 \\
        12 & $\htheta^{\rm b}$ & 0.019 & 0.030 & 0.956 & 0.924 & 94.5 \\
        & $\htheta^{\rm c}$ & 0.009 & 0.016 & 0.510 & 0.507 & 93.2 \\
        & $\htheta^{\rm f}$ & 0.072 & 0.016 & 0.498 & 0.477 & 92.5 \\
        24 & $\htheta^{\rm b}$ & 0.022 & 0.021 & 0.677 & 0.672 & 95.5 \\
        & $\htheta^{\rm c}$ & $-0.006$ & 0.011 & 0.343 & 0.363 & 95.6 \\
        & $\htheta^{\rm f}$ & 0.033 & 0.011 & 0.333 & 0.349 & 94.5 \\
        48 & $\htheta^{\rm b}$ & 0.013 & 0.015 & 0.476 & 0.479 & 95.2 \\
        & $\htheta^{\rm c}$ & 0.010 & 0.008 & 0.253 & 0.259 & 94.5 \\
        & $\htheta^{\rm f}$ & 0.023 & 0.008 & 0.254 & 0.250 & 93.2 \\
        96 & $\htheta^{\rm b}$ & 0.014 & 0.011 & 0.340 & 0.337 & 93.8 \\
        & $\htheta^{\rm c}$ & 0.010 & 0.006 & 0.175 & 0.184 & 96.0 \\
        & $\htheta^{\rm f}$ & 0.015 & 0.005 & 0.172 & 0.178 & 95.7 \\
        \bottomrule
    \end{tabular}
\end{table}

\subsection{Study 7: The Scale}
\label{sec:sim7}

This study examines Proposition \ref{prop:scale} and Corollary \ref{cor:standardized} for the pooled standard deviation with and without the factor $\iota_i$. There are six studies of four cases each. The outcomes follow model \eqref{eq:working} with independent normal noise, $\tau_i=2$ and $\sigma_i=1$ for every case. Every case has 3, 5, or 8 pre-window sessions, seven window sessions, and eight post-window sessions, and the onsets are drawn independently and uniformly from the eight admissible sessions. The scale is the standard deviation of the pre-window outcomes pooled across the four cases of the study, with $\nu_i=8$, 16, or 28 degrees of freedom, or this standard deviation divided by $\iota_i$. We compute $\htheta^{\rm c}$ and $\htheta^{\rm f}$, with each case as a block. Table \ref{tab:sim7} reports the mean pooled effect, its value given by Proposition \ref{prop:scale}, the bias, standard deviation of the estimation error, mean standard error, and coverage of the intervals.

The results match Proposition \ref{prop:scale} and Corollary \ref{cor:standardized}. With the scale $c_i$, Proposition \ref{prop:scale} gives a mean pooled effect of $2/\iota_i$, which is $2/0.9027=2.216$ for $\nu_i=8$, and the simulated mean is 2.218. With the scale $c_i/\iota_i$, it yields 2, and the simulated mean is 2.002. The biases of both estimators are within two Monte Carlo standard errors of 0 for both scales, and the intervals cover the pooled effect in 94.8\% to 97.3\% of the replications.

\begin{table}[ht]
    \centering
    \caption{Study 7: mean pooled effect and its value given by Proposition \ref{prop:scale}, bias with its Monte Carlo standard error (MCSE), standard deviation of the estimation error (SD), mean standard error (SE), and coverage of 95\% intervals (\%). The scale is the pooled standard deviation $c_i$ or $c_i/\iota_i$.}
    \label{tab:sim7}
    \small
    \setlength{\tabcolsep}{4pt}
    \begin{tabular}{cclcclccccc}
        \toprule
        Pre-window & & & Mean & Proposition & & & & & Mean & \\
        sessions & $\nu_i$ & Scale & $\theta$ & \ref{prop:scale} & Method & Bias & MCSE & SD & SE & Coverage \\
        \midrule
        3 & 8 & $c_i$ & 2.218 & 2.216 & $\htheta^{\rm c}$ & $-0.008$ & 0.008 & 0.252 & 0.281 & 97.3 \\
        & & & & & $\htheta^{\rm f}$ & $-0.002$ & 0.011 & 0.341 & 0.346 & 95.2 \\
        & & $c_i/\iota_i$ & 2.002 & 2.000 & $\htheta^{\rm c}$ & $-0.007$ & 0.007 & 0.228 & 0.254 & 97.3 \\
        & & & & & $\htheta^{\rm f}$ & $-0.002$ & 0.010 & 0.308 & 0.312 & 95.2 \\
        5 & 16 & $c_i$ & 2.096 & 2.100 & $\htheta^{\rm c}$ & $-0.005$ & 0.007 & 0.236 & 0.232 & 96.4 \\
        & & & & & $\htheta^{\rm f}$ & $-0.010$ & 0.009 & 0.280 & 0.264 & 94.8 \\
        & & $c_i/\iota_i$ & 1.996 & 2.000 & $\htheta^{\rm c}$ & $-0.004$ & 0.007 & 0.224 & 0.221 & 96.4 \\
        & & & & & $\htheta^{\rm f}$ & $-0.010$ & 0.008 & 0.267 & 0.251 & 94.8 \\
        8 & 28 & $c_i$ & 2.051 & 2.056 & $\htheta^{\rm c}$ & $-0.007$ & 0.007 & 0.207 & 0.222 & 96.9 \\
        & & & & & $\htheta^{\rm f}$ & 0.000 & 0.007 & 0.231 & 0.238 & 95.5 \\
        & & $c_i/\iota_i$ & 1.995 & 2.000 & $\htheta^{\rm c}$ & $-0.007$ & 0.006 & 0.201 & 0.216 & 96.9 \\
        & & & & & $\htheta^{\rm f}$ & 0.000 & 0.007 & 0.225 & 0.231 & 95.5 \\
        \bottomrule
    \end{tabular}
\end{table}

\clearpage
\section{Details of the Application}
\label{sec:application}

\subsection{Designs and Case-Level Estimates}

This analysis reports the design of each study and the estimates for each student on the original scale. Table \ref{tab:designs} gives the numbers of sessions and the admissible onsets. In Study B, the three maintenance sessions that followed the intervention phase are not used. In Study C, the outcome is the total number of words written. Table \ref{tab:designs} also gives the pooled standard deviation $c_i$ of \eqref{eq:pooledsd}, its degrees of freedom $\nu_i$, the factor $\iota_i$, and the scale $c_i/\iota_i$ that the analysis uses. Table \ref{tab:cases} reports, for each student, the onset, the pre-window mean and standard deviation, the post-window mean, the estimates $\htau^{\rm a}_i$, $\htau^{\rm b}_i$ and $\htau^{\rm c}_i$, and the phase contrast.

The case-level estimates vary in the way that \eqref{eq:planb} and Proposition \ref{prop:biasc} indicate. In Study A, the estimate $\htau^{\rm a}_i$ ranges from $-11.32$ to 12.52, while the phase contrasts range from 4.14 to 6.00, because its weighted terms include the series level. The estimate $\htau^{\rm b}_i$ is larger for students with early onsets, whose sum of treated weights $X_{i1}$ is large. The estimate $\htau^{\rm c}_i$ ranges from 3.75 to 5.70 in Study A. In Study C, the two students with onset at session 6 have estimates $\htau^{\rm c}_i$ of $-31.61$ and 19.15 words. Their outcomes rose throughout the intervention phase, so their post-window means (156.33 and 108.67 words) exceed their first treated outcomes, which carry the largest weights.

\begin{table}[ht]
    \centering
    \caption{Designs of the three studies, with the pooled standard deviation $c_i$ of the pre-window outcomes, its degrees of freedom $\nu_i$, the factor $\iota_i$ and the scale $c_i/\iota_i$.}
    \label{tab:designs}
    \small
    \setlength{\tabcolsep}{4pt}
    \begin{tabular}{ccccccccccc}
        \toprule
        & & & Admissible & Pre- & & Post- & & & & \\
        Study & Students & Sessions & onsets & window & Window & window & $c_i$ & $\nu_i$ & $\iota_i$ & $c_i/\iota_i$ \\
        \midrule
        A & 6 & 18 & 4 to 9 & 3 & 5 & 10 & 0.75 & 12 & 0.936 & 0.80 \\
        B & 5 & 15 & 4 to 11 & 3 & 7 & 5 & 1.95 & 10 & 0.923 & 2.11 \\
        C & 4 & 12 & 6 to 10 & 5 & 4 & 3 & 4.86 & 16 & 0.952 & 5.10 \\
        \bottomrule
    \end{tabular}
\end{table}

\begin{table}[ht]
    \centering
    \caption{Case-level quantities on the original scale: onset, pre-window mean and standard deviation (SD), post-window mean, estimates $\htau^{\rm a}_i$, $\htau^{\rm b}_i$ and $\htau^{\rm c}_i$, and phase contrast.}
    \label{tab:cases}
    \small
    \setlength{\tabcolsep}{4pt}
    \begin{tabular}{clcccccccc}
        \toprule
        & & & \multicolumn{2}{c}{Pre-window} & Post-window & & & & Phase \\
        \cmidrule(lr){4-5}
        Study & Student & Onset & Mean & SD & mean & $\htau^{\rm a}_i$ & $\htau^{\rm b}_i$ & $\htau^{\rm c}_i$ & contrast \\
        \midrule
        A & Anna & 5 & 3.33 & 0.58 & 9.20 & 12.08 & 7.75 & 4.58 & 6.00 \\
        A & Bella & 8 & 3.00 & 1.00 & 8.80 & $-2.62$ & 1.28 & 5.69 & 5.82 \\
        A & Christina & 5 & 5.00 & 1.00 & 9.20 & 12.52 & 6.02 & 3.75 & 4.14 \\
        A & Dunja & 7 & 4.33 & 0.58 & 8.90 & 1.44 & 3.17 & 5.27 & 5.00 \\
        A & Egor & 6 & 4.67 & 0.58 & 9.20 & 5.60 & 3.73 & 4.01 & 4.48 \\
        A & Fabian & 9 & 4.67 & 0.58 & 8.90 & $-11.32$ & 1.47 & 5.70 & 4.78 \\
        B & Asena & 5 & 1.33 & 1.53 & 8.40 & 10.45 & 8.24 & 2.45 & 6.11 \\
        B & Eman & 9 & 1.67 & 1.53 & 9.80 & 1.07 & 2.56 & 7.82 & 7.98 \\
        B & Julia & 5 & 2.33 & 2.52 & 8.20 & 10.49 & 6.63 & 1.81 & 5.27 \\
        B & Leon & 5 & 3.00 & 2.65 & 9.00 & 12.81 & 7.84 & 2.91 & 5.61 \\
        B & Marvin & 6 & 2.00 & 1.00 & 10.00 & 11.54 & 9.75 & 7.76 & 8.20 \\
        C & Adal & 6 & 47.80 & 5.31 & 156.33 & 266.98 & 142.50 & $-31.61$ & 83.91 \\
        C & Berta & 8 & 37.60 & 4.72 & 180.00 & 96.04 & 96.04 & 134.61 & 138.74 \\
        C & Channa & 6 & 40.80 & 4.87 & 108.67 & 234.27 & 128.02 & 19.15 & 59.91 \\
        C & Daniel & 8 & 33.40 & 4.51 & 87.67 & 42.29 & 42.29 & 56.99 & 56.03 \\
        \bottomrule
    \end{tabular}
\end{table}

\subsection{Weights and Randomization Tests}

This analysis repeats the pooled analysis of the main text with equal weights for the three studies and gives the details of the randomization tests. With equal study weights, a student has weight $1/18$ in Study A, $1/15$ in Study B, and $1/12$ in Study C, and each student is a block. The randomization tests use the statistic $\cT(\bm s)$ equal to the pooled-anchor estimate with the scale $c_i/\iota_i$ and equal weights, and the $p$-values $P_{\rm ex}$ and $P_{\rm mc}$ of \eqref{eq:pvalues}. Within a study, the exact $p$-value sums over all onset vectors, of which there are $6^6=46{,}656$ in Study A, $8^5=32{,}768$ in Study B, and $5^4=625$ in Study C. For the 15 students together, there are about $9.6\times10^{11}$ vectors, and we use the Monte Carlo $p$-value with $B=100{,}000$ draws. Table \ref{tab:weights} reports the pooled estimates, their standard errors, and their 95\% intervals for both sets of weights and both scales.

The pooled-anchor estimates change little with the weights, and the $p$-values are valid by Proposition \ref{prop:test}. The pooled-anchor estimates with equal study weights are 7.47 standard deviations and 1.86 pre-window means, and those with equal student weights are 6.98 and 1.86. The exact $p$-values are $3/46{,}656=0.00006$ in Study A, $55/32{,}768=0.0017$ in Study B and $8/625=0.0128$ in Study C, and the Monte Carlo $p$-value for the 15 students is $20/100{,}001=0.0002$.

\begin{table}[ht]
    \centering
    \caption{Pooled standardized estimates for the 15 students: estimate, standard error (SE), and 95\% interval, for two scales and two sets of weights.}
    \label{tab:weights}
    \small
    \begin{tabular}{llccc}
        \toprule
        Scale and weights & Method & Estimate & SE & 95\% interval \\
        \midrule
        Standard deviation, equal students & $\htheta^{\rm b}$ & 8.41 & 2.21 & $[3.66, 13.15]$ \\
        & $\htheta^{\rm c}$ & 5.49 & 1.80 & $[1.63, 9.34]$ \\
        & $\htheta^{\rm f}$ & 6.98 & 1.78 & $[3.13, 10.84]$ \\
        & Phase contrast & 8.00 & 1.69 & $[4.36, 11.63]$ \\
        Standard deviation, equal studies & $\htheta^{\rm b}$ & 9.41 & 2.59 & $[3.86, 14.97]$ \\
        & $\htheta^{\rm c}$ & 5.67 & 2.24 & $[0.87, 10.47]$ \\
        & $\htheta^{\rm f}$ & 7.47 & 2.05 & $[3.05, 11.89]$ \\
        & Phase contrast & 8.68 & 2.02 & $[4.35, 13.01]$ \\
        Mean, equal students & $\htheta^{\rm b}$ & 2.25 & 0.43 & $[1.34, 3.17]$ \\
        & $\htheta^{\rm c}$ & 1.64 & 0.36 & $[0.86, 2.42]$ \\
        & $\htheta^{\rm f}$ & 1.86 & 0.37 & $[1.06, 2.66]$ \\
        & Phase contrast & 2.26 & 0.35 & $[1.51, 3.01]$ \\
        Mean, equal studies & $\htheta^{\rm b}$ & 2.35 & 0.41 & $[1.48, 3.23]$ \\
        & $\htheta^{\rm c}$ & 1.64 & 0.40 & $[0.78, 2.50]$ \\
        & $\htheta^{\rm f}$ & 1.86 & 0.39 & $[1.02, 2.70]$ \\
        & Phase contrast & 2.32 & 0.35 & $[1.58, 3.06]$ \\
        \bottomrule
    \end{tabular}
\end{table}

\subsection{Study C With Two Pairs as Blocks}
\label{sec:pairs}

This analysis repeats the pooled analysis and the randomization tests under a second randomization scheme for Study C. In Study C, two students started at session 6 and the other two at session 8. Suppose that the onset was drawn for each of these two pairs, independently and uniformly from sessions 6 to 10, and that both students of a pair received the onset of their pair. Each student is then treated at session $t$ with the same probability $\pi_{it}$ as under independent draws, so the case-level estimates and the pooled estimates are those of Tables \ref{tab:cases} and \ref{tab:weights}. The blocks are the 11 students in Studies A and B and the two pairs, so $G=13$. Study C alone has two blocks, and Definition \ref{def:estimators}(e) requires at least three. Therefore, we do not repeat the analysis by study for it. The randomization distribution of Study C has $5^2=25$ vectors of onsets. Table \ref{tab:pairs} reports the pooled estimates, their standard errors, and their 95\% intervals with 13 blocks and equal student weights.

The estimates equal those with 15 blocks, as the unchanged probabilities $\pi_{it}$ imply, and the standard errors of $\htheta^{\rm c}$ and $\htheta^{\rm f}$ are larger. The standard errors of the pooled-anchor estimate are 0.41 pre-window means and 2.24 standard deviations, compared to 0.37 and 1.78 with 15 blocks, and both intervals exclude 0. The exact $p$-value of Study C is $2/25=0.08$, and the smallest value that 25 vectors of onsets allow is 0.04; whereas the Monte Carlo $p$-value for the 15 students is $76/100{,}001=0.0008$.

\begin{table}[ht]
    \centering
    \caption{Pooled standardized estimates for the 15 students with the two pairs of Study C as blocks ($G=13$): estimate, standard error (SE), and 95\% interval, with equal student weights.}
    \label{tab:pairs}
    \small
    \begin{tabular}{llccc}
        \toprule
        Scale & Method & Estimate & SE & 95\% interval \\
        \midrule
        Mean & $\htheta^{\rm b}$ & 2.25 & 0.43 & $[1.32, 3.19]$ \\
        & $\htheta^{\rm c}$ & 1.64 & 0.41 & $[0.74, 2.53]$ \\
        & $\htheta^{\rm f}$ & 1.86 & 0.41 & $[0.95, 2.77]$ \\
        & Phase contrast & 2.26 & 0.34 & $[1.52, 3.00]$ \\
        Standard deviation & $\htheta^{\rm b}$ & 8.41 & 3.00 & $[1.88, 14.93]$ \\
        & $\htheta^{\rm c}$ & 5.49 & 2.28 & $[0.52, 10.45]$ \\
        & $\htheta^{\rm f}$ & 6.98 & 2.24 & $[2.05, 11.92]$ \\
        & Phase contrast & 8.00 & 2.04 & $[3.55, 12.44]$ \\
        \bottomrule
    \end{tabular}
\end{table}

\end{document}